\newtheorem{theorem}{Theorem}
\newtheorem{definition}{Definition}
\newtheorem{lemma}{Lemma}
\title{Finding Possible Winners in Spatial Voting with Incomplete Information}
\author{
Hadas Shachnai\footnote{Computer Science Department, Technion, Haifa, Israel. \texttt{hadas@cs.technion.ac.il}}
\and
Rotem Shavitt\footnote{Computer Science Department, Technion, Haifa, Israel. \texttt{rshavitt@gmail.com}. Corresponding author.}
\and
Andreas Wiese\footnote{
Mathematics Department, Technical University of Munich, Germany. \texttt{andreas.wiese@tum.de}}
}
\date{}
\begin{document}

\global\long\def\H{\mathcal{F}}%
\global\long\def\shape{f}%
\global\long\def\N{\mathbb{N}}%
\global\long\def\R{\mathbb{R}}%
\global\long\def\T{\mathrm{T}}%
\newcommand{\mL}{\reflectbox{L}}
\newcommand{\NmL}{\textnormal{\reflectbox{L}}}
\newcommand{\cS}{\mathcal{S}}
\newcommand{\s}{\mathbf {s}}
\newcommand{\M}{\cal {M}}

\def\nw{\mathsf{NW}}
\def\pw{\mathsf{PW}}
\def\wpw{\mathsf{WPW}}
\def\angs#1{\mathord{\langle{#1\rangle}}}
\def\nwpar#1{\mathsf{NW}\angs{#1}}
\def\pwpar#1{\mathsf{PW}\angs{#1}}
\def\wpwpar#1{\mathsf{WPW}\angs{#1}}

\maketitle

\begin{abstract}

We consider a spatial voting model where both candidates and voters are positioned in the $d$-dimensional Euclidean space, and each voter ranks candidates based on their proximity to the voter's ideal point.
We focus on the scenario where the given information about the locations of the voters' ideal points is incomplete; for each dimension, only an interval of possible values is known. In this context, we investigate the computational complexity of determining the possible winners under positional scoring rules.
Our results show that the possible winner problem in one dimension is solvable in polynomial time for all $k$-truncated voting rules with constant $k$. Moreover, for some scoring rules for which the possible winner problem is NP-complete, such as approval voting for any dimension or $k$-approval for $d \geq 2$ dimensions, we give an FPT algorithm parameterized by the number of candidates. Finally, we classify tractable and intractable settings of the {\em weighted} possible winner problem in one dimension, and resolve the computational complexity of the weighted case for all two-valued positional scoring rules when $d=1$.  

\end{abstract}

\section{Introduction}\label{sec:intro}

The spatial model of voting associates voters and candidates with points in the $d$-dimensional Euclidean space, i.e., in~$\mathbb{R}^d$. Each dimension corresponds to an issue on which the voters and the candidates have an opinion; this opinion is defined by the coordinate of the voter or candidate in this dimension. Voters prefer candidates closer to their respective point (measured as the Euclidean distance in $\mathbb{R}^d$) over those who are further away. Hence, for each voter this induces an order of the candidates. 
In the social choice literature, preferences with this structure are often referred to as ($d$-)Euclidean preferences~\cite{bogomolnaia2007euclidean,elkind2022preference}. The most common example of a spatial model is a political spectrum, such as the traditional left-right axis where $d=1$, but issue spaces can be of higher dimension (see, e.g., \cite{AGG2015}).

We consider a common scenario where the point in $\mathbb{R}^d$ of each candidate is known precisely, e.g., from the election campaign,
but for the voters' preferences only partial information is available. For each voter and each of the $d$ dimensions, we assume that we are given an interval which contains the opinion of the voter corresponding to this dimension.
This model captures real-world uncertainty in political elections, where it is often difficult to determine exactly which party a voter supports. However, we can typically estimate a range for their views—for instance, whether they tend to lean left, right, or center.
From this partial information we can identify a set of possible preference orders for the voter.

We study voting systems in which there is a global scoring vector $\vec{s}_m=(s_m(1),s_m(2),...,s_m(m))$, depending on the number of candidates, with $s_m(1)\ge s_m(2)\ge ...\ge s_m(m)$ such that each voter gives $s_m(1)$ votes to her favorite candidate, $s_m(2)$ votes to her second favorite candidate, and so on.
Also, we study \emph{approval voting} where each voter $v_j$ gives one vote to each candidate whose opinion is within a given \emph{approval radius} $\rho_j$ of the point in $\mathbb{R}^d$ corresponding to $v_j$. In both settings, we say that a candidate \emph{can win the election} if no other candidate receives a higher total score.

Since the precise opinion of each voter is not known, it is unclear which candidate will win the election. Two key questions arising in this setting
are whether a specific candidate can be a {\em possible winner} (who wins in at least one scenario by the opinions of the voters) or a {\em necessary winner} (one who wins in every possible scenario). These questions, introduced in the seminal work of~\cite{konczak2005voting}, 
have garnered significant attention in various settings involving incomplete information about voters' preferences (see Section~\ref{sec:related}).

\begin{table*}[ht]
\centering
\begin{tabular}{@{}l|ccc@{}}
\toprule
\textbf{Problem} & \textbf{$k$-approval} & \textbf{Multi-valued positional scoring rules} & \textbf{Approval voting} \\ 
\midrule
$\pwpar{1}$ & in P \cite{imber2024spatial} & \begin{tabular}[c]{@{}c@{}} in P for any $k$-truncated scoring\\ rule for a constant $k$ [Theorem \ref{thm:k-truncated}] \end{tabular} &
\begin{tabular}[c]{@{}c@{}}NP-c \cite{imber2024spatial} \\ FPT in $m$ [Theorem~\ref{theorem:approval_fpt}] \end{tabular} \\[1em]
$\pwpar{d}$ & 
\begin{tabular}[c]{@{}c@{}}NP-c for $k\geq3$, $d\geq 2$ \cite{imber2024spatial}, \\ FPT in $m$ [Theorem~\ref{theorem:approval_positional}] \end{tabular} & 
FPT in $m$ [Theorem \ref{theorem:approval_positional}] & \begin{tabular}[c]{@{}c@{}}NP-c \cite{imber2024spatial} \\ FPT in $m$ [Theorem~\ref{theorem:approval_fpt}] \end{tabular} \\[1em]
$\wpwpar{1}$  &  
\begin{tabular}[c]{@{}c@{}}in P if  $k(m)\geq \frac{m}{2}$ $\forall m\in \N$, \\ otherwise NP-c [Theorem \ref{theorem:k(m)_npc}]\end{tabular} &
\begin{tabular}[c]{@{}c@{}}NP-c for Borda with $m\geq 4$ \\[0em]  [Theorem \ref{theorem:weighted_borda}] \end{tabular} & 
NP-c \cite{imber2024spatial} \\ 
\bottomrule
\end{tabular}
\caption{Our results for $\pwpar{1}$, $\pwpar{d}$ for $d\ge 2$, and $\wpwpar{1}$ and corresponding previous results for these problems.}
\label{tab:results}
\end{table*}

The {\em necessary winner} problem ($\nw$) is well understood in this model, thanks  to a thorough study in Imber et al.~\cite{imber2024spatial}. However,
the complexity of $\pwpar{d}$, i.e., the \emph{possible winner} problem with incomplete voters' information in $d$ dimensions,
is known only for certain classes of scoring rules (defined via certain classes of vectors~$\s$).
Specifically, as shown in~\cite{imber2024spatial},
$\pwpar{1}$ is solvable in polynomial time for all two-valued rules, i.e., rules in which the vector $\vec{s}$ contains only two different values, and for two specific families of rules with more than two values: $(i)$ The three-valued rule $F(k,t)$, in which the scoring vector is $s=(2, \ldots , 2,1, \ldots, 1, 0, \ldots , 0)$, starting with $k$ occurrences of two and ending with $t$ zeroes; $\pwpar{1}$ is tractable for $F(k,t)$ whenever $k > t$. $(ii)$ Weighted veto rules, which are of the form  $s=(\alpha, \ldots, \ldots, \beta_1, \ldots, \beta_k)$ for $\alpha > \beta_1 \geq \cdot \cdot \cdot \geq \beta_k$ with $k < m/2$.
On the other hand, $\pwpar{d}$ is NP-complete for any number of dimensions $d\ge 2$, already for the (relatively simple) scoring vector $\s=(1,1,1,0,...,0)$~\cite{imber2024spatial}. 

These previous results leave several intriguing questions open: (1) Is $\pwpar{1}$ still tractable for other positional scoring rules with more than two values, e.g., for $\vec{s}=(2,1,0,...,0)$?
(2) For voting rules under which $\pwpar{d}$ is NP-complete, can we devise parameterized algorithms? (3) What happens if each voter is associated with a weight, e.g., representing a group of voters sharing the same (unknown) common opinion, or members with varying levels of influence in a board of directors? Is the weighted possible winner problem $\wpwpar{d}$ harder than $\pwpar{d}$?
We answer all three questions positively, see Table~\ref{tab:results} for an overview.

First, we present a polynomial-time algorithm for $\pwpar{1}$ for any scoring vector $\vec{s}$ with a constant number of non-zero entries.
Such scoring vectors are very common in real-life voting systems: the Eurovision Song Contest~\cite{SBKC18}) uses the scoring vector $\vec{s}=(12,10,8,7,6,5,4,3,2,1,0...,0)$ and the NBA MVP contest uses 
$\vec{s}=(10,7,5,3,1,0,\dots, 0)$. 
Also, this class contains the $k$-truncated Borda rule $(k,k-1,...,1,0,\ldots,0)$ which is used in the NCAA Football Division 1A Coaches’ poll for $k=25$.

Such scoring rules are popular because ranking all candidates becomes impractical when there are many candidates. Moreover, voters often lack strong preferences beyond their top choices, making the order among lower-ranked candidates irrelevant. 
Also, keeping the number of positive entries fixed ensures stability in the voting system in repeated contests as described above, where the number of candidates may vary, and a voting rule that depends on this number complicates the process and hinders comparisons across events.

Our algorithm reduces $\pw$ to the problem of {\em shapes scheduling}. To the best of our knowledge, this scheduling setting has not been studied before and it might be of independent interest. We solve the resulting instances of this problem, building on a technique of Baptiste~\cite{baptiste2000scheduling}.

In real elections, the number of candidates $m$ (with realistic chances of winning) is typically rather small. This motivates us to choose $m$ as a fixed parameter. We show that for any dimension $d$ (not necessarily constant or bounded by a fixed parameter) the problem $\pwpar{d}$ becomes fixed-parameter tractable (FPT) for \emph{any} scoring vector $s$ and also for approval voting, i.e., we can solve the problem in a running time of the form $f(m)\cdot n^{O(1)}$ for some computable function $f$.

Finally, we prove that $\wpwpar{d}$ is NP-complete already when $d=1$ and $m=4$ under the Borda scoring rule. In contrast, our result above shows that the corresponding unweighted case admits a polynomial time algorithm. 
In addition, we resolve the computational complexity of the weighted possible winner problem for all two-valued positional scoring rules when $d=1$, by distinguishing between voting rules which remain tractable, and others under which $\wpwpar{1}$ becomes NP-complete (for short, NP-c).

\subsection{Related Work}
\label{sec:related}

In voting theory, partial information has been explored under various voting models. Konczak and Lang~\cite{konczak2005voting} introduced the \emph{partial order model}, where each voter’s preferences are specified as a partial order rather than a complete ranking. They also formulated the two fundamental problems of \emph{necessary winner} and \emph{possible winner}, which analyze the conditions under which candidates can be guaranteed or potentially elected given the incomplete preferences.
Betzler and Britta~\cite{betzler2010towards} established the computational complexity of $\pw$ within the partial order model for all scoring rules except for $(2, 1, \dots, 1, 0)$. Specifically, they show that $\pw$ is solvable in polynomial time under the plurality and veto voting rules, while for other scoring rules it is NP-complete. Baumeister and Rothe~\cite{baumeister2012taking} extended the hardness results to the $(2, 1, \dots, 1, 0)$ voting rule.

Baumeister et al.~\cite{baumeister2011computational} investigated two variants of the possible winner problem. The first, Possible co-Winner with respect to the Addition of New Candidates (PcWNA), investigates whether adding a limited number of new candidates can enable a designated candidate to win, proving NP-completeness for various scoring rules. The second, Possible Winner/co-Winner under Uncertain Voting Systems (PWUVS and PcWUVS), examines whether a candidate can win under at least one voting rule within a class of systems, with NP-completeness established under certain conditions.
Chakraborty and Kolaitis~\cite{chakraborty2021classifying} analyzed the possible winner problem in the partial chains model, where partial orders include a total order on a subset of their domains. They established that this restriction does not affect the complexity.
Kenig~\cite{kenig2019complexity} analyzed the problem under partitioned voter preferences, providing a polynomial-time algorithm for two-value scoring rules and proving NP-hardness for three or four distinct values.

Truncated voting rules (or {\em truncated ballots}) are used to 
simplify voting procedures.
Baumeister et al.~\cite{baumeister2012campaigns} study the complexity of determining a $\pw$ given truncated ballots. Yang~\cite{yang2017complexity} and Terzopoulou and Endriss~\cite{terzopoulou2021borda} studied elections under different variants of truncated Borda scoring rules.
Do{\u{g}}an and Giritligil~\cite{dougan2014implementing} investigated the likelihood of choosing the Borda outcome using a truncated scoring rule.

Weighted voting models, where voter influence is weighted, have been explored extensively \cite{bartholdi1989computational,brandt2016handbook,conitzer2002complexity,conitzer2007elections}. 
Pini et al.~\cite{pini2011incompleteness} studied $\nw$ and $\pw$ with weighted voters in the partial orders model, and showed NP-hardness results for Borda, Copland, Simpson, and STV rules. Walsh~\cite{walsh2007uncertainty} extended these results to cases where the number of candidates is bounded.
Baumeister et al.~\cite{baumeister2012possible} analyzed weighted $\pw$ where voter preferences are known but weights are unknown.

Spatial voting generalizes single-peaked preferences by embedding voters and candidates in a multidimensional space, where preferences are single-peaked along certain dimensions.
Single-peaked preferences, first studied by Black~\cite{black1948rationale}, have been widely analyzed for their simplifying effects on voting problems such as manipulation and winner determination under many voting rules~\cite{brandt2015bypassing,moulin1984generalized}.
Faliszewski et al.~\cite{faliszewski2009shield} show that NP-hardness of manipulation and control
vanishes under single-peak preferences. On the other hand, the hardness result remains for weighted elections. In Section~\ref{sec:weighted} we adjust some of these results for $\wpwpar{1}$.

\section{Preliminaries}\label{sec:prel}

\subsection{Spatial Voting}
Let $V = \{v_1, \dots ,v_n\}$ denote the set of voters and $C = \{c_1, \dots ,c_m\}$ the set of candidates, where $m\geq 2$ to avoid trivial cases. Every candidate has a position, 
in the $d$-dimensional space 
representing their opinions on $d$ issues.\footnote{For the case of $d=1$, we assume $c_1<c_2<\dots <c_m$.} Each voter $v_i$ has a ranking $R_i$ over all candidates. The collection of all rankings for all the voters forms a \em{ranking profile}\em, denoted by $\mathbf{R}= (R_1, \dots , R_n)$.

A \em{spatial voting profile }\em $\mathbf{T}=(T_1,...,T_n)$ consists of $n$ points, where $T_j = \langle T_{j,1}, \dots, T_{j,d}\rangle \in \mathbb{R}^d$ represents voter $v_j$'s opinion on $d$ issues.
Given a spatial voting profile $\mathbf{T}$, $\mathbf{R}_{\mathbf{T}} = (R_{T_1},..., R_{T_n})$ is the derived ranking profile, where each voter $v_j$ ranks candidates in $C$ according to their distance from $v_j$'s opinion, $T_j$. The closest candidate is ranked first, and the farthest is ranked in position $m$ in $v_j$’s preferences. Tie breaking rule is arbitrary but fixed for all voters.

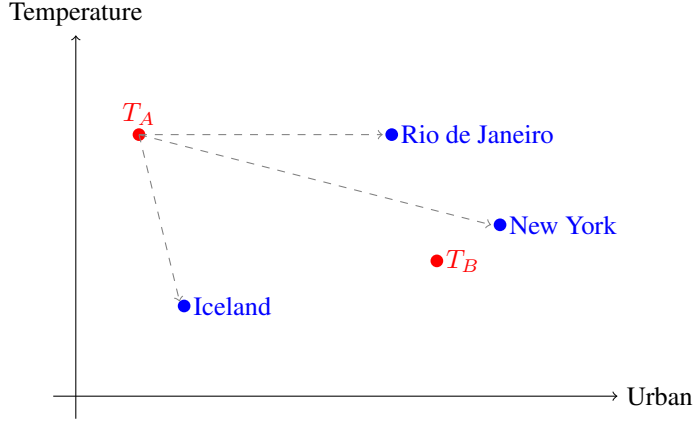
\begin{figure}[htbp!]
\centering
\begin{tikzpicture}[scale=1.2]
    \draw[->] (-0.25,0) -- (6,0) node[right] {Urban};
    \draw[->] (0,-0.25) -- (0,4) node[above] {Temperature};
    
    \fill[blue] (1.2,1) circle (2pt) node[right] {Iceland};
    \fill[blue] (3.5,2.9) circle (2pt) node[right] {Rio de Janeiro};
    \fill[blue] (4.7,1.9) circle (2pt) node[right] {New York};

    \fill[red] (0.7,2.9) circle (2pt) node[above] {$T_A$};
    \fill[red] (4,1.5) circle (2pt) node[right] {$T_B$};

    \draw[->,gray,dashed] (0.7,2.9) -- (1.15,1.05);
    \draw[->,gray,dashed] (0.7,2.9) -- (3.4,2.9);
    \draw[->,gray,dashed] (0.7,2.9) -- (4.6,1.9);

\end{tikzpicture}
\caption{Example of spatial voting in a two-dimensional space.}
\label{fig:spatial_voting}
\end{figure}

In Figure~\ref{fig:spatial_voting}, we illustrate spatial voting in a two-dimensional space. In this example, there are two voters, Alice and Bob, who are choosing a vacation destination. The candidates, representing the possible destinations, are Rio de Janeiro, New York, and Iceland. Each dimension corresponds to a decision criterion: $d_1$ represents urbanization, and $d_2$ represents temperature. Each candidate occupies a position in the space that reflects these criteria; for example, Iceland has a low temperature coordinate and a low urbanization coordinate, as it is a cold and rural destination. The positions of Alice and Bob in the space are denoted by $T_A$ and $T_B$, respectively. In this example, Alice's derived ranking profile is $R_{T_A} = (\text{Iceland}, \text{Rio de Janeiro}, \text{New York})$, based on the distances of the candidates from $T_A$.

\subsection{Voting Rules}
A \em{voting rule }\em is a function that maps a ranking profile to a nonempty set of winners. This paper focuses mainly on positional scoring rules, where candidates earn points based on their rank positions. 
A positional scoring rule $r$ is defined as a sequence $\{\vec{s}_m\}_{m \geq 2}$ of $m$-dimensional \emph{score vectors} $\vec{s}_{m} = (s_m(1), \dots, s_m(m))$. For each $m\in \mathbb{N}$ the vector 
$\vec{s}_{m}$ consists of $m$ natural numbers that satisfy \( s_m(1) \geq \dots \geq s_m(m) \) and \( s_1(m) > s_m(m) \).

For a ranking profile $\mathbf{R} = (R_1,...,R_n)$ and a positional scoring rule $r$ with a score vector $\vec{s}_m$, the score assigned to candidate $c$ by voter $v_j$ is $s(R_j,c)=s_m(i)$, where $c$ is ranked in the $i$-th position in $R_j$. The total score of candidate $c$ by ranking profile $\mathbf{R}$ is denoted by $s(\mathbf{R},c) = \sum_{j=1}^n s(R_j,c)$. 
Examples for positional scoring rules include plurality $(1,0,...,0)$, veto $(1,...,1,0)$, $k$-approval $(1, \dots, 1, 0, \dots, 0)$ where the number of '1' entries is $k$, and the Borda rule, defined with the scoring vector $(m-1,m-2,...,0)$.

A two-valued positional scoring rule consists of two values which are w.l.o.g 1 and 0.
Such rules can be described as \emph{$k(m)$-approval}, where for a number of candidates $m$, the $m$-dimensional score vector $\vec{s}_m$ consists of $k(m)$ '1' entries. Note that throughout the paper, when using the term $k$-approval, we refer to a $k$ which is not dependent on $m$, therefore fixed.

One focus of this paper is a subclass of positional scoring rules called \emph{truncated scoring rules} \cite{dougan2014implementing}. A \(k \)-truncated score vector has strictly positive values in exactly its first \( k \) entries. Thus, a \( k \)-truncated scoring rule allows voters to allocate score to exactly \( k \) candidates.

\subsection{Partial Spatial Voting}
Imber et al.~\cite{imber2024spatial} introduced the \em{partial spatial voting model}\em, where voters' preferences are incompletely specified. This model is represented by a \em{partial spatial profile }\em $\mathbf{P}=(P_1, \dots, P_n)$, where each voter $v_j$ is described as a vector of intervals $P_j = \langle [\ell_{j,1},u_{j,1}], \dots, [\ell_{j,d},u_{j,d}] \rangle$, and \( [\ell_{j,i}, u_{j,i}] \) represents the lower and upper bounds of $v_j$'s ideal point in each issue. The precise positions of the candidates are assumed to be known.

A spatial voting profile $\mathbf{T} = (T_1, \dots , T_n)$ is a \em{spatial completion }\em of $\mathbf{P}$ if, for every voter $v_j$, $T_{j,i} \in [\ell_{j,i},u_{j,i}]$. 
The ranking profile \( \mathbf{R_T}\) is then derived from this completion. A ranking profile $\mathbf{R}$ is a \em{ranking completion }\em of $\mathbf{P}$ if there exists a spatial completion $\mathbf{T}$ such that $\mathbf{R} = \mathbf{R_T}$.
\begin{definition}
    Given a partial profile $\mathbf{P}$ and a candidate $c^* \in C$, the \emph{possible winner problem} under a voting rule $r$ asks whether there exists a profile completion $\mathbf{T}$ of $\mathbf{P}$ such that $c^*$ is a winner w.r.t. $r$, i.e., $s(\mathbf{R}_{\mathbf{T}}, c^*) \geq s(\mathbf{R}_{\mathbf{T}}, c)$ for each $c \in C$.
\end{definition}

Figure~\ref{fig:partial_spatial} illustrates a partial spatial profile based on the example from Figure~\ref{fig:spatial_voting}.
Instead of precise positions in space, each voter is represented by lower and upper bounds on their opinion for each issue, forming a region of possible positions, depicted as orange rectangles.
$T_A$ and $T_B$ are spatial completions of the partial profile in which Iceland is ranked first by both voters.
Similarly, $T'_A$ and $T'_B$ are valid spatial completions where Rio de Janeiro is ranked first by both voters.

\begin{figure}[htbp!]
\centering
\begin{tikzpicture}[scale=1.2]
    \draw[->] (-0.25,0) -- (6,0) node[right] {Urban};
    \draw[->] (0,-0.25) -- (0,4) node[above] {Temperature};

    \fill[orange, opacity=0.35] (0.5,1.2) rectangle (3,3.2);
    \fill[orange, opacity=0.35] (2.5,0.5) rectangle (3.5,2.5);    

    \draw[dashed,gray] (0.5,0) -- (0.5,4) node[left] {};
    \draw[dashed,gray] (3,0) -- (3,4) node[right] {};
    \draw[dashed,gray] (0,1.2) -- (6,1.2) node[above] {};
    \draw[dashed,gray] (0,3.2) -- (6,3.2) node[above] {};

    \node[left] at (0,3.2) {$u_{A,2}$};
    \node[left] at (0,1.2) {$\ell_{A,2}$};
    \node[below] at (0.5,0) {$\ell_{A,1}$};
    \node[below] at (3,0) {$u_{A,1}$};
    
    \fill[blue] (1.2,1) circle (2pt) node[right] {Iceland};
    \fill[blue] (3.5,2.9) circle (2pt) node[right] {Rio de Janeiro};
    \fill[blue] (4.7,1.9) circle (2pt) node[right] {New York};

    \fill[red] (0.9,1.3) circle (2pt) node[above] {$T_A$};
    \fill[red] (2.6,1.4) circle (2pt) node[above] {$T_B$};

    \fill[red] (2.9,3) circle (2pt) node[left] {$T'_A$};
    \fill[red] (3.3,2.2) circle (2pt) node[right] {$T'_B$};

\end{tikzpicture}
\caption{Illustration of a partial spatial profile and two different spatial completions.}
\label{fig:partial_spatial}
\end{figure}
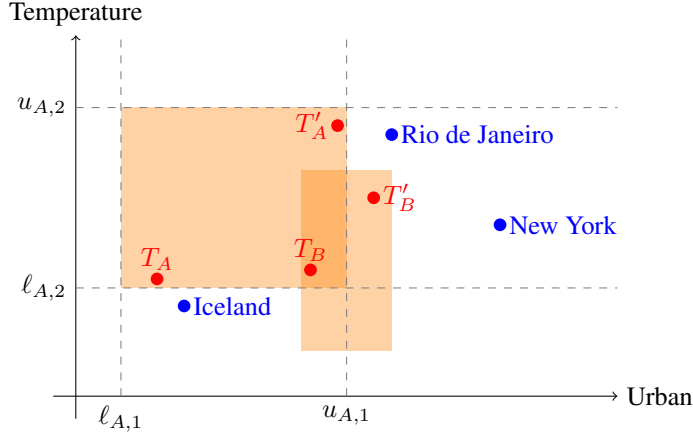

\subsection{Spatial Approval Voting}
In approval voting voters partition candidates into ``approved" and ``unapproved" groups, selecting the candidate with the highest approval count. 
Unlike $k$-approval, the number of approvals per voter varies. 
In spatial settings, each voter $v_j$ has an \emph{approval radius} $\rho_j \in \mathbb{R}$ and approves candidates within a distance $\rho_j$. 
Given a spatial completion $\mathbf{T}$, the approval set for voter $v_j$ is $A_{T_j} = \{c \in C : \|T_j - c\|_2 \leq \rho_j\}$. Approval regions correspond to intersections of $d$-dimensional spheres and the voter’s position rectangle.

\subsection{Parameterized Complexity}
We adopt the standard concepts and notations from parameterized complexity theory \cite{cygan2015parameterized,downey2013fundamentals,niedermeier2002invitation}. A parameterized problem $L \subseteq \Sigma^* \times \mathbb{N}$ is a subset of all instances $(x, k)$ from $\Sigma^* \times \mathbb{N}$, where $k$ represents the parameter. A parameterized problem $L$ is in the class FPT (fixed-parameter tractable) if there exists an algorithm that decides every instance $(x, k)$ of $L$ in $f(k) \cdot |x|^{O(1)}$ time, where $f$ is any computable function that depends solely on the parameter.

\section{$\pwpar{1}$ with $k$-Truncated Voting Rules}\label{sec:truncated}

This section establishes that $\pwpar{1}$ with any $k$-truncated voting rule can be solved in polynomial time when $k$ is constant. To do so, we introduce a new multi-machine scheduling problem, termed \emph{shapes scheduling}, where processing a job requires varying machine resources over time. We then provide a polynomial-time reduction from $\pwpar{1}$ to the shapes scheduling problem.
In the reduction, every voter becomes a job, and the resources used to process it reflect the score the voter hands to candidates.
Finally, we present a dynamic programming algorithm to efficiently solve shapes scheduling instances.

\subsection{The Shapes Scheduling Problem}

In \emph{shapes scheduling} each job may use multiple machines, in a quantity that changes over the processing time. Each scheduling option is referred to as a \emph{shape}, which specifies the number of machines required at any time throughout processing. Assume time is slotted. We first present the notion of a shape. 
Let $[r]$ denote the set $\{1, \dots, r\}$.
\begin{definition}
	Let $p\in\N$. A \emph{shape $\shape$ }is a vector $(M_{0}^{\shape},...,M_{p-1}^{\shape})$
	such that $M_{i}^{\shape}\in\N_{0}$ for each $i\in \{0\}\cup [p-1]$. We denote by
	$p$ the \emph{processing time }of $\shape$. 
\end{definition}
The intuition is that if job a $j$ is scheduled at time $t\in\N$
with a shape $\shape$ and a processing time~$p$ then for each $i\in \{0\}\cup [p-1]$,
during the interval $[t+i,t+i+1)$ job $j$ occupies $M_{i}^{\shape}$ machines. 
For instance, consider the shape $\shape = (2,1)$ with $p=2$. Figure~\ref{fig:shape} shows two ways to schedule the job at time 
$t$ using $\shape$, both satisfying the requirement of two machines during $[t,t+1)$ and one machine during $[t+1,t+2)$. Note that the machine indices are irrelevant, and there is no requirement to use the same machine across consecutive time slots. Additionally, preemption is not permitted.

\begin{figure}[ht]
\centering

\begin{tikzpicture}

\draw[->] (-0.5,0) -- (2.5,0) node[right] {};
\draw[->] (3.5,0) -- (6.5,0) node[right] {};
\foreach \x/\label in {0/$t$, 1/$t+1$, 2/$t+2$}
  \draw (\x,0) -- (\x,-0.1) node[below] {\label};
\foreach \x/\label in {3/$t$, 4/$t+1$, 5/$t+2$}
  \draw (\x+1,0) -- (\x+1,-0.1) node[below] {\label};

\def\machiney#1{#1*0.75} 

\draw[dashed] (-0.5,\machiney{3}) -- (2.5,\machiney{3});
\draw[dashed] (-0.5,\machiney{2.25}) -- (2.5,\machiney{2.25});
\draw[dashed] (3.5,\machiney{3}) -- (6.5,\machiney{3});
\draw[dashed] (3.5,\machiney{2.25}) -- (6.5,\machiney{2.25});
\node at (-1,\machiney{2.625}) {$\mathcal{M}_r$};

\draw[dashed] (-0.5,\machiney{2}) -- (2.5,\machiney{2});
\draw[dashed] (-0.5,\machiney{1.25}) -- (2.5,\machiney{1.25});
\draw[dashed] (3.5,\machiney{2}) -- (6.5,\machiney{2});
\draw[dashed] (3.5,\machiney{1.25}) -- (6.5,\machiney{1.25});
\node at (-1,\machiney{1.625}) {$\mathcal{M}_\ell$};

\draw[dashed] (-0.5,\machiney{1}) -- (2.5,\machiney{1});
\draw[dashed] (-0.5,\machiney{0.25}) -- (2.5,\machiney{0.25});
\draw[dashed] (3.5,\machiney{1}) -- (6.5,\machiney{1});
\draw[dashed] (3.5,\machiney{0.25}) -- (6.5,\machiney{0.25});
\node at (-1,\machiney{0.625}) {$\mathcal{M}_k$};

\foreach \x/\y/\xend/\yend in 
  {4/\machiney{2.25}/5/\machiney{3}, 0/\machiney{1.25}/1/\machiney{2}, 
   5/\machiney{1.25}/6/\machiney{2}, 0/\machiney{0.25}/1/\machiney{1}, 
   1/\machiney{0.25}/2/\machiney{1}, 4/\machiney{0.25}/5/\machiney{1}}
  \draw[red,line width=0.4mm,pattern=crosshatch, pattern color=red] (\x,\y) rectangle (\xend,\yend);

\end{tikzpicture}

\caption{Two schedule options for a job at time $t$ by shape $\shape=(2,1)$ with processing time $p=2$.}\label{fig:shape}

\end{figure}

In the \emph{shapes scheduling problem} we are given a set of
$M$ identical machines for some $M\in\N$, and a set of jobs $J$. Each job $j\in J$ is associated with $(i)$ a processing time $p_{j}\in\N$, $(ii)$ a release time $r_{j}\in\N_0$, $(iii)$ a deadline $d_{j}\in\N$ with $r_{j}+p_{j}\le d_{j}$, and $(iv)$ a set of shapes $\H_{t}^{(j)}$, each with processing time $p_{j}$, for any time $t\in\N_0$ such that $r_{j}\le t\le d_{j}-p_{j}$.

The goal is to select for each job $j\in J$ a starting time $S_{j}\in\N_{0}$
satisfying $r_{j}\le S_{j}\le d_{j}-p_{j}$ and a shape $\shape^{(j)}\in\H_{t}^{(j)}$.
Given these starting times and shapes, for each time $t\in\N$
we denote the number of busy machines during $[t,t+1)$ by $M(t)$.
Formally, we define $M(t):=\sum_{j\in J:S_{j}\le t<S_{j}+p_{j}}M_{t-S_{j}}^{\shape^{(j)}}$.
We require for each $t\in\N_{0}$ that $M(t)\le M$, i.e., at most
$M$ machines are used during the interval $[t,t+1)$.

\subsection{Reduction from $\pwpar{1}$ to Shapes Scheduling}\label{sec:reduction}

We show how we can reduce $\pwpar{1}$ to the shapes scheduling problem. 
Given an instance of $\pwpar{1}$, the release times
and deadlines of our jobs will be in the interval $[1,m+1]$;
intuitively, for each $i\in[m]$
the interval $[i,i+1)$ corresponds
to candidate $c_{i}$. For each voter $v_{j}\in V$ we define a job
$j\in J$ as follows. We set $p_{j}=k$.
Let $i_{L}$ be the smallest
index such that candidate $c_{i_L}$ receives a positive score from $v_{j}$ if $T_{j}=\ell_{j}$. We set $r_{j}=i_{L}$.
Similarly, let $i_{R}$ be the largest index such that candidate
$c_{i_R}$ receives a positive score from $v_{j}$ if $T_{j}=u_{j}$.
We set $d_{j}=i_{R}+1$. We claim that $c_{i_L}$ is the leftmost candidate which $v_j$ can vote for and $c_{i_R}$ is the rightmost candidate which $v_j$ can vote for.
See Figure~\ref{fig:voting_range}.

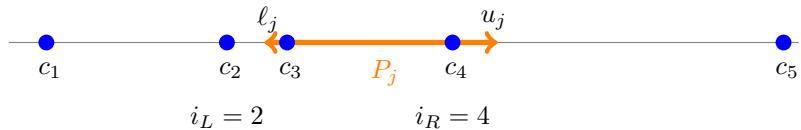
\begin{figure}
\begin{center}
    \begin{tikzpicture}
        \draw[gray] (-5,0) -- (5.5,0);
            
        \draw[<->,thick, orange, line width=2pt] (-1.6,0) -- (1.5,0);
        \node[anchor=south, orange] at (0,-0.7) {$P_j$}; 
    
        \foreach \x/\label in {-4.5/c_1, -2.1/c_2, -1.3/c_3, 0.9/c_4, 5.3/c_5} {
            \fill[blue] (\x,0) circle (3pt); 
            \node[anchor=south] at (\x+0.05,-0.6) {$\label$};
         }   
        \node at (-1.55,0.3) {$\ell_j$};
        \node at (-2.1,-1) {$i_L=2$};
        \node at (1.45,0.3) {$u_j$};
        \node at (0.9,-1) {$i_R=4$};
    
    \end{tikzpicture}
\end{center}
\caption{Example of $i_L$ and $i_R$ for a voter $v_j$ described by $P_j$.}\label{fig:voting_range}
\end{figure}

\begin{lemma}\label{lem:vote-range}
For each possible position $T_{j}\in[\ell_{j},u_{j}]$ for voter $v_{j}$, only candidates in $\{c_{i_L},...,c_{i_R}\}$ receive a score from $v_{j}$.
\end{lemma}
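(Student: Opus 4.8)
The plan is to prove both containments directly from the one-dimensional geometry of the model together with the monotonicity of the scoring rule. Recall that in $d=1$ we have $c_1 < c_2 < \dots < c_m$, and a voter at position $T_j$ ranks the candidates by increasing distance $|T_j - c_i|$. Since the rule is $k$-truncated, voter $v_j$ gives a positive score exactly to the $k$ candidates closest to $T_j$ (with ties broken by the fixed rule). So it suffices to show that, as $T_j$ ranges over $[\ell_j, u_j]$, every candidate that ever appears among the $k$ closest to $T_j$ has index between $i_L$ and $i_R$, where $i_L$ is the smallest index of a candidate scored when $T_j = \ell_j$ and $i_R$ is the largest index of a candidate scored when $T_j = u_j$.

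\textbf{Key steps.} First I would establish a monotonicity lemma for the "leftmost scored candidate": define $\mathrm{left}(T)$ to be the smallest index $i$ such that $c_i$ is among the $k$ closest candidates to a voter at position $T$. I claim $\mathrm{left}(T)$ is non-decreasing in $T$. The intuition is that moving the voter to the right can only make left candidates relatively farther and right candidates relatively closer, so the closest-$k$ window can only shift rightward (or stay put); a clean way to argue this is to compare, for $T < T'$ and any pair $c_i < c_{i'}$, the distances, and observe that if $c_{i'}$ is at least as close as $c_i$ to $T$ then it is strictly closer to $T'$ — this shows the set of "the $k$ closest" never drops a candidate on its left end as $T$ increases. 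Symmetrically, $\mathrm{right}(T)$, the largest index among the $k$ closest, is non-decreasing in $T$ as well. Then for any $T_j \in [\ell_j, u_j]$ we get $i_L = \mathrm{left}(\ell_j) \le \mathrm{left}(T_j) \le \mathrm{right}(T_j) \le \mathrm{right}(u_j) = i_R$, and since every candidate scored by $v_j$ at position $T_j$ has index in $[\mathrm{left}(T_j), \mathrm{right}(T_j)]$, it lies in $\{c_{i_L}, \dots, c_{i_R}\}$, as required. I would also note the side remark in the text — that $c_{i_L}$ is actually the leftmost candidate $v_j$ can ever score and $c_{i_R}$ the rightmost — follows immediately since these values are attained at the endpoints.

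\textbf{Main obstacle.} The only real subtlety is handling ties consistently: when $|T - c_i| = |T - c_{i'}|$ the tie-breaking rule decides which candidate is counted among the closest $k$, and one must check the monotonicity argument still goes through under the fixed tie-break — in particular that an infinitesimal perturbation of $T$ cannot cause the window of scored candidates to jump over an index. The cleanest fix is to argue in terms of closed/half-open conditions: $c_{i'}$ (the more-right candidate) is scored instead of $c_i$ at position $T$ only if $|T-c_{i'}| \le |T-c_i|$ (or $<$, depending on tie-break orientation), i.e.\ only if $T$ is at least the midpoint of $c_i$ and $c_{i'}$; this midpoint threshold is itself monotone, so once $c_{i'}$ has overtaken $c_i$ at some $T$ it stays overtaken for all larger $T$. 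Everything else is a short, routine comparison of absolute values. I would present the proof as the two monotonicity claims followed by the one-line sandwich inequality.
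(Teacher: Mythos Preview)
Your argument is correct and close in spirit to the paper's, but structured differently. The paper does not introduce the functions $\mathrm{left}(T)$ and $\mathrm{right}(T)$; instead it fixes a candidate $c_i$ with $i<i_L$ (the case $i>i_R$ being symmetric), observes that $c_i<\ell_j$ (since otherwise $c_i$ would be closer to $\ell_j$ than all of $c_{i_L},\dots,c_{i_L+k-1}$), and then checks by a direct distance comparison that each of the $k$ candidates $c_{i_L},\dots,c_{i_L+k-1}$ that were scored at $\ell_j$ is still closer to $T_j$ than $c_i$ is, so $c_i$ stays outside the top $k$. Your version packages the same one-dimensional monotonicity (``once the right candidate overtakes the left one, it stays ahead'') into a clean sandwich inequality $i_L=\mathrm{left}(\ell_j)\le\mathrm{left}(T_j)\le\mathrm{right}(T_j)\le\mathrm{right}(u_j)=i_R$, and you are more explicit about tie-breaking than the paper. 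The paper's proof is a couple of lines shorter; yours is more modular and makes the ``contiguous window sliding right'' picture explicit, which is also what is used implicitly in the later shape-set construction. Either is entirely adequate here.
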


\begin{proof}
Let $T_{j}\in[\ell_{j},u_{j}]$ be a position for voter $v_{j}$, and $c_i\in \{c_{1},...,c_{i_L-1}\}\cup\{c_{i_R+1},...,c_{m}\}$ a candidate. W.l.o.g., $i\in \{1, \dots, i_L-1\}$. We prove that $c_i$ does not receive votes from $v_j$.

By the nature of positional scoring rules and the definition of $i_L$, for $T_j=\ell_j$ the other candidates which receive a positive number of votes by $v_j$ are $c_{i_L+1}, \dots , c_{i_L+P-1}$. Then for every $c'\in\{c_{i_L}, \dots, c_{i_L+P-1}\}$, it holds that $|\ell_j-c'|<|\ell_j-c_i|$.
We note that $\ell_j\leq T_j$ and because $c_i<c_{i_L}$, $c_i<\ell_j$.
Then $|T_j-c_i| = T_j-c_i\geq \ell_j-c_i = |\ell_j-c_i|$.
Therefore, $|T_j-c_i|\geq |\ell_j-c_i|>|\ell_j-c'|$ for every $c'\in\{c_{i_L}, \dots, c_{i_L+P-1} \}$, promising that $c_i$ does not receive any votes from $v_j$.
\end{proof}

Next, we define the set of allowed shapes for $j$. Consider a value
$t\in[m]$ with $r_{j}\le t\le d_{j}-k$. Let $\mathcal{T}_{j,t}$ denote
the set of possible positions $T_{j}$ for $v_{j}$ such that exactly
the candidates $c_{t},...,c_{t+k-1}$ receive a score, meaning these candidates are the top $k$ candidates in $R_{T_j}$. For each $T_j\in \mathcal{T}_{j,t}$ and each $i\in\{0\}\cup[k-1]$, $s(R_{T_j},c_{t+i})$ is the score that candidate $c_{t+i}$ receives
from voter $v_{j}$ if it is positioned at $T_j$.
This yields a shape $(s(R_{T_j},c_{t}),...,s(R_{T_j},c_{t+k-1}))$. 

Figure~\ref{fig:shape_creation} illustrates two possible positions of voter $v_j$, denoted $T_j$ and $T'_j$.
Let the scoring rule be 2-truncated Borda: $\vec{s}=(2,1, 0 \dots, 0)$. At $T_j$, the ranking of $v_j$ is $R_{T_j}=(c_1, c_2, c_3)$, where the top two candidates are $c_1$ and $c_2$, placing $T_j\in \mathcal{T}_{j,1}$. As $s(R_{T_j},c_1)=2$ and $s(R_{T_j},c_2)=1$, the resulting shape is $\shape=(2,1)$.
At \( T'_j \), the ranking is \( R_{T'_j} = (c_3, c_2, c_1) \), making \( c_2 \) the lowest-indexed candidate in the top two. Therefore, \( T'_j \in \mathcal{T}_{j,2} \), resulting in the shape \( \shape' = (1, 2) \).

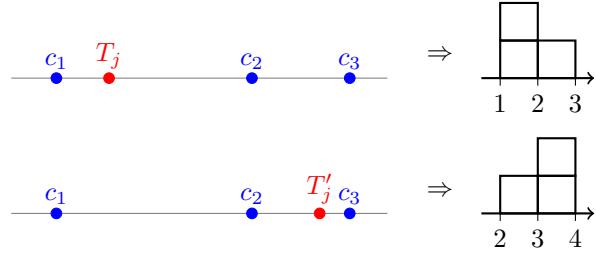
\begin{figure}
\centering
\begin{tikzpicture}
    \def\linegap{1.8} 

    \draw[gray] (0,0) -- (5,0);
    
    \filldraw[blue] (0.6,0) circle (2pt) node[above] {\(c_1\)};
    \filldraw[blue] (3.2,0) circle (2pt) node[above] {\(c_2\)};
    \filldraw[blue] (4.5,0) circle (2pt) node[above] {\(c_3\)};
    \filldraw[red] (4.1,0) circle (2pt) node[above] {\(T'_j\)};

    \draw[gray] (0,\linegap) -- (5,\linegap);
    
    \filldraw[blue] (0.6,\linegap) circle (2pt) node[above] {\(c_1\)};
    \filldraw[blue] (3.2,\linegap) circle (2pt) node[above] {\(c_2\)};
    \filldraw[blue] (4.5,\linegap) circle (2pt) node[above] {\(c_3\)};
    \filldraw[red] (1.3,\linegap) circle (2pt) node[above] {\(T_j\)};
   
    \node at (5.7,\linegap+0.3) {\(\Rightarrow\)};
    \draw[thick] (6.5,\linegap) rectangle (7,\linegap+0.5); 
    \draw[thick] (6.5,\linegap+0.5) rectangle (7,\linegap+1.0); 
    \draw[thick] (7,\linegap) rectangle (7.5,\linegap+0.5); 
    
    \node at (5.7,0.3) {\(\Rightarrow\)};
    \draw[thick] (6.5,0) rectangle (7,0.5); 
    \draw[thick] (7,0.5) rectangle (7.5,1.0); 
    \draw[thick] (7,0) rectangle (7.5,0.5); 

    \draw[->,thick] (6.25,0) -- (7.75,0);
    \foreach \x/\label in {6.5/$2$, 7/$3$, 7.5/$4$}
        \draw (\x,0) -- (\x,-0.1) node[below] {\label};

    \draw[->,thick] (6.25,\linegap) -- (7.75,\linegap);
    \foreach \x/\label in {6.5/$1$, 7/$2$, 7.5/$3$}
        \draw (\x,\linegap) -- (\x,\linegap-0.1) node[below] {\label};
\end{tikzpicture}
    \caption{Two positions of a voter $v_j$ and the corresponding shapes.}
    \label{fig:shape_creation}
\end{figure}

We define $\H_{t}^{(j)}$
to be the set of all these shapes, i.e., $\H_{t}^{(j)}:=\left\{ (s(R_{T_j},c_{t}),...,s(R_{T_j},c_{t+k-1})): T_j\in \mathcal{T}_{j,t}\right\}$.
We can compute the set $\H_{t}^{(j)}$ by showing that there is a
subset of positions $T_j$ in $\mathcal{T}_{j,t}$ that suffice for defining all shapes in $\H_{t}^{(j)}$, and that
we can construct this subset efficiently.

\begin{lemma}
\label{lem:compute-shapes}For each voter $v_{j}$ and each $t\in[m]$
with $r_{j}\le t\le d_{j}-k$ we can compute the set $\H_{t}^{(j)}$
in time $O(knm^2)$. 
\end{lemma}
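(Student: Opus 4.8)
The plan is to fix a voter $v_j$ and a slot $t$ with $r_j \le t \le d_j - k$, and to enumerate a small but sufficient set of ``witness'' positions $T_j \in \mathcal{T}_{j,t}$ from which every shape in $\H_t^{(j)}$ can be read off. The key observation is that, once we know the \emph{identity} of the top-$k$ set $\{c_t,\dots,c_{t+k-1}\}$ (which is forced by the choice of $t$, by the structure of positional scoring rules on a line, and by Lemma~\ref{lem:vote-range}), the shape $(s(R_{T_j},c_t),\dots,s(R_{T_j},c_{t+k-1}))$ depends only on the \emph{internal ordering} of these $k$ candidates induced by the distances $|T_j - c_{t+i}|$. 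As $T_j$ sweeps over $\mathcal{T}_{j,t} \subseteq [\ell_j,u_j]$, this internal ordering changes only when $T_j$ crosses a midpoint $(c_a + c_b)/2$ of two candidates $c_a,c_b \in \{c_1,\dots,c_m\}$, or when it crosses a boundary of $\mathcal{T}_{j,t}$ (which is itself determined by such midpoints, namely where the top-$k$ set stops being exactly $\{c_t,\dots,c_{t+k-1}\}$). There are $O(m^2)$ such midpoints in total, so they partition $[\ell_j,u_j]$ into $O(m^2)$ open subintervals, on each of which the induced ranking $R_{T_j}$ — and hence the shape — is constant.

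Concretely, I would first compute, in $O(m^2)$ time, all pairwise midpoints of candidate positions, sort them together with $\ell_j$ and $u_j$, and pick one representative point from the interior of each resulting elementary interval that lies inside $[\ell_j,u_j]$. For each such representative $T_j$, I would compute the distances $|T_j - c_i|$ for all $i \in [m]$, sort them to obtain $R_{T_j}$, check whether the top-$k$ candidates are exactly $c_t,\dots,c_{t+k-1}$ (i.e.\ whether $T_j \in \mathcal{T}_{j,t}$), and if so output the shape $(s(R_{T_j},c_t),\dots,s(R_{T_j},c_{t+k-1}))$. Collecting the distinct shapes over all $O(m^2)$ representatives yields $\H_t^{(j)}$. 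Sorting $m$ distances costs $O(m\log m)$, but reading off the relevant $k$ scores and forming the shape is $O(k)$; doing this for $O(m^2)$ representatives and bounding the per-step cost by $O(km)$ (to absorb the distance computation and, if desired, a linear-time selection of the top $k$ instead of a full sort) gives $O(km^3)$ for one pair $(v_j,t)$. To reach the claimed $O(knm^2)$ one argues more carefully: the $O(m^2)$ midpoints are shared across all $t$ for a fixed voter, so the elementary intervals and their representative rankings can be computed \emph{once per voter} in $O(m^3)$ or, with incremental updates of the ranking as $T_j$ crosses one midpoint at a time (each crossing swaps two adjacent candidates, an $O(1)$ update), in $O(m^2)$ time after an initial $O(m\log m)$ sort; then each of the $O(m)$ values of $t$ simply filters and relabels these $O(m^2)$ precomputed rankings in $O(k)$ time each. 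I would present whichever accounting matches the statement; the incremental-swap argument is the one that gives the tight bound, and if the intended reading is ``total over all voters and all $t$'' then $O(knm^2)$ is immediate from $n$ voters times $O(m^2)$ intervals times $O(k)$ per shape.

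The main obstacle — and the step that needs the most care — is justifying that crossing a single midpoint changes $R_{T_j}$ by exactly one adjacent transposition, and more importantly that the set of breakpoints relevant to the \emph{shape} (as opposed to the full ranking) is exactly the set of midpoints involving at least one of $c_t,\dots,c_{t+k-1}$ together with the midpoints that define the endpoints of $\mathcal{T}_{j,t}$. One must check the degenerate cases where $T_j$ lies exactly on a midpoint (handled by the fixed tie-breaking rule, and harmless since we only sample interior points of elementary intervals, but one should confirm no shape is missed by this sampling), and the boundary cases where $\ell_j$ or $u_j$ themselves coincide with a midpoint. A clean way to sidestep subtleties about whether an endpoint position is ``achieved'' is to note that $\mathcal{T}_{j,t}$ is a union of intervals whose closure endpoints are among the breakpoints, so every achievable shape is realized at some interior representative we enumerate, possibly after noting that $\ell_j$ and $u_j$ are themselves valid positions and should be added to the representative set explicitly. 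Once this combinatorial structure is pinned down, the running time follows from the sorting and incremental-update bookkeeping described above.
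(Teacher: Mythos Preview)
Your proposal is correct and takes essentially the same approach as the paper: partition the line by the $O(m^2)$ pairwise candidate midpoints into segments on which the full ranking is constant, sweep through them with single-swap incremental updates, and read off each shape in $O(k)$ time. The paper's only organizational difference is that it first computes the \emph{global} sets $\H_t$ from the segment shapes and then, per voter, restricts only the two boundary sets $\H_{r_j}^{(j)}$ and $\H_{d_j-k}^{(j)}$ by intersecting with $[\ell_j,u_j]$; your interpretation that $O(knm^2)$ is the total cost over all voters and all $t$ is indeed the intended one.
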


\begin{proof}
    For any pair of candidates $c_i<c_h$, the middle point $m_{i,h}=\frac{c_h-c_i}{2}$ separates the space into two regions: every voter $v_j$ whose position is $T_j \leq m_{i,h}$ prefers candidate $c_i$ over $c_h$, and every voter $v_j$ whose position is $T_j > m_{i,h}$ prefers $c_h$ over $c_i$.
    In this case, the tie breaking is in favor of the lower indexed candidate, though it can be adjusted to every fixed tie breaking rule.
    By finding the middle point for each pair of candidates, we separate the space into $\binom{m}{2}+1$ segments, where the ranking of candidates for all voters positioned are in a given segment are the same.

    For each segment $E$, denote by $R_E$ the ranking profile for voters positioned in segment $E$, i.e., $R_E=(c_{\ell_1},c_{\ell_2},\ldots, c_{\ell_m})$, where $c_{\ell_1}$ is the candidate who receives the highest number of votes, $c_{\ell_2}$ the second to highest, and so on. 
    Let $z_E$ be the smallest index of a candidate who is in the top $k$ candidates in $R_E$. Note that $z_E$ is a non-decreasing series by the segment going left to right.
    We define a shape $\shape(E) = (M_0^{\shape(E)}, \dots , M_{k-1}^{\shape(E)})$ for each segment as follows. For each $i\in \{0\}\cup[k-1]$, the $i$th entry in the shape vector, $M_i^{\shape(E)}$, is 
    the score candidate $c_{z_E+i}$ receives by the ranking profile of segment $E$, i.e. $M_i^{\shape(E)}=s(R_E,c_{z_E+i})$. 

    For each voter $v_{j}$ and each $t\in[m]$ with $r_{j}\le t\le d_{j}-k$ we can compute the set $\H_{t}^{(j)}$. First, we define for every $t\in [m]$, $\H_t = \{\shape(E)|~ \forall E: z_E=t\}$.
    Let $v_j \in V$ with $P_j = [\ell_j, u_j]$.
    \begin{itemize}
        \item For every $t$ such that $r_j<t<d_j-k$: $\H_{t}^{(j)}=\H_t$.
        \item $\H_{r_j}^{(j)} = \{\shape(E) |\; \forall E: z_E = r_j \wedge (E\cap [\ell_j, u_j]\neq \emptyset)\}$.
        \item $\H_{d_j}^{(j)} = \{\shape(E) |\; \forall E: z_E = d_j-k \wedge (E\cap [\ell_j, u_j]\neq \emptyset)\}$.
    \end{itemize}
    We analyse the complexity of computing the subsets. Computing all segments takes $O(m^2)$. Then, we compute all ranking profiles. The first ranking profile, generated by the leftmost segment, is $(c_1,c_2,\ldots, c_m)$. Moving to the next segment, note that when crossing the middle point between two candidates only the order between the two changes; therefore, the new ranking profile can be computed in $O(n)$. From the ranking profile, determining $z_E$ and $\shape(E)$ takes $O(k)$. Therefore, the first step takes $O(knm^2)$.
    For each job we calculate the subsets of shapes at $r_j$ and $d_j-k$, which can be done by iterating over all segments.
    All together, calculating all the subsets takes $O(knm^2+ nm^2) = O(knm^2)$.
\end{proof}

We illustrate the ideas behind the shape sets computation in Figure~\ref{fig:segment_split} with an example with four candidates, under the voting rule $\vec{s}_4 = (3,2,1,0)$ which is 3-truncated Borda ($k=3$).
Each dashed line labeled $m_{i,k}$ is the middle point between candidate $c_i$ and candidate $c_k$. The resulting segments are denoted by $E_1, \dots , E_7$. For each segment $E_i$, $z_{E_i}$ is the lowest index in the top $k$ candidates of the ranking profile $R_E$ for the segment. For example, consider the segment $E_2$ between $m_{1,2}$ and $m_{1,3}$. Then $R_{E_2} = (c_2, c_1, c_3, c_4)$, thus $z_{E_2}=1$. As for $\shape(E_2)$, $c_{z_{E_2+0}} = c_1$ is ranked second, therefore $M_0^{\shape(E_2)} = s(R_{E_2},c_{z_{E_2}+0}) =s(R_{E_2},c_{1+0}) = 2$.
Also, $c_{z_{E_2+1}} = c_2$ is ranked first, thus $M_1^{\shape(E_2)} = s(R_{E_2},c_{z_{E_2}+1}) =s(R_{E_2},c_{1+1}) = 3$.
Same for $c_3$, resulting in the shape $\shape(E_2) = (2,3,1)$.

\begin{figure}
    \centering
    \begin{tikzpicture}
        \draw[gray] (-5,0) -- (9,0);

        \foreach \x/\label in {-4/$c_1$, -2/$c_2$, 4.5/$c_3$, 8/$c_4$} {
            \fill[blue] (\x,0) circle (3pt) node[below] {\label};
        }

        \draw[dashed] (-3,-0.5) -- (-3,1) node[above] {$m_{1,2}$};
        \draw[dashed] (0.25,-0.5) -- (0.25,1) node[above] {$m_{1,3}$};
        \draw[dashed] (2,-0.5) -- (2,1) node[above] {$m_{1,4}$};
        \draw[dashed] (1.25,-0.5) -- (1.25,1) node[above] {$m_{2,3}$};
        \draw[dashed] (3,-0.5) -- (3,1) node[above] {$m_{2,4}$};
        \draw[dashed] (6.25,-0.5) -- (6.25,1) node[above] {$m_{3,4}$};

        \node at (-4,0.8) {$E_1$};
        \node at (-1.4,0.8) {$E_2$};
        \node at (0.75,0.8) {$E_3$};
        \node at (1.6,0.8) {$E_4$};
        \node at (2.55,0.8) {$E_5$};
        \node at (4.6,0.8) {$E_6$};
        \node at (8,0.8) {$E_7$};

        \node at (-4,-1) {$z_{E_1} = 1$};
        \node at (-1.4,-1) {$z_{E_2} = 1$};
        \node at (0.75,-1) {$z_{E_3} = 1$};
        \node at (2.55,-1) {$z_{E_5} = 2$};
        \node at (4.6,-1) {$z_{E_6} = 2$};
        \node at (8,-1) {$z_{E_7} = 2$};

        \node at (-5.3, -1.5) {$R=$};
        \node at (-4,-1.5) {$(c_1,c_2,c_3,c_4)$};
        \node at (-1.5,-1.5) {$(c_2,c_1,c_3,c_4)$};
        \node at (0.67,-1.5) {$(c_2,c_3,c_1,c_4)$};
        \node at (2.63,-1.5) {$(c_3,c_2,c_4,c_1)$};
        \node at (4.7,-1.5) {$(c_3,c_4,c_2,c_1)$};
        \node at (7.75,-1.5) {$(c_4,c_3,c_2,c_1)$};
        
        \draw[thick] (-4.5,-2.75) rectangle (-4.25,-2.5); 
        \draw[thick] (-4.5,-2.5) rectangle (-4.25,-2.25);
        \draw[thick] (-4.5,-2.25) rectangle (-4.25,-2);
        \draw[thick] (-4.25,-2.75) rectangle (-4,-2.5);
        \draw[thick] (-4.25,-2.5) rectangle (-4,-2.25);
        \draw[thick] (-4,-2.75) rectangle (-3.75,-2.5);
        
        \draw[thick] (-1.75,-2.75) rectangle (-1.5,-2.5); 
        \draw[thick] (-1.75,-2.5) rectangle (-1.5,-2.25);
        \draw[thick] (-1.5,-2.25) rectangle (-1.25,-2);
        \draw[thick] (-1.5,-2.75) rectangle (-1.25,-2.5);
        \draw[thick] (-1.5,-2.5) rectangle (-1.25,-2.25);
        \draw[thick] (-1.25,-2.75) rectangle (-1,-2.5);
        
        \draw[thick] (0.5,-2.75) rectangle (0.25,-2.5); 
        \draw[thick] (1,-2.5) rectangle (0.75,-2.25);
        \draw[thick] (0.75,-2.25) rectangle (0.5,-2);
        \draw[thick] (0.75,-2.75) rectangle (0.5,-2.5);
        \draw[thick] (0.75,-2.5) rectangle (0.5,-2.25);
        \draw[thick] (1,-2.75) rectangle (0.75,-2.5);

        \draw[thick] (2.4,-2.75) rectangle (2.15,-2.5); 
        \draw[thick] (2.4,-2.5) rectangle (2.15,-2.25);
        \draw[thick] (2.65,-2.25) rectangle (2.4,-2);
        \draw[thick] (2.65,-2.75) rectangle (2.4,-2.5);
        \draw[thick] (2.65,-2.5) rectangle (2.4,-2.25);
        \draw[thick] (2.9,-2.75) rectangle (2.65,-2.5);
        
        \draw[thick] (4.5,-2.75) rectangle (4.25,-2.5); 
        \draw[thick] (5,-2.5) rectangle (4.75,-2.25);
        \draw[thick] (4.75,-2.25) rectangle (4.5,-2);
        \draw[thick] (4.75,-2.75) rectangle (4.5,-2.5);
        \draw[thick] (4.75,-2.5) rectangle (4.5,-2.25);
        \draw[thick] (5,-2.75) rectangle (4.75,-2.5);
        
        \draw[thick] (7.5,-2.75) rectangle (7.25,-2.5); 
        \draw[thick] (8,-2.5) rectangle (7.75,-2.25);
        \draw[thick] (8,-2.25) rectangle (7.75,-2);
        \draw[thick] (7.75,-2.75) rectangle (7.5,-2.5);
        \draw[thick] (7.75,-2.5) rectangle (7.5,-2.25);
        \draw[thick] (8,-2.75) rectangle (7.75,-2.5);

        \draw[<->,thick, orange, line width=2pt] (-1.4,0) -- (3,0);
        \node[anchor=south, orange] at (0.6,-0.7) {$P_j$}; 

    \end{tikzpicture}
    \caption{Partition into segments of a set of $m=4$ candidates under $3$-truncated Borda.}
    \label{fig:segment_split}
\end{figure}
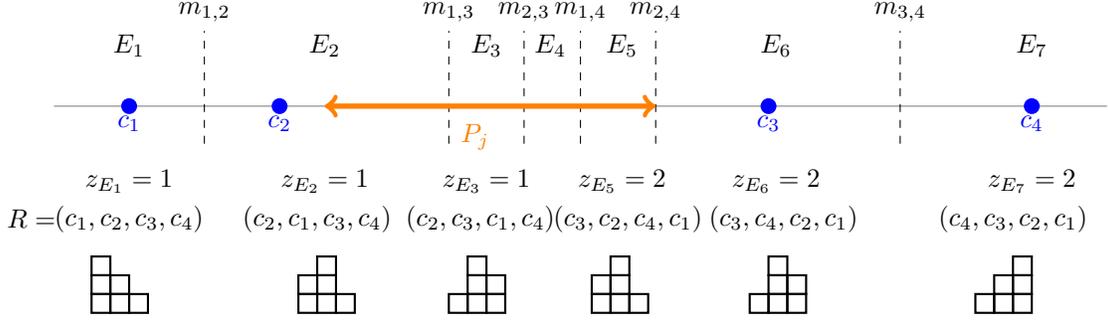

We extend the example by adding a voter $v_1$ with $P_1 = [\ell_1, u_1]$, marked in red. We define the corresponding job. The lowest index of a candidate that can receive a score from $v_1$ is $c_1$, and the highest index of a candidate that can receive score from $v_1$ is $c_4$. Then by the reduction, $r_1 = i_L = 1$, as $\ell_1\in E_2$, and $d_1= i_R+1 = 4+1 = 5$ as $u_1\in E_6$. This restricts the job $J_1$ to use the machines between times $t=1$ and $t=4$, matching the set of candidates that can receive a score from $v_1$.

We continue with computing the subsets of shapes. $\H_t$ contains all shapes $\shape(E_i)$ such that $z_{E_i} = t$. For example, $\H_1 = \{ \shape(E_1), \shape(E_2), \shape(E_3), \shape(E_4) \}$. As for the endpoint subsets for the job $j_1$, which represents $v_1$, $\H_{r_j}^{(j_1)} = \{\shape(E_2), \shape(E_3), \shape(E_4)\}$, since among all segments for which $z_E = r_1 = 1$, these are the only segments that overlap with $P_1=[\ell_1,u_1]$. Similarly, $\H_{d_j}^{(j)} = \{\shape(E_5), \shape(E_6)\}$.

We proceed with the reduction. We constructed the set $\H_{t}^{(j)}$ for each job $j$ and each value $t\in[m]$ with $r_{j}\le t\le d_{j}-k$.
Now we show that for each job $j$, the possible starting
times and their associated shapes represent a possible
assignment of scores by voter $v_{j}$ to the candidates, depending on the position $T_{j}$ of $v_j$. 

\begin{lemma}
\label{lem:votes-schedule-correspondence-gen} For each job $j\in J$ there
is a starting time $S_{j}$ and a shape $\shape^{(j)}\in\H_{S_{j}}^{(j)}$
if and only if there is a position $T_{j}\in [\ell_j,u_j]$ for voter $v_{j}$
such that for every $i \in \{0\}\cup[k-1]$,
$v_{j}$ gives a score of $M_{i}^{\shape^{(j)}}$ to candidate
$c_{S_{j}+i}$. 
\end{lemma}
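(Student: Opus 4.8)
The plan is to prove both directions of the equivalence by unwinding the definitions established in the reduction, leaning crucially on Lemma~\ref{lem:vote-range} and Lemma~\ref{lem:compute-shapes}.

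For the forward direction, suppose we are given a starting time $S_{j}$ with $r_{j}\le S_{j}\le d_{j}-k$ and a shape $\shape^{(j)}\in\H_{S_{j}}^{(j)}$. By the definition of $\H_{S_{j}}^{(j)}$ (and its characterization via segments in the proof of Lemma~\ref{lem:compute-shapes}), there is a position $T_{j}\in[\ell_j,u_j]$ lying in some segment $E$ with $z_{E}=S_{j}$ such that $\shape^{(j)}=(s(R_{T_j},c_{S_j}),\ldots,s(R_{T_j},c_{S_j+k-1}))$; here I must also check that the special endpoint sets $\H_{r_j}^{(j)}$ and $\H_{d_j}^{(j)}$ are exactly the ones that additionally intersect $[\ell_j,u_j]$, which is why the resulting $T_j$ is a legal position for $v_j$. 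Since $z_E = S_j$ means $c_{S_j}$ is the lowest-indexed candidate among the top $k$ in $R_{T_j}$, and since the top-$k$ candidates under a $k$-truncated rule with $T_j$ in a single segment form a contiguous block $c_{S_j},\ldots,c_{S_j+k-1}$ (a point I should justify: the top-$k$ set is an interval of candidates around $T_j$ because preferences are single-peaked in one dimension), the $i$-th entry $M_i^{\shape^{(j)}}$ is precisely the score $v_j$ assigns to $c_{S_j+i}$, and all other candidates get score $0$, consistent with Lemma~\ref{lem:vote-range}. That is exactly the claimed conclusion.

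For the reverse direction, suppose there is a position $T_{j}\in[\ell_j,u_j]$. Then $T_j$ falls in a unique segment $E$; let $t:=z_E$ be the smallest index of a candidate in the top $k$ of $R_{T_j}$. I would first argue $r_j \le t \le d_j - k$: since $c_t$ receives a positive score from $v_j$ at position $T_j$, Lemma~\ref{lem:vote-range} gives $i_L \le t$, i.e. $r_j \le t$; similarly the highest-indexed candidate receiving score is $c_{t+k-1}$, and Lemma~\ref{lem:vote-range} gives $t+k-1 \le i_R = d_j - 1$, i.e. $t \le d_j - k$. Set $S_j := t$. Then $\shape^{(j)} := (s(R_{T_j},c_t),\ldots,s(R_{T_j},c_{t+k-1}))\in\H_{t}^{(j)}$ by construction of the shape sets (and, at the endpoints $t=r_j$ or $t=d_j-k$, this shape lies in the restricted set because $E\cap[\ell_j,u_j]\ni T_j \ne \emptyset$). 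With this choice, $v_j$ gives score $M_i^{\shape^{(j)}} = s(R_{T_j},c_{S_j+i})$ to $c_{S_j+i}$ for each $i\in\{0\}\cup[k-1]$, as required.

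The main obstacle, and the step deserving the most care, is the structural claim that under a $k$-truncated scoring rule with $d=1$, the set of candidates receiving positive score from a voter at position $T_j$ is always a \emph{contiguous} block $\{c_t,\ldots,c_{t+k-1}\}$ of exactly $k$ candidates. This underlies the entire shape construction and the identification $t = z_E$. It follows from single-peakedness: since the $c_i$ are sorted on the line, the $k$ nearest candidates to any point $T_j$ form an interval in the order $c_1 < \cdots < c_m$ (ties broken by the fixed rule do not break contiguity because the tie-breaking is consistent across a segment). I would state this as a short preliminary observation and then the rest of the proof is routine bookkeeping matching release times/deadlines to $i_L, i_R$ and shape entries to scores.
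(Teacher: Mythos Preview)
Your proposal is correct and follows essentially the same approach as the paper: both directions proceed via the segment decomposition from Lemma~\ref{lem:compute-shapes}, identifying $S_j$ with $z_E$ for the segment $E$ containing $T_j$, and invoking Lemma~\ref{lem:vote-range} to verify $r_j\le S_j\le d_j-k$. Your explicit preliminary observation that the top-$k$ candidates form a contiguous block is a welcome clarification of a point the paper leaves implicit in its definition of $\mathcal{T}_{j,t}$.
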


\begin{proof}
    We start with the first direction. Let $j\in J$ be a job scheduled at $S_j$ in shape $\shape^{(j)}\in\H_{S_{j}}^{(j)}$. We define an position $T_{j}\in [\ell_j,u_j]$ such that $v_{j}$ gives a score of $M_{i}^{\shape^{(j)}}$ to candidate $c_{S_{j}+i}$, i.e. $s(R_{T_j},c_{S_{j}+i}) = M_{i}^{\shape^{(j)}}$
    for all $i \in \{0\}\cup[k-1]$.
    As $\shape^{(j)}\in\H_{S_{j}}^{(j)}$, by the construction of $\H_{S_{j}}^{(j)}$ there exists a segment $E$ such that $E\cap P_j\neq \emptyset$, $z_E = S_j$ and $\shape(E) = \shape^{(j)}$. We define the position of $v_j$ to be a point $T_j\in E\cap P_j$, which is a valid because $T_j\in P_j$.
    Recall that the shape $\shape(E)$ is defined such that for each $i\in  \{0\}\cup[k-1]$, $M_i^{\shape(E)}$ is the number of votes given to $c_{z_E+i}$ by a voter positioned in $E$,
    as $M_i^{\shape(E)}=s(R_E,c_{z_E+i})$; therefore, for every such $i$, the number of votes given by $v_j$ to $c_{S_j+i}$ is the number of votes given to $c_{z_E+i}$,
    $s(R_E,c_{S_j+i}) = s(R_E,c_{z_E+i}) =M_i^{\shape(E)}$,
    i.e., $v_j$ gives $M_i^{\shape(E)}$ votes to candidate $c_{S_j+i}$.

    We continue with the second direction. Let $T_j\in [\ell_j, u_j]$ be a position for voter $v_j$ such that $v_{j}$ gives $s(R_{T_j},c_{S_{j}+i})$ votes to candidate $c_{S_{j}+i}$ for every $i\in \{0\}\cup[k-1]$.
    Let $E$ be the segment such that $T_j\in E$.
    We define the starting time of $j$, to be $S_j=z_E$ and the scheduling shape of $j$ to be $\shape(E)$, and prove $r_j\leq S_j\leq d_j-k$, $M_i^{\shape(E)}=s(R_{T_j},c_{S_{j}+i})$ for every $i\in \{0\}\cup[k-1]$ and $\shape(E)\in \H_{S_j}^{(j)}$.
    
    By Lemma~\ref{lem:vote-range}, if candidate $c_{S_{j}}$ receives a score from $v_j$, then $i_L\leq S_j$ where $c_{i_L}$ is the smallest candidate to receive a positive number, and $r_j=i_L\leq S_j$. Also, $S_j+k-1\leq i_R$ where $c_{I_R}$ is the largest candidate to receive a positive number, and $d_j = i_R+1\geq S_j+k$.
    By definition of $\shape(E)$, $M_i^{\shape(E)}=s(R_{T_j},c_{S_{j}+i-1})$.
    By the construction of the shape sets, for every segment $E \cap [\ell_j,u_j]$ with a scheduling time $r_j\leq S_j\leq d_j$, it holds that $\shape(E) \in \H_{S_j}^{(j)}$.
\end{proof}

The next step is to combine Lemma~\ref{lem:vote-range} and Lemma~\ref{lem:votes-schedule-correspondence-gen} to establish the correctness of the reduction.

\begin{lemma}
\label{lem:reduction-is-correct-gen}Let $i^*\in[m]$ be the index of candidate $c^*$. Then, candidate $c^{*}$ is a possible winner if and only if there is a number $M^{*}\in\{\sum_{i\in I} s_m(i) | \forall i\in I, i\in [k], |I| \leq n\}$
such that for the set of jobs $J$ there is a feasible schedule with
$M^{*}$ machines such that all machines
are busy during $[i^*,i^{*}+1)$.
\end{lemma}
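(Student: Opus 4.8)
The plan is to prove the biconditional by connecting the notion of ``$c^*$ being a possible winner'' to the existence of a feasible schedule that saturates the time slot $[i^*,i^*+1)$ with exactly $M^*$ machines, using Lemma~\ref{lem:votes-schedule-correspondence-gen} as the bridge between individual voters and individual jobs, and Lemma~\ref{lem:vote-range} to guarantee that each voter only affects the candidates in its admissible range. First I would fix notation: for a spatial completion $\mathbf{T}$, the total score of candidate $c_i$ is $s(\mathbf{R}_{\mathbf{T}},c_i)=\sum_{j} s(R_{T_j},c_i)$, and in the scheduling instance the number of busy machines in slot $[i,i+1)$ under a schedule $(S_j,\shape^{(j)})_{j\in J}$ is $M(i)=\sum_{j:S_j\le i<S_j+p_j} M_{i-S_j}^{\shape^{(j)}}$. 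The core observation is that, via the correspondence of Lemma~\ref{lem:votes-schedule-correspondence-gen}, a completion $\mathbf{T}$ and a schedule that ``agrees with it'' (i.e., for each $j$ the chosen $S_j,\shape^{(j)}$ are exactly the ones produced from $T_j$) satisfy $M(i)=s(\mathbf{R}_{\mathbf{T}},c_i)$ for every $i\in[m]$: indeed job $j$ contributes $M_{i-S_j}^{\shape^{(j)}}=s(R_{T_j},c_i)$ to slot $i$ when $c_i$ is among the top $k$ candidates of $v_j$, and contributes $0$ otherwise (which matches because $v_j$ gives score $0$ to every candidate outside its top $k$), while Lemma~\ref{lem:vote-range} ensures job $j$ only spans slots within $[r_j,d_j]\subseteq[i_L,i_R+1]$, exactly the candidates that can receive a positive score from $v_j$.

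For the forward direction, suppose $c^*=c_{i^*}$ is a possible winner, witnessed by a completion $\mathbf{T}$ with $s(\mathbf{R}_{\mathbf{T}},c^*)\ge s(\mathbf{R}_{\mathbf{T}},c_i)$ for all $i$. For each job $j$ apply the (second direction of the) correspondence lemma to $T_j$ to obtain a starting time $S_j$ and shape $\shape^{(j)}\in\H_{S_j}^{(j)}$ with $S_j\in[r_j,d_j-k]$ and $M_{i-S_j}^{\shape^{(j)}}=s(R_{T_j},c_i)$ for the $k$ slots it occupies. Set $M^*:=s(\mathbf{R}_{\mathbf{T}},c^*)$. Then $M(i^*)=\sum_j s(R_{T_j},c_{i^*})=M^*$ by the observation above, and for every other slot $i$ we have $M(i)=s(\mathbf{R}_{\mathbf{T}},c_i)\le s(\mathbf{R}_{\mathbf{T}},c_{i^*})=M^*$, so the schedule is feasible with $M^*$ machines and slot $[i^*,i^*+1)$ is fully busy. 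It remains to check $M^*$ lies in the claimed set $\{\sum_{i\in I}s_m(i): I\subseteq[k],\,|I|\le n\}$; since each of the $n$ voters contributes to $c^*$ a score equal to $s_m(\iota)$ for some position $\iota\in[m]$ that is positive only if $\iota\le k$, the total $M^*$ is a sum of at most $n$ values each drawn from $\{s_m(1),\dots,s_m(k)\}$ (treating the zero contributions as the empty choice), which is exactly of the stated form.

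For the reverse direction, suppose there is such an $M^*$ and a feasible schedule $(S_j,\shape^{(j)})_{j\in J}$ with $M$ machines (here $M=M^*$) in which slot $[i^*,i^*+1)$ is fully busy, i.e.\ $M(i^*)=M^*$ and $M(i)\le M^*$ for all $i$. For each job $j$ apply the first direction of the correspondence lemma to $(S_j,\shape^{(j)})$ to get a position $T_j\in[\ell_j,u_j]$ with $s(R_{T_j},c_{S_j+i})=M_i^{\shape^{(j)}}$ for $i\in\{0\}\cup[k-1]$; collect these into a completion $\mathbf{T}$. By the observation, $s(\mathbf{R}_{\mathbf{T}},c_i)=M(i)$ for every $i$, hence $s(\mathbf{R}_{\mathbf{T}},c^*)=M(i^*)=M^*\ge M(i)=s(\mathbf{R}_{\mathbf{T}},c_i)$ for all $i$, so $c^*$ wins in $\mathbf{T}$ and is a possible winner.

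The step I expect to require the most care is the observation that ``schedule contribution equals score contribution'' holds slot by slot \emph{including} the slots where a job is absent — this is where the $k$-truncated structure is essential (a voter gives zero to everyone outside its top $k$, matching a job occupying zero machines there), and where Lemma~\ref{lem:vote-range} must be invoked to rule out any spurious contribution to candidates $c_i$ with $i<i_L$ or $i>i_R$. A secondary subtlety is the bookkeeping for the set to which $M^*$ belongs: one must argue both that every realizable $c^*$-score has this form (forward direction) and that assuming $M^*$ has this form loses no generality (it is only used to bound the machine count, so any sufficiently large $M$ would do, but restricting to this set keeps the search space polynomial — which matters for the algorithmic use of the lemma, though not for its correctness). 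Everything else is a direct unwinding of definitions.
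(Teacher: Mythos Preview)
Your proposal is correct and follows essentially the same approach as the paper: both directions use Lemma~\ref{lem:votes-schedule-correspondence-gen} to pass between completions $\mathbf{T}$ and schedules $(S_j,\shape^{(j)})$, and both rest on the slot-by-slot identity $M(i)=s(\mathbf{R}_{\mathbf{T}},c_i)$, with $M^*$ taken to be $s(\mathbf{R}_{\mathbf{T}},c^*)$ in the forward direction. If anything, you are slightly more explicit than the paper in justifying the identity at slots where a job is absent (invoking the $k$-truncated structure and Lemma~\ref{lem:vote-range}), and in arguing the form of $M^*$; the paper simply asserts these points.
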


\begin{proof}
    As $c^*$ is a possible winner, there is a spatial voting profile $\mathbf{T}=(T_1, \dots, T_n)$ and a value $M^*$ such that $s(\mathbf{R_T}, c^*)=M^*$, and for all $c \neq c^*$, $s(\mathbf{R_T}, c) \leq M^*$. $M^*$ is a sum of $n$ votes, therefore $M^*\in\{\sum_{i\in I} s_i | \forall i\in I, i\in [k], |I| \leq n\}$. Let $i^*\in [m]$ be the index of candidate $c^*$, i.e. $c_{i^*}=c^*$
We construct a feasible schedule $\mathcal{S}$ with $M^*$ machines such that $M(i^*)=M^*$.

By Lemma~\ref{lem:votes-schedule-correspondence-gen}, for every voter $v_j$ with position $T_j\in [\ell_j,u_j]$ there exists a starting time $S_j$ and a scheduling shape $\shape^{(j)}\in\H_{S_{j}}^{(j)}$ such that $v_{j}$ gives $M_{i}^{\shape^{(j)}}$  votes to candidate $c_{S_{j}+i}$ for $i \in \{0, \dots, P-1\}$. Let $\mathcal{S}$ be the schedule where each job $j$ is scheduled at $S_j$ in the corresponding shape.
Then, by Lemma~\ref{lem:votes-schedule-correspondence-gen}, $\mathcal{S}$ is feasible, i.e., for all $j$ we have $r_j \leq S_j \leq d_j - P$, and $j$ is scheduled in a shape corresponding to $S_j$. Moreover, the number of machines occupied by each job $j$ at each time slot $t$ is the number of votes candidate $c_t$ receives from $v_j$; 
therefore, the total number of busy machines at time slot $t$ is the total number votes candidate $c_t$ receives. As $s(\mathbf{R_T}, c) \leq M^*$ for any $c \neq c^*$, we have in $\cS$ that $M(t) \leq M^*$, and as $s(R_T, c^*) = M^*$, we have in $\cS$ that $M(i^*) = M^*$.

We now consider the other direction of the lemma. Let $\cS$ be a feasible schedule such that $M(i^*)=M^*$. We set for each voter $v_j$ a position $T_j \in [\ell_j,u_j]$ such that $c^*$ wins the election.
By Lemma~\ref{lem:votes-schedule-correspondence-gen}, for every job $j$ scheduled at $S_j$ in shape $\shape^{(j)}\in\H_{S_{j}}^{(j)}$, there is a possible position $T_{j}\in [\ell_j,u_j]$ for voter $v_{j}$ such that $v_{j}$ gives $M_{i}^{\shape^{(j)}}$ votes to candidate $c_{S_{j}+i}$, $i \in \{0, \dots, P-1\}$. $\mathbf{T}=(T_1, \dots, T_n)$ is a valid profile completion.
As before, the total number of busy machines at time slot $t$ is the total number of votes candidate $c_t$ receives. By the feasibility of the schedule, we have that $\forall c_t$ $M(t)\leq M^*$, therefore $\forall c \neq c^*$, $s(\mathbf{R_T}, c) \leq M^*$. As in $\mathcal{S}$ $M(i^*)=M^*$, it holds that $s(\mathbf{R_T}, c^*)=M^*$. This implies that $\forall c \neq c^*$, $s(\mathbf{R_T}, c^*) \geq s(\mathbf{R_T}, c)$,  making candidate $c^*$ a possible winner.
\end{proof}

The intuition behind this is that every use of machine at a time slot $[t,t+1)$ corresponds to a score given to candidate $t$ (Lemma~\ref{lem:votes-schedule-correspondence-gen}), therefore if all machines are busy at $c^{*}$, the schedule corresponds to a profile completion in which candidate $c^*$ receives $M^{*}$ votes, and no other candidate receives more, since there is only $M^{*}$ machines.

\subsection{An Algorithm for Shapes Scheduling}\label{sec:algorithm}

Our algorithm decides if there is a solution
to the shapes scheduling instance which satisfies Lemma~\ref{lem:reduction-is-correct-gen}. The algorithm exploits certain properties of the sets of possible shapes $\H_{t}^{(j)}$. To this end, we define the notion of \emph{$P$-structured} jobs.

\begin{definition}\label{def:p-structure}
Let $J$ be a set of jobs in an instance of shapes scheduling,
and let $P\in\N$. The set $J$ is $P$\emph{-structured} if 
\begin{itemize}
\item $p_{j}=P$ for each job $j\in J$,
\item for each $t\in\N_0$ there is a global set $\H_{t}$ such that if $t\neq r_j, d_j$, then $\H_{t}^{(j)}=\H_{t}$,
\item There exists an order of the jobs such that for every two jobs $j,j'\in J$ if $j\prec j'$ then either $d_j<d_{j'}$ or $d_j=d_{j'}$ and $\H_{d_j}^{(j)}\subseteq \H_{d_{j'}}^{(j')}$.
\end{itemize}
\end{definition}

Due to our construction of the instance $J$, we can show that for $P=k$, the jobs are $P$-structured.

\begin{lemma}\label{lem:exists_order}
    The job set $J$ generated by the reduction is $P$-structured for $P=k$. 
\end{lemma}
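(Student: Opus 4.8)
The plan is to verify the three bullet points of Definition~\ref{def:p-structure} directly for the job set $J$ produced by the reduction, with $P=k$. The first bullet is immediate, since the reduction explicitly sets $p_j=k$ for every job $j$. For the second bullet, recall that in the proof of Lemma~\ref{lem:compute-shapes} we defined, for every $t\in[m]$, the global set $\H_t=\{\shape(E)\mid z_E=t\}$, and we showed that for every voter $v_j$ and every $t$ with $r_j<t<d_j-k$ we have $\H_t^{(j)}=\H_t$. So the only indices $t$ at which $\H_t^{(j)}$ may differ from $\H_t$ are $t=r_j$ and $t=d_j-k$, i.e. exactly the two endpoint cases excluded in the definition. (For indices $t$ outside $[r_j,d_j-k]$ there is no shape set at all, so the condition is vacuous; one can also just set $\H_t^{(j)}=\emptyset$ there, consistent with $\H_t$ being a superset.) This gives the second bullet.

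The main work is the third bullet: exhibiting a linear order $\prec$ on $J$ such that $j\prec j'$ implies either $d_j<d_{j'}$, or $d_j=d_{j'}$ together with $\H_{d_j}^{(j)}\subseteq\H_{d_{j'}}^{(j')}$. I would order the jobs primarily by deadline, breaking ties in a way that makes the endpoint shape sets nested. Concretely, for two jobs $j,j'$ with the same deadline $d_j=d_{j'}=:D$, both of their endpoint sets $\H_{D-k}^{(j)}$ and $\H_{D-k}^{(j')}$ are of the form $\{\shape(E)\mid z_E=D-k,\ E\cap[\ell,u]\neq\emptyset\}$, filtered by whether segment $E$ overlaps the voter's interval. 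Since the relevant segments $E$ with $z_E=D-k$ are consecutive on the line (the $z_E$ values are non-decreasing left to right, so the segments with a fixed $z$-value form a contiguous block), and a voter's interval $[\ell_j,u_j]$ with right endpoint $u_j$ determining $d_j=i_R+1=D$ overlaps a suffix... — actually the cleaner observation is: among segments $E$ with $z_E=D-k$, the ones meeting $[\ell_j,u_j]$ form a contiguous sub-block whose left end is governed by $\ell_j$; so I would break ties by $\ell_j$ (larger $\ell_j$ first), which makes the sets $\H_{D-k}^{(j)}$ totally ordered by inclusion, smaller first. I would then argue that this refined order witnesses the third bullet. If the overlap structure is not literally a suffix/prefix, the fallback is: since there are only finitely many distinct endpoint sets among jobs with deadline $D$, and all of them are subsets of the single global-looking set $\{\shape(E)\mid z_E=D-k\}$, it suffices that they happen to be pairwise comparable; I would prove pairwise comparability from the interval structure and then order accordingly.

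The step I expect to be the main obstacle is establishing that the endpoint sets $\H_{d_j-k}^{(j)}$ for jobs sharing a deadline are pairwise nested under inclusion — this is where the geometry of the intervals and the monotonicity of $z_E$ along the line must be combined carefully. The key sub-claim is that if $d_j=d_{j'}$ then the two voters' intervals have the same right-hand behaviour (both are "cut off" at the same candidate $c_{i_R}$), so the segments with $z_E=d_j-k$ that each interval reaches differ only in how far left they extend, which is controlled by $\ell_j$ versus $\ell_{j'}$; hence one family of reachable segments contains the other. Once this is in hand, the order "by deadline, then by decreasing $\ell_j$" satisfies all three conditions and the lemma follows. I would also double-check the boundary convention when $r_j=d_j-k$ (a job with a single feasible start time), where the two endpoint conditions coincide and must be handled as a single constraint.
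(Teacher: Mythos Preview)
Your approach to the third bullet diverges from the paper's: you order primarily by deadline $d_j$ and then break ties using $\ell_j$, whereas the paper orders the jobs directly by the right endpoint $u_j$ of the voter's interval. This is not a cosmetic difference, and it is exactly where your argument breaks.

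You claim that if $d_j=d_{j'}$ then the two intervals ``have the same right-hand behaviour,'' so that the segments with $z_E=d_j-k$ that each interval meets ``differ only in how far left they extend.'' That is not true. Equality of deadlines says only that $u_j$ and $u_{j'}$ each lie \emph{somewhere} in the block of consecutive segments with $z$-value $d_j-k$; it does not place them in the same segment of that block. If $u_j$ sits in an earlier segment of the block than $u_{j'}$, then $[\ell_j,u_j]$ misses segments at the right end of the block that $[\ell_{j'},u_{j'}]$ does reach, while $[\ell_{j'},u_{j'}]$ may simultaneously miss segments at the left end that $[\ell_j,u_j]$ reaches (whenever $\ell_{j'}$ lies strictly inside the block and to the right of $\ell_j$). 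The two endpoint shape sets are then incomparable under inclusion, and no tie-breaking by $\ell_j$ alone can repair this. Concretely, for $k=2$ take a block consisting of two segments $A$ (left, shape $(s_m(1),s_m(2))$) and $B$ (right, shape $(s_m(2),s_m(1))$); a voter with $u_j\in A$ has endpoint set $\{\shape(A)\}$, while a voter with $\ell_{j'},u_{j'}\in B$ has endpoint set $\{\shape(B)\}$ --- same deadline, incomparable singletons. The paper's choice of sorting by $u_j$ is precisely the piece of data that controls the right-hand extent into the block (if $u_j\le u_{j'}$ then every block-segment reached by $u_j$ is also reached by $u_{j'}$); by collapsing $u_j$ down to $d_j$ you discard exactly that information, and your ``fallback'' of proving pairwise comparability cannot succeed.
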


\begin{proof}
The first two properties in Definition~\ref{def:p-structure} hold trivially by the reduction. We prove that there exists an order of the jobs such that for any pair of jobs $j,j'\in J$, if $j\prec j'$ then either $d_j<d_{j'}$ or $d_j=d_{j'}$ and $\H_{d_j}^{(j)}\subseteq \H_{d_{j'}}^{(j')}$.

We order the jobs in non-decreasing order by the upper bounds of the corresponding voter intervals, i.e., in non-decreasing order of $u_j$, where ties are broken arbitrarily.
Let $j,j'$ be two jobs in a set of P-structured jobs $J$ with the corresponding voter intervals $P_j=[\ell_j, u_j]$ and $P_{j'}=[\ell_{j'}, u_{j'}]$. As $j\prec {j'}$, $u_j\leq u_{j'}$.
Assume towards contradiction the order does not satisfy the claim. Then there exists a shape $\shape\in \H_{d_j}^{(j)}$ such that $\shape \notin \H_{d_{j'}}^{(j')}$. Let $E=(e_1, e_2)$ be the segment such that $\shape(E)=\shape$. By the reduction, as $\shape \in \H_{d_j}^{(j)}$, $[\ell_j,u_j]$ overlaps with $(e_1,e_2)$, and as $\shape \notin \H_{d_{j'}}^{(j')}$, $[\ell_{j'},u_{j'}]$ does not overlap with $(e_1, e_2)$. This implies one of the following.
\begin{enumerate}
	\item[(i)]
	If $u_{j'}<e_1$, then because $u_j\leq u_{j'}$, we have $u_j<e_1$, in contradiction to the overlap of $[\ell_j,u_j]$ and $E=(e_1,e_2)$.
	\item[(ii)]
	If $e_2<\ell_{j'}$ then $(e_1,e_2)$ does not overlap with $[\ell_{j'}, u_{j'}]$, in contradiction to $\shape =\shape(E)\in \H_{d_{j'}}^{(j')}$.
\end{enumerate}
Both cases lead to a contradiction; therefore, the order satisfies the claim.
\end{proof}

We now present an algorithm for any $P$-structured instance of scheduling with shapes. Given a set of $P$-structured jobs $J$, a candidate $c^{*}\in C$, and a number of machines $M^{*}$, our algorithm decides if there exists a schedule for $J$ with $M^{*}$ machines such that all machines are busy at time $c^*$. Then, we run the algorithm for every possible value of $M^{*}$. This can be done in polynomial time since $M^*$ must be a combination of $n$ votes, each of value $s_m(1),s_m(2), \dots, s_m(k)$ or 0.
The heart of our algorithm is formalized as Lemma~\ref{lem:generalized-Baptiste},
which generalizes a result of \cite{baptiste2000scheduling}. Intuitively, our lemma states
that if there is a feasible schedule with $M^{*}$ machines, then
there is also a feasible schedule in which a job $j'$ with the latest deadline among all jobs in $J$
starts at a time $S_{j'}$ such that the remaining jobs are split nicely into two parts:
\begin{itemize}
\item a set $J_{L}$ containing all jobs $j\in J\setminus\{j'\}$ with
$r_{j}<S_{j'}$ and for each job $j\in J_{L}$ we have $S_{j}\le S_{j'}$,
and
\item a set $J_{R}$ containing all jobs $j\in J\setminus\{j'\}$ with
$r_{j}\ge S_{j'}$; thus, for each job $j\in J_{R}$ we
have $S_{j}\ge S_{j'}$.
\end{itemize}
This allows to partition our problem into two independent
subproblems, one for $J_{L}$ and one for $J_{R}$, on which we recurse.
We define a total order $\prec$ for the jobs in $J$
such that for any two jobs $j,j'\in J$ we have $j\prec j'$
if $d_{j}<d_{j'}$, or if $d_{j}=d_{j'}$ and $\H_{d_{j}}^{(j)}\subseteq \H_{d_{j'}}^{(j')}$.
Such order exists by Lemma~\ref{lem:exists_order}.
Using this order, we define the notation $U_{j'}(t,t')$ for subsets of jobs that we use below.

\begin{definition}
For any $j'\in J$ and $t,t'\in \N_0$, let
$U_{j'}(t,t')=\left\{j\;  |\; (j\preceq j') \wedge (t\leq r_j < t')\right\}$.
\end{definition}

Note that if $j'$ is the last job in the total order $\prec$ among all jobs in $J$ then $J=U_{j'}(0,d_{j'})$.
We formalize the partition of our problem into two independent subproblems.

\begin{lemma}
\label{lem:generalized-Baptiste}
Consider an instance of shapes
scheduling with a set of $P$-structured jobs $U_{j'}(t,t')$ where $j'\in U_{j'}(t,t')$,
and let $j''\in U_{j'}(t,t')$ such that $j \prec j''$.
Assume there is a schedule with a corresponding value $M(t)$ for each $t\in\N_0$.
Then there exists also a schedule with job start times
$\left(S_{j}\right)_{j\in J}$, the same value $M(t)$
for each $t\in\N$, and a partition of $U_{j'}(t,t')$ into three sets $\{j'\},J_{L},$and $J_{R}$ such that 
\begin{itemize}
\item $J_{L}=\{j\in U_{j''}(t,t'):r_{j}<S_{j'} \} = U_{j''}(t,S_{j'})$ and
$S_{j}\le S_{j^{'}}$
for each job $j\in J_{L}$, and
\item $J_{R}=\{j\in U_{j''}(t,t'):r_{j}\ge S_{j'} \}= U_{j''}(S_{j'},t')$
and $S_{j}\ge S_{j'}$ for each job $j\in J_{R}$.
\end{itemize}
\end{lemma}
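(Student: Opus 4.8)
The plan is to prove Lemma~\ref{lem:generalized-Baptiste} by an exchange argument, starting from an arbitrary feasible schedule for the jobs in $U_{j'}(t,t')$ and transforming it step by step into one with the claimed structure, while keeping the function $M(\cdot)$ unchanged at every step. First I would observe that, since the jobs are $P$-structured and $j'$ is the maximal job of $U_{j'}(t,t')$ under $\prec$, the job $j'$ has the latest deadline; let $S_{j'}$ be its start time in the given schedule. I would then define $J_{L}$ and $J_{R}$ exactly as in the statement, namely $J_L = U_{j''}(t,S_{j'})$ (the jobs $j \preceq j''$ with $t \le r_j < S_{j'}$) and $J_R = U_{j''}(S_{j'},t')$ (those with $S_{j'} \le r_j < t'$). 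Since $r_j + p_j \le d_j$ and everyone's processing time is $P$, a job in $J_R$ cannot start before its release time $\ge S_{j'}$, so $S_j \ge S_{j'}$ holds automatically for $J_R$; the content of the lemma is therefore concentrated in forcing $S_j \le S_{j'}$ for every $j \in J_L$.

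The core of the argument is the following exchange step, which is the generalization of Baptiste's idea. Suppose some job $j \in J_L$ currently starts at $S_j > S_{j'}$. Because $r_j < S_{j'}$, job $j$ is released before $j'$ starts, and because $j \preceq j'$ we have $d_j \le d_{j'}$, hence the whole interval $[S_{j'}, S_{j'}+P)$ on which $j'$ runs lies within $j$'s feasible window $[r_j, d_j)$ as well — or more carefully, within the overlap of the two windows. The plan is to swap the roles of $j$ and $j'$: move $j$ to start at $S_{j'}$ and move $j'$ to start at the old $S_j$, and at the same time swap their shapes appropriately. The key point that makes this legal is the third condition of $P$-structuredness together with the ``global set'' condition: for an interior start time the available shape set is the same global $\H_t$ for all jobs, and for the boundary case ($S_j$ or $S_{j'}$ equal to a release time or deadline) the ordering condition $\H_{d_j}^{(j)} \subseteq \H_{d_{j'}}^{(j')}$ (applied to $j \preceq j'$, and analogously for release-time sets via the reduction) guarantees that whatever shape $j$ was using is also available to $j'$ at that slot, and vice versa. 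Since both jobs have the same processing time $P$ and we merely exchange which job sits in which of the two time-shape slots, the total machine usage $M(\cdot)$ is unchanged at every time. After this swap, $j$ starts at $S_{j'}$, and one bad job has been fixed; I would then argue by induction (e.g.\ on the number of jobs in $J_L$ starting after $S_{j'}$, or by repeatedly taking the bad job with the largest start time) that finitely many such swaps produce a schedule in which every $j \in J_L$ starts at or before $S_{j'}$, with $M(\cdot)$ preserved throughout. Finally I would check the set-theoretic identities $\{j \in U_{j''}(t,t') : r_j < S_{j'}\} = U_{j''}(t,S_{j'})$ and its counterpart, which are immediate from the definition of $U_{j''}(\cdot,\cdot)$ and the fact that $j \preceq j'' $ is the common constraint, noting that $j'$ itself is excluded from both sides since we treat $\{j'\}$ as its own part.

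I expect the main obstacle to be the bookkeeping of the shape-availability condition at the boundary slots: the third bullet of $P$-structuredness only directly controls the deadline sets $\H_{d_j}^{(j)}$, so I need to make sure the swap never tries to place $j$ or $j'$ at a slot that is a release-time boundary for one but not the other, or to handle that case using the symmetric statement about release-time shape sets that follows from the reduction (the sets $\H_{r_j}^{(j)}$ are restrictions of the global $\H_{r_j}$ by overlap with $[\ell_j,u_j]$, and $j \preceq j'$ corresponds to $u_j \le u_{j'}$ via Lemma~\ref{lem:exists_order}, which is exactly what gives the needed containment at the left boundary as well). Once that monotonicity of shape sets in both directions is pinned down, each individual exchange is routine and the induction closes; the only other thing to be careful about is that after a swap the job $j'$ may land at a new start time that must still lie in its window, which holds because $j'$ has the latest deadline and $S_j \le d_j - P \le d_{j'} - P$ and $S_j \ge r_j \ge t \ge $ (the left end relevant to $j'$).
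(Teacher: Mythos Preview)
Your swap argument is exactly the paper's. Two points where you diverge. First, the paper does not iterate: it picks, among all schedules with the prescribed $M(\cdot)$, one in which $S_{j'}$ is \emph{maximal}, and then a single swap would strictly increase $S_{j'}$, a contradiction. Your iterative framing is equivalent in spirit, but your proposed termination measure ``number of jobs in $J_L$ starting after $S_{j'}$'' is not monotone, because after a swap $S_{j'}$ strictly increases and hence $J_L=U_{j''}(t,S_{j'})$ can absorb jobs from the old $J_R$, some of which may start after the new $S_{j'}$; the correct potential is $S_{j'}$ itself (integer-valued, strictly increasing, bounded by $d_{j'}-P$), which is precisely the paper's extremal choice. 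Second, your anticipated release-time obstacle does not arise: after the swap $j$ lands strictly in its interior ($r_j<S_{j'}<S_j\le d_j-P$) and $j'$ lands strictly above its release time ($S_j>S_{j'}\ge r_{j'}$), so the only boundary case is $S_j=d_{j'}-P$, which forces $d_j=d_{j'}$ and is handled exactly by the deadline-containment clause of $P$-structuredness; no symmetric release-time condition is needed.
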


\begin{proof}
Let $\cal{S}$ be the schedule of the job set $J$ with the same value $M(t)$ for each $t\in\N$, in which the starting time of job $j'$, $S_{j'}$, is maximal. We prove that in $\mathcal{S}$, $\forall j\in J_L:\; S_j\leq S_{j'}$ and $\forall j\in J_R:\; S_j\geq S_{j'}$.
	
Assume towards contradiction that the schedule does not satisfy the claim; then, at least one of the following occurs:
If there exists a job $j\in J_R$
such that $S_j<S_{j'}$ then $S_j<S_{j'}\leq r_j$, in contradiction to $\cal{S}$ being a feasible schedule. On the other hand, if 
there exists a job $j\in J_L$ such that $S_j>S_{j'}$, we construct a new schedule $\cal{S}'$ in which every job except for $j,j'$ is scheduled as in $\mathcal{S}$, and the remaining two jobs $j$ and $j'$ are swapped. Denote the starting times of the jobs in $\mathcal{S}'$ by $\left(S'_{j}\right)_{j\in J}$.
In the schedule $\cal{S}'$, for all $t\in \N$ $M(t)$ is the same as in $\mathcal{S}$.    
As for the remaining conditions for feasibility: for $j$ we have that $S'_j = S_{j'}< S_j \leq d_j-P$, and $S'_j = S_{j'} > r_j$ since $j\in J_L$; thus, the new start time $S'_j$ is valid. As $j$ is not scheduled at its release time or deadline, $\H_{S'_j}^{(j)}= \H_{t}$, i.e., the new scheduling shape is in the subset. 
For $j'$, recall that the jobs are sorted in non-decreasing order by their deadline, where in case of equal deadlines, i.e., $d_{j}=d_{j'}$, we have $\H_{d_{j}}^{(j)}\subseteq \H_{d_{j'}}^{(j')}$.
Job $j'$ is the last in the order out of all remaining jobs, therefore $j \prec j'$.
This implies that $S'_{j'}= S_j\leq d_j-P\leq d_{j'}-P$. On the other hand, $S'_{j'}=S_j>S_{j'}\geq r_{j'}$. 
As before, the new start time of $j'$ is valid. As $j'$ is not scheduled at its release time, the shape condition is satisfied for any start time except its deadline. 
If $j'$ is scheduled at its deadline then $S'_{j'} = S_j = d_j-P = d_{j'}-P$, implying that $d_j = d_{j'}$. As $j$ is scheduled at the end of its interval, it has a scheduling shape in the subset $ \H_{d_{j}}^{(j)}$. 
Since $\H_{d_{j}}^{(j)}\subseteq \H_{d_{j'}}^{(j')}$, it is in the subset of shapes allowed for $j'$ at its deadline. Hence, $\cal{S'}$ is a feasible schedule.
This contradicts the premise that $\cal{S}$ is the schedule with the same corresponding value $M(t)$ for each $t\in\N$.
\end{proof}

Assume that in our given instance, job $j'$ is last in the total order $\prec$ of $J$.
Algorithmically, we guess $S_{j'}$ in polynomial
time (as there are only a polynomial number of options). Once we guess $S_{j'}$ correctly, we directly obtain
$J_{L}$ and $J_{R}$. Note that during each time
interval $[t,t+1)$ with $t\in\N_0$ and $t<S_{j'}$ we can process
only jobs from $J_{L}$. On the other hand, during each time interval
$[t',t'+1)$ with $t'\in\N_0$ and $t'\ge S_{j'}+P$ we can process
only jobs from $J_{R}$. During $[S_{j'},S_{j'}+P)$
we may process jobs from $J_{L}$ but possibly also jobs from $J_{R}$.
Therefore, we also guess how to split the available machines between these two job sets during these time intervals.
Formally, we define $M_L(t)$ to be the number of machines allocated to $J_L$ at time $t$, and similarly $M_R(t)$ to be the number of machines allocated to $J_R$ at time $t$.
It must hold that $M_L(t)+M_R(t)+M_t^\shape \leq M^*$ for any $t$;
therefore, we guess $M_{L}(S_{j'}), \dots, M_{L}(S_{j'}+P-1)$, and assign the remaining machines to $J_R$ during $[S_{j'}, S_{j'}+P)$, i.e., we define
$M_{R}(S_{j'}+i):=M^*-M_{L}(S_{j'}+i)-M_i^\shape$ for any $i\in \{0\}\cup[P-1]$.
Each value of $M_L(t)$ is a combination of $n$ votes, therefore belongs to the set $\{\sum_{i\in I} s_m(i) | \forall i\in I, i\in [k], |I| \leq n\}$, meaning it has  $\binom{n+k}{n}=O(n^k)$ options.

This yields independent subproblems for $J_{L}$ and $J_{R}$ on which we recurse. To ensure that our running time is bounded by a polynomial in the input size,
we embed this recursion into a dynamic program with a polynomial number of DP-cells.
Each subproblem is associated with an interval
$[t,t')$ and a job $j'$, and we want to schedule the jobs $j\prec j'$ that are
released during $[t,t')$, i.e. $U_{j'}(t,t')$. During $[t,t+P)\cup[t',t'+P)$ we may not
have all $M^{*}$ machines available, as during these intervals
our subproblem may interact with other (previously defined) subproblems.
The DP-cell specifies how many machines are available during these intervals. 

Formally, each DP-cell is defined by a tuple $(j',t,t',M_{t},\dots,M_{t+P-1},M_{t'},\dots,M_{t'+P-1})$
such that 
\begin{itemize}
\item the values $t,t'\in\N_0$ define an interval $[t,t')$,
\item $j'\in J$ is the last job according to $\prec$ of the input jobs of the subproblem,
\item the values $M_{t},\dots, M_{t+P-1},\in\{M^*-\sum_{i\in I} s_m(i) | \forall i\in I, i\in [k], |I| \leq n\}$ 
denote the number of available machines during $[t,t+1), \dots, [t+P-1,t+P)$.
\item the values $M_{t'},\dots, M_{t'+P-1}\in\{\sum_{i\in I} s_m(i) | \forall i\in I, i\in [k], |I| \leq n\}$ 
denote the number of available machines during
$[t',t'+1),\dots, [t'+P-1,t'+P)$; note that the time points $t,\dots, t+P-1,t',\dots,t'+P-1$ may not be pairwise distinct.
\end{itemize}

Recall $M(t)$ denotes the number of busy machines during $[t,t+1)$.
The goal of this subproblem is to compute a schedule for the jobs
$U_{j'}(t,t')$ such that $M(t+i)\le M_{t+i}$ for any $i\in \{0\}\cup[P-1]$, and $M(t'')\le M^{*}$ for each $t''\in\N_0$
with $t+1<t''<t'$. 
For $i^*$ being the index of candidate $c^*$, if $i^{*}\in\{t+P,...,t'-P+1\}$ we require that
$M(i^{*})=M^{*}$; otherwise, we require that $M(i^{*})=M_{i^{*}}$.
Observe that the cell $(n,1,m+1,M^*,M^*,M^*,M^*)$
corresponds to the main problem we want to solve, where $n$ is the last job in the order $\prec$ of $J$.
Based on these DP-cells, we can construct a dynamic program which decides whether there exists a feasible schedule for the given set of jobs.

We demonstrate a scheduling of a job and the partitioning of the remaining resources into two subproblems. Figure~\ref{fig:schedule_example}, shows a scheduling subproblem with 4 machines. Each cell represents a machine in a specific time slot. The orange and green cells represent machines which are in use in another subproblem, and the rest of the cells are available for this subproblem. The job $j'$ is scheduled at time $S_{j'}=4$ in shape $\shape = (2,1)$, using two machines between $[4,5)$ and one machine between $[5,6)$. The tuple defining the subproblem is $(j', 1, 7, 2, 1,2,0)$. Indeed, as shown in the figure, at time $[t,t+1)=[1,2)$ there are 2 available machines, making $M_1 = 2$. Similarly, $M_2 = 1$, $M_7=2$, and $M_8=0$. The subset addressed is $U_{j'}(1,7)$.

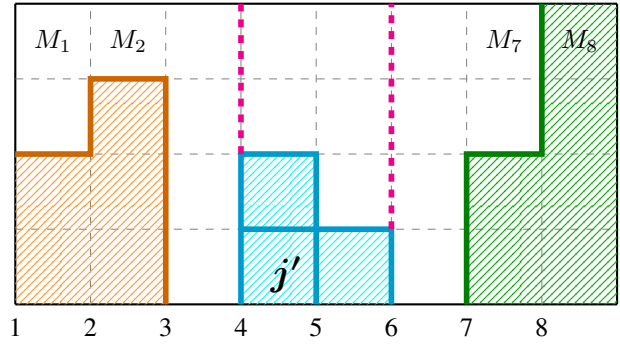
\begin{figure}[ht]
\centering
\begin{tikzpicture}

\draw[thick] (1,0) -- (9,0);
\draw[thick] (1,4) -- (9,4);

\draw[thick] (1,0) -- (1,4);
\draw[thick] (9,0) -- (9,4);
\foreach \x in {1,..., 3}    
    \draw[dashed, gray, line width=0.1mm] (1,\x) -- (9,\x);
\foreach \x in {2,..., 8}    
    \draw[dashed, gray, line width=0.1mm] (\x,0) -- (\x,4);

\fill[pattern=north east lines, pattern color=orange!80] (1,2) -- (2,2) -- (2,3) -- (3,3) -- (3,0) -- (1,0) -- cycle;
\fill[pattern=north east lines, pattern color=cyan] (4,0) -- (4,2) -- (5,2) -- (5,1) -- (6,1) -- (6,0) -- cycle;
\fill[pattern=north east lines, pattern color=green!70!black] (7,0) -- (7,2) -- (8,2) -- (8,4) -- (9,4) -- (9,0) -- cycle;

\draw[line width=0.75mm, orange!80!black] (1,2) -- (2,2) -- (2,3) -- (3,3) -- (3,0);
\draw[line width=0.75mm, cyan!80!black] (4,0) -- (4,2) -- (5,2) -- (5,0);
\draw[line width=0.75mm, cyan!80!black] (4,1) -- (6,1) -- (6,0);
\draw[line width=0.75mm, green!50!black] (7,0) -- (7,2) -- (8,2) -- (8,4);

\draw[dashed, line width=0.75mm, magenta] (4,2) -- (4,4);
\draw[dashed, line width=0.75mm, magenta] (6,1) -- (6,4);

\node at (1, -0.3) {1};
\node at (2, -0.3) {2};
\node at (3, -0.3) {3};
\node at (4, -0.3) {4};
\node at (5, -0.3) {5};
\node at (6, -0.3) {6};
\node at (7, -0.3) {7};
\node at (8, -0.3) {8};

\node at (4.6,0.4) {\Large$\boldsymbol{j'}$};

\node at (1.5,3.5) {$M_1$};
\node at (2.5,3.5) {$M_2$};
\node at (7.5,3.5) {$M_7$};
\node at (8.5,3.5) {$M_8$};

\end{tikzpicture}
    \caption{Schedule example of job $j'$ in L shape}
    \label{fig:schedule_example}
\end{figure}

After scheduling job $j'$ at time 4, it becomes necessary to determine a fitting partition of the remaining available machines.
In our framework, the time slot needed to partitioned between $[S_{j'}, S_{j'}+2) = [4,6)$, which is marked in the figure between the dashed pink line. Indeed, by Lemma~\ref{lem:generalized-Baptiste}, the jobs in $J_L$ can be scheduled no later than $S_k = 4$, thus utilizing resources up to time $S_{j'}+2$. Concurrently, jobs in $J_R$ may also commence at $S_{j'}$.

We illustrate in one such partitioning scenario when we allocate to $J_R$ one machine from $[4,5)$ and one machine from $[5,6)$, therefore $M_L(S_{j'}) = M_L(S_{j'}+1) = 1$. The remaining machines for scheduling $J_L$ by this partition is illustrated in sub-figure (a) in Figure \ref{fig:after_partition}.
Now lets look on the other side, what remains to $J_R$. At each time, the remaining machines are all machine without the ones used to $j'$ and $J_L$, therefore $M_R(S_{j'}) = M^*-M_L(S_{j'})-M_0^{L} = 4-1-2=1$. At time $S_{j'}+1=5$, $M_R(S_{j'}+1) = M^*-M_L(S_{j'}+1)-M_1^{L} = 4-1-1=2$. These values, of the remaining machines for scheduling $J_L$ are illustrated in sub-figure (a) in Figure \ref{fig:after_partition}.

\begin{figure}[ht]
    \centering
    \subfloat[\centering Resource profiles of left subproblem.]{\begin{tikzpicture}

\draw[thick] (1,0) -- (6,0);
\draw[thick] (1,4) -- (6,4);

\draw[thick] (1,0) -- (1,4);
\draw[thick] (6,0) -- (6,4);
\foreach \x in {1,..., 3}    
    \draw[dashed, gray, line width=0.1mm] (1,\x) -- (6,\x);
\foreach \x in {2,..., 5}    
    \draw[dashed, gray, line width=0.1mm] (\x,0) -- (\x,4);

\fill[pattern=north east lines, pattern color=orange!80] (1,2) -- (2,2) -- (2,3) -- (3,3) -- (3,0) -- (1,0) -- cycle;
\fill[pattern=north east lines, pattern color=cyan] (4,0) -- (4,3) -- (6,3) -- (6,0) -- cycle;

\draw[line width=0.75mm, orange!80!black] (1,2) -- (2,2) -- (2,3) -- (3,3) -- (3,0);
\draw[line width=0.75mm, cyan!80!black] (4,0) -- (4,0) -- (4,3) -- (6,3);

\node at (1, -0.3) {1};
\node at (2, -0.3) {2};
\node at (3, -0.3) {3};
\node at (4, -0.3) {4};
\node at (5, -0.3) {5};
\node at (6, -0.3) {6};

\node at (1.5,3.5) {$M_1$};
\node at (2.5,3.5) {$M_2$};
\node at (4.5,3.5) {$M_4$};
\node at (5.5,3.5) {$M_5$};

\end{tikzpicture}}
    \qquad
    \subfloat[\centering Resource profiles of right subproblem.]{\begin{tikzpicture}

\draw[thick] (1,0) -- (6,0);
\draw[thick] (1,4) -- (6,4);

\draw[thick] (1,0) -- (1,4);
\draw[thick] (6,0) -- (6,4);
\foreach \x in {1,..., 3}    
    \draw[dashed, gray, line width=0.1mm] (1,\x) -- (6,\x);
\foreach \x in {2,..., 5}    
    \draw[dashed, gray, line width=0.1mm] (\x,0) -- (\x,4);

\fill[pattern=north east lines, pattern color=cyan] (1,3) -- (2,3) -- (2,2) -- (3,2) -- (3,0) -- (1,0) -- cycle;
\fill[pattern=north east lines, pattern color=green!50!black] (4,0) -- (4,2) -- (5,2) -- (5,4) -- (6,4) -- (6,0) -- cycle;
\draw[line width=0.75mm, green!50!black] (4,0) -- (4,2) -- (5,2) -- (5,4);

\draw[line width=0.75mm, cyan!80!black] (1,3) -- (2,3) -- (2,2) -- (3,2) -- (3,0);

\node at (1, -0.3) {4};
\node at (2, -0.3) {5};
\node at (3, -0.3) {6};
\node at (4, -0.3) {7};
\node at (5, -0.3) {8};

\node at (1.5,3.5) {$M_4$};
\node at (2.5,3.5) {$M_5$};
\node at (4.5,3.5) {$M_7$};
\node at (5.5,3.5) {$M_8$};

\end{tikzpicture}}
    \caption{Resource profiles after the partition due to Figure~\ref{fig:schedule_example}.}
    \label{fig:after_partition}
\end{figure}


\begin{lemma}
\label{lem:algorithm-gen}Assume we are given an instance of the shape
scheduling problem with a set of $P$-structured jobs, $M^*$ machines and a candidate $c^{*}\in C$ with index $i^*\in[m]$.
There is an algorithm with a running time of $O(n^{1+3P^2}\cdot m^3)$ which decides
whether the instance admits a feasible schedule in which all machines are busy during $[i^{*},i^{*}+1)$.
\end{lemma}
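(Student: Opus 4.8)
The plan is to set up a dynamic program over the DP-cells defined just before the lemma statement and to prove its correctness and running time using Lemma~\ref{lem:generalized-Baptiste}. Recall a DP-cell is a tuple $(j',t,t',M_t,\dots,M_{t+P-1},M_{t'},\dots,M_{t'+P-1})$, and its semantic value is ``true'' iff the jobs $U_{j'}(t,t')$ can be scheduled so that $M(t+i)\le M_{t+i}$ and $M(t'+i)\le M_{t'+i}$ for each $i\in\{0\}\cup[P-1]$, all machine bounds $M(t'')\le M^*$ hold for $t+1<t''<t'$, and the ``busy'' constraint at $i^*$ is respected. First I would describe the recurrence: given a cell with nonempty $U_{j'}(t,t')$, let $j'$ be (w.l.o.g.) the $\prec$-maximal job of the subproblem; the algorithm guesses the start time $S_{j'}\in\{r_{j'},\dots,d_{j'}-P\}$ (polynomially many options) and a shape $\shape^{(j')}\in\H_{S_{j'}}^{(j')}$ (polynomially many, by Lemma~\ref{lem:compute-shapes}), then guesses the split values $M_L(S_{j'}),\dots,M_L(S_{j'}+P-1)$, each ranging over the $O(n^k)$ attainable partial vote-sums, and sets $M_R(S_{j'}+i):=\widehat M_{S_{j'}+i}-M_L(S_{j'}+i)-M_i^{\shape^{(j')}}$ where $\widehat M$ is $M^*$ or the appropriate boundary bound $M_{\cdot}$ from the cell depending on whether $S_{j'}+i$ lies in the interior or on the boundary of $[t,t')$. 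The cell evaluates to true iff for some such guess both the left child cell $(j'',t,S_{j'},\dots)$ and the right child cell $(j''',S_{j'},t',\dots)$ evaluate to true, where $j''$ (resp.\ $j'''$) is the $\prec$-maximal job of $U_{j'}(t,S_{j'})$ (resp.\ of $U_{j'}(S_{j'},t')$); base case: when $U_{j'}(t,t')=\emptyset$ the cell is true iff the all-zero schedule meets the bounds (in particular $i^*$ must not be forced busy with $M^*>0$).

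The correctness proof has two directions. For soundness, a straightforward induction on $|U_{j'}(t,t')|$ shows that any ``true'' leaf-to-root evaluation can be assembled into an actual feasible schedule: concatenate the left and right sub-schedules returned by the children with job $j'$ placed at $S_{j'}$ in shape $\shape^{(j')}$; the machine-budget bookkeeping $M_L(\cdot)+M_R(\cdot)+M_i^{\shape^{(j')}}\le\widehat M_{\cdot}$ guarantees no time slot exceeds its bound, and every job appears exactly once since $U_{j'}(t,t')=\{j'\}\sqcup U_{j'}(t,S_{j'})\sqcup U_{j'}(S_{j'},t')$ (using that $j'$ is $\prec$-maximal, so it is the only job with $r_{j'}<d_{j'}$ that could be ``on the boundary''). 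For completeness, I invoke Lemma~\ref{lem:generalized-Baptiste}: given any feasible schedule for $U_{j'}(t,t')$, it produces an equivalent schedule (same $M(\cdot)$ profile) in which $J_L=U_{j'}(t,S_{j'})$ all start at or before $S_{j'}$ and $J_R=U_{j'}(S_{j'},t')$ all start at or after $S_{j'}$; restricting this schedule to $J_L$ and to $J_R$ and reading off the machine usage of $J_L$ during $[S_{j'},S_{j'}+P)$ as the $M_L(\cdot)$ values yields exactly the guesses the DP makes, so the corresponding child cells are true by the induction hypothesis, hence the parent cell is true. The $\prec$-ordering is what makes the recursion well-defined — the $\prec$-maximal job of each child strictly precedes $j'$, so the job sets shrink — and this is precisely why the $P$-structured hypothesis (via Lemma~\ref{lem:exists_order}) is needed.

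For the running time: the number of distinct $(t,t')$ pairs is $O(m^2)$ since release times/deadlines lie in $[1,m+1]$; the job coordinate $j'$ contributes a factor $n$; each of the $2P$ boundary machine-count entries ranges over $O(n^k)=O(n^P)$ values, but more carefully the analysis in the paper bounds the per-cell and cell-count contributions so that the product is $O(n^{1+3P^2}\cdot m^3)$ — I would account for this as $n$ (choice of $j'$) times $m^2$ (choices of $t,t'$) times $(n^{P})^{2P}=n^{2P^2}$ (the $2P$ boundary entries) for the cell count, times a per-cell transition cost of guessing $S_{j'}$ ($O(m)$ options), a shape ($O(n)$ options via Lemma~\ref{lem:compute-shapes}), and $P$ split values ($n^{P}\cdot{}$something, giving roughly $n^{P^2}$), yielding the stated bound after collecting exponents; finally one multiplies by another $O(n)$-sized outer loop over the candidate values of $M^*$ if that is folded in, or it is kept separate. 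The main obstacle I anticipate is the completeness direction's bookkeeping: one must check that the equivalent schedule from Lemma~\ref{lem:generalized-Baptiste} respects not only $M^*$ in the interior but also the tighter boundary bounds $M_{t},\dots,M_{t+P-1},M_{t'},\dots$ inherited from the parent cell — these are exactly the machines ``reserved'' for sibling subproblems — and that the busy-at-$i^*$ condition is threaded correctly through whichever side of the split contains $i^*$ (or, when $i^*\in\{S_{j'},\dots,S_{j'}+P-1\}$, is covered by job $j'$ itself together with the reserved machines). Handling the degenerate cases where the boundary windows $[t,t+P)$ and $[t',t'+P)$ overlap, or where $t'=t+P$, requires care but is routine once the general case is in place.
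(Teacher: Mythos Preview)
Your plan matches the paper's proof: the same DP cells, the same recurrence guessing $S_{j'}$, a shape, and the $M_L$ split values, correctness in both directions via Lemma~\ref{lem:generalized-Baptiste}, and the same running-time bookkeeping ($n^{1+2P^2}m^2$ cells times $O(m\cdot n^{P^2})$ per cell). One notational slip to fix: the child cells must be indexed by the $\prec$-predecessor of $j'$ (the paper simply writes $j'-1$) rather than by the $\prec$-maximum of $U_{j'}(t,S_{j'})$, since $j'$ itself belongs to $U_{j'}(t,S_{j'})$ whenever $r_{j'}<S_{j'}$ and your recursion as written would not terminate.
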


\begin{proof}
Given $M^*$ machines, we run the DP to determine if there exists a feasible schedule using $M^*$ machines at time $c^*$.
If there is such a schedule then, by Lemma~\ref{lem:reduction-is-correct-gen}, $c^*$ is a possible winner.
We formulate a recursion to fill each DP-cell. As before, we renumber the jobs by their position in the total order $\prec$.
To shorten the notation, we define $\bar{M}_t^P:= M_t, \dots, M_{t+P-1}$.

For the DP-cell $(j',t,t',\bar{M}_t^P,\bar{M}_{t'}^P)$:
\begin{itemize}
    \item If $j=0$ and $c^{*}\notin\{t,...,t'+P-1\}$, or if $c^{*}\in\{t,\dots, t+P-1,t',\dots, t'+P-1\}$ and $M(c^{*})=0$:\:  $$B(j',t,t',\bar{M}_t^P,\bar{M}_{t'}^P)=0$$
    \item If $j=0$ and $c^{*}\in\{t+P,...,t'-1\}$, or if  $c^{*}\in\{t,\dots, t+P-1,t',\dots, t'+P-1\}$ and $M(c^{*})>0$:\:
    $$B(j',t,t',\bar{M}_t^P,\bar{M}_{t'}^P)= -\infty$$
    \item If $r_{j'}\notin [t,t')$:\: $$B(j',t,t',\bar{M}_t^P,\bar{M}_{t'}^P)=B(j'-1,t,t',\bar{M}_t^P,\bar{M}_{t'}^P)$$
    \item If there exists $i,i'\in \{0, \dots, P-1\}$ such that $t+i = t' +i' $ and $M_{t+i} +M_{t'+i'}>M^*$,
    then $$B(j',t,t',\bar{M}_t^P,\bar{M}_{t'}^P)=-\infty$$
    \item Otherwise:
    \begin{equation}\label{ex:DP-gen}
    \begin{aligned}
        B(j', t, t', \bar{M}_t^P,\bar{M}_{t'}^P) = \\
        \max_{\substack{
            S_{j'}:\; r_{j'} \leq S_{j'} \leq t'-P, \\
            \shape^{(j')} \in \H_{S_{j'}}^{(j')}, \\
            \forall i\in \{0, \dots, P-1\}:\; M_L({S_{j'}+i}) \leq M^*- M_i^{\shape^{(j')}}}}
        \Bigg( & B(j'-1, t, S_{j'}, \bar{M}_t^P, M_L({S_{j'}}),\dots, M_L({S_{j'}+P-1})) \\
        & + B(j'-1, S_{j'}, t', M_R({S_{j'}}), \dots, M_R({S_{j'}+P-1}), \bar{M}_{t'}^P)\\
        & + b(S_{j'}, \shape^{(j')}) \Bigg)
    \end{aligned}
    \end{equation}

    The maximum over an empty set is $-\infty$.
\end{itemize}

We prove the correctness of the dynamic program for each case.

If $j=0$ then the job set is empty. In case $c^{*}\notin\{t,...,t'+P-1\}$, or
$c^{*}\in\{t,\dots,t+P-1,t',\dots, t'+P-1\}$ and $M(c^{*})=0$, we return an empty schedule; therefore, the value of the objective function is zero, and $(j',t,t',\bar{M}_t^P,\bar{M}_{t'}^P)=0$. In the second case, any schedule using this subschedule would not reach the desired value of the objective function, therefore $(j',t,t',\bar{M}_t^P,\bar{M}_{t'}^P)=-\infty$.

For the case where $r_{j'}\notin [t,t')$, the subset of jobs in the subproblem is $U_{j'}(t,t')=U_{j'-1}(t,t')$; therefore, by definition $B(j',t,t',\bar{M}_t^P,\bar{M}_{t'}^P)=B(j'-1,t,t',\bar{M}_t^P,\bar{M}_{t'}^P)$.

We address the fourth case. If there exists $i,i'\in \{0, \dots, P-1\}$ such that $t+i = t' +i' $ and $M_{t+i} +M_{t'+i'}>M^*$ then the number of available machines at time $t+i$ is higher than the total number of machines, making it an infeasible schedule; therefore, $B(j',t,t',\bar{M}_t^P,\bar{M}_{t'}^P)= - \infty$.

We now prove the equality in the last case. Denoting the expression in the RHS of (\ref{ex:DP-gen} to be $B'$), we first prove that $B'\geq B(j'-1,t,t',\bar{M}_t^P,\bar{M}_{t'}^P)$. In the scenario where no feasible schedule exists, $B(j'-1,t,t',\bar{M}_t^P,\bar{M}_{t'}^P) = -\infty$, thus the inequality holds. We address the case where a feasible schedule exists, therefore $B(j'-1,t,t',\bar{M}_t^P,\bar{M}_{t'}^P)$ is finite. Let $\mathcal{S}$ be an optimal schedule for $U_{j'}(t,t')$ using the remaining available machines as inferred from $\bar{M}_t^P,\bar{M}_{t'}^P$. Let $S_{j'}$, $r_{j'}\leq S_{j'} \leq t'$ be the start time of $j'$ in $\mathcal{S}$, where $j'$ is the largest indexed job in the subproblem.
By Lemma~\ref{lem:generalized-Baptiste}, for any $j \in U_{j'-1}(t, S_{j'})$, it holds that $S_j \leq S_{j'}$, and for every $j \in U_{j'-1}(S_{j'}, t')$, $S_j \geq S_{j'}$. In other words, only jobs from $U_{j'-1}(t, S_{j'})$ can use machines before $t$ and only jobs from $U_{j'-1}( S_{j'},t')$ can use machines starting at $S_{j'}+P$.

Consider the time interval $[S_{j'}, S_{j'}+P)$ in which machines are allocated for job $j'$ and both jobs from $U_{j'-1}(t, S_{j'})$ and $U_{j'-1}( S_{j'},t')$ can use machines. In terms of $j'$, for $i\in \{0, \dots, P-1\}$, $M_i^\shape$ machines are allocated at $[S_{j'}+i,S_{j'}+i+1)$.
Let $M_L(S_{j'}+i)$ be the number of machines that jobs from $J_L$ are using at $[S_{j'}+i, S_{j'}+i+1)$. Each must be not greater than $M^*$ minus the number of machines $j'$ is using at that time, therefore $M_L(S_{j'}+i)\leq M^* - M_i^{\shape^{(j')}}$. This means that scheduling $U_{j'-1}(t, S_{j'})$ as in $\mathcal{S}$ yields a feasible schedule for $$(j'-1, t, S_{j'}, M_{t},\dots, M_{t+P-1}, M_L({S_{j'}}),\dots, M_L({S_{j'}+P-1})).$$

Next, consider the resources left for $J_R$. For each $i\in \{0,\dots, P-1\}$, reducing from the total number of machines the ones used for $j'$ and for $J_L$ leaves $M^* - M_L(S_{j'}+i)-M_i^{\shape^{(j')}}$ machine for $J_R$, making the schedule of $U_{j'-1}(t, S_{j'})$ as in $\mathcal{S}$ feasible for 
$$(j'-1, S_{j'}, t', M^* - M_L({S_{j'}}) - M_0^{\shape^{(j')}},\dots, M^* - M_L({S_{j'}+P-1}) - M_{P-1}^{\shape^{(j')}}, M_{t'},\dots, M_{t'+P-1}).$$
Then,

\begin{equation*}
\begin{aligned}
B(j', t, t', \bar{M}_{t'}^P,\bar{M}_{t'}^P) & = M_\mathcal{S}(c^*) \\
& = M_L(c^*) + M_R(c^*) + b(S_{j'}, \shape^{(j')}) \\
& \leq B(j'-1, t, S_{j'}, \bar{M}_{t}^{P}, M_L(S_{j'}), \dots, M_L(S_{j'}+P-1)) \\
& \quad + B(j'-1, S_{j'}, t', M_R(S_{j'}), \dots,M_R(S_{j'}+P-1), \bar{M}_{t'}^P) \\
& \quad + b(S_{j'}, \shape^{(j')}) = B'
\end{aligned}
\end{equation*}

Now, we prove that $B'\leq B(j',t,t',\bar{M}_t^P,\bar{M}_{t'}^P)$. Suppose that $B'$ is finite, otherwise the proposition holds trivially. Let $S_{j'}$ be the largest value such that for some $\shape^{(j')} \in \H_{S_{j'}}^{(j')}$, and $M_L({S_{j'}}),\dots, M_L({S_{j'}+P-1})$, brings the expression to a maximum.
There exists a schedule $\mathcal{S}_L$ that realizes 
$$B(j'-1, t, S_{j'}, \bar{M}_{t}^P, M_L({S_{j'}}), \dots, M_L({S_{j'}+P-1})),$$ 
and a schedule $\mathcal{S}_R$ that realizes 
$$B(j'-1, S_{j'}, t', M_R({S_{j'}}),\dots, M_R({S_{j'}+P-1}), \bar{M}_{t'}^P).$$ Note that every job in $U_{j'-1}(t,t')$ is scheduled either in $\mathcal{S}_L$ or in $\mathcal{S}_R$.

Consider the schedule $\mathcal{S}$ constructed as follows: Schedule $j'$ at time $S_{j'}$ in shape $\shape^{(j')}$, and schedule all other jobs in $U_{j'}(t,t')$ as in $\mathcal{S}_L$ or $\mathcal{S}_R$. We prove that $\cal{S}$ is a feasible schedule of $U_{j'}(t,t')$. Given $r_{j'} \leq S_{j'}$ and the feasibility of $\mathcal{S}_L$ and $\mathcal{S}_R$, all jobs adhere to their release and due date constraints.

We analyze different time points to ensure that at any time, no more than $M^*$ machines are used:

\begin{enumerate}
    \item For $t'' < S_{j'}$: Only jobs from $\mathcal{S}_L$ are scheduled, and since this is a feasible schedule, no more than $M^*$ machines are used, and specifically, at $t''\in\{t,\dots,t+P-1\}$ no more than $M_{t''}$ machines are used.
    \item For $t'' \geq S_{j'} + P$: Only jobs from $\mathcal{S}_R$ are scheduled. Similar to the previous case, because $\mathcal{S}_R$ is feasible no more than $M^*$ machines are used, and at $t''\in\{t',\dots, t'+P-1\}$, no more that $M_{t''}$ machines are used.
    \item At time slot $t''=S_{j'}+i$ for $i\in\{0, \dots, P-1\}$:
    $\mathcal{S}_L$ uses no more than $M_L(S_{j'}+i)$ machines,
    $\mathcal{S}_R$ uses no more than $M_R({S_{j'}+i}) = M^*-M_L(S_{j'}+i)-M_i^{\shape^{(j')}}$ machines,
    and $j'$ uses $M_i^{\shape^{(j')}}$ machines exactly.
    Overall, we have that
    \[
    \begin{array}{ll}
    M(S_{j'}+i) & \leq M_L(S_{j'}+i)+ M_R(S_{j'}+i) + M_i^{\shape^{(j')}} \\
    & \leq M_L(S_{j'}+i)+ M^*-M_L(S_{j'}+i)-M_i^{\shape^{(j')}}+M_i^{\shape^{(j')}} \\
    &  = M^* .
    \end{array}
    \]
\end{enumerate}

By the above, $\mathcal{S}$ is a feasible solution; therefore, $B'\leq B(j'-1,t,t',\bar{M}_t^P,\bar{M}_{t'}^P)$, which completes the proof of correctness of the DP.

We analyse the time complexity of the algorithm, starting with the number of DP-cells. There are $n$ jobs and $m$ different time options for $t$ and $t'$. The $2P$ other values in the tuple represent the number of available machines, which is bounded by $M^*= O(n^P)$; therefore, the total number of DP cells is $O(n^{1+2P^2} m^2)$.
The number of shapes depends on $P$ only, and  thus remains a constant.
The time complexity for calculating each DP-cell is $O(m\cdot n^{P^2})$, where $m$ is the factor of times the schedule $j'$ and $n^{P^2}$ is the resource allocation, a factor of $n$ for each time slot separating between the job sets $J_R$ and $J_L$.
Overall, running the DP for a possible value $M^*$ takes $O( n^{1+2P^2}\cdot m^2 \cdot m\cdot n^{P^2})= O(n^{1+3P^2}\cdot m^3)$.
\end{proof}

Now, Lemmas~\ref{lem:reduction-is-correct-gen} and \ref{lem:algorithm-gen}
imply the next result.

\begin{theorem}\label{thm:k-truncated}
We can solve the possible winner problem for any $k$-truncated voting rule
in time $O(n^{1+k+3k^2}\cdot m^3)$.
\end{theorem}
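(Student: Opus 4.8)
The plan is to assemble the three ingredients already in place. Given an instance of $\pwpar{1}$ with a $k$-truncated scoring rule and a designated candidate $c^*$ with index $i^*$, I would first run the reduction of Section~\ref{sec:reduction}: build one job $j$ per voter $v_j$ with processing time $p_j=k$, set $r_j=i_L$ and $d_j=i_R+1$ using the extreme positions $\ell_j,u_j$ (justified by Lemma~\ref{lem:vote-range}), and compute every shape set $\H_t^{(j)}$. By Lemma~\ref{lem:compute-shapes} all these sets are obtained in time $O(knm^2)$, and by Lemma~\ref{lem:exists_order} the resulting job set $J$ is $P$-structured for $P=k$, so that the dynamic program of Lemma~\ref{lem:algorithm-gen} is applicable to it.

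Next I would invoke the correctness of the reduction. By Lemma~\ref{lem:reduction-is-correct-gen}, $c^*$ is a possible winner if and only if there is a value $M^*\in\{\sum_{i\in I}s_m(i)\mid I\subseteq[k]\text{ a multiset},\ |I|\le n\}$ for which the shapes scheduling instance admits a feasible schedule on $M^*$ machines in which all machines are busy during $[i^*,i^*+1)$. Hence it suffices to enumerate every candidate value of $M^*$ and, for each, run the algorithm of Lemma~\ref{lem:algorithm-gen} to test whether such a schedule exists; we answer ``yes'' iff the test succeeds for some $M^*$. The number of distinct values $M^*$ can take is at most the number of multisets of size $n$ over the $k$ score values $s_m(1),\dots,s_m(k)$ (together with $0$), which is $\binom{n+k}{k}=O(n^k)$ — here it is crucial that $k$ is a constant, otherwise this enumeration would not be polynomial.

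Finally, the running time bound follows by multiplication: for each of the $O(n^k)$ values of $M^*$ we spend $O(n^{1+3k^2}\cdot m^3)$ time by Lemma~\ref{lem:algorithm-gen}, for a total of $O(n^{1+k+3k^2}\cdot m^3)$, which dominates the $O(knm^2)$ preprocessing. Since the heavy technical work is contained in the lemmas already proved, the only real points to get right here are bookkeeping: verifying that the enumeration over $M^*$ is genuinely polynomial for constant $k$, and that every hypothesis of Lemma~\ref{lem:algorithm-gen} (in particular the $P$-structure, which underpins the Baptiste-style decomposition of Lemma~\ref{lem:generalized-Baptiste}) is supplied by the reduction. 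I expect no further obstacle beyond this assembly.
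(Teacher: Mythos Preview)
Your proposal is correct and follows essentially the same route as the paper: build the shapes-scheduling instance via the reduction, observe that the job set is $P$-structured with $P=k$, enumerate the $O(n^k)$ candidate values of $M^*$, and for each invoke the dynamic program of Lemma~\ref{lem:algorithm-gen}, combining everything via Lemma~\ref{lem:reduction-is-correct-gen}. The paper's own proof is exactly this assembly, with the same running-time arithmetic.
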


\begin{proof}
We start by calculating the shape scheduling instance induced by the voting instance. Then, we want to determine for candidate $c^*$, whether it is a possible winner. To this end, for each possible value of $M^*$, we conclude whether there exists a feasible schedule using $M^*$ machines at time $c^*$.
We use the DP to solve the corresponding scheduling problem.
If $(n,1,m+1,M^*, \dots ,M^*)=M^*$ for some $M^*$, then by Lemma~\ref{lem:algorithm-gen} there exists a schedule for the job instance such that at time $c^*$, all the machines are busy, and at any other time, not more than $M^*$ machines are busy. By Lemma~\ref{lem:reduction-is-correct-gen}, candidate $c^*$ is a possible winner.

If for every value of $M^*$, $(n,1,m+1,M^*,\dots,M^*)\neq M^*$, then by Lemma~\ref{lem:algorithm-gen} there is no schedule for the job instance such that at time $c^*$, all the machines are busy. By Lemma~\ref{lem:reduction-is-correct-gen}, candidate $c^*$ can not be a possible winner.
    
Each possible value for $M^*$ is a combination of $n$ values from the scoring vector, for which there are $\binom{n+k}{k}$ combinations. Therefore, for a constant $k$, determining for a candidate whether or not it is a possible winner takes $O(n^{1+k+3k^2}\cdot m^3)$.
\end{proof}

\subsection{Hardness Results for Scheduling with Shapes}\label{sec:hardness_for_scheduling}

In our algorithm from the previous subsection we required the input jobs to be $P$-structured which allowed us to solve the problem exactly in polynomial time for constant $P$. In this subsection, we complement this by showing that scheduling with shapes is strongly NP-hard if we lift these requirements. First, we remove the assumption that for each $t\in \N$ there is a global set $\H_t$ for the job shapes and require only that $P=O(1)$. In fact, we prove even that already for $P=1$ the problem is strongly NP-hard.

\begin{theorem}\label{thm:hardness-1}
	The scheduling with shapes problem is strongly NP-hard, even if $p_{j}=1$
	for each job $j\in J$.
\end{theorem}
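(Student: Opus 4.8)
The plan is to give a polynomial-time reduction from \textsc{3-Partition}, which is strongly NP-hard. An instance of \textsc{3-Partition} consists of $3n$ positive integers $a_1,\dots,a_{3n}$ with $\sum_{i=1}^{3n}a_i=nB$ and $B/4<a_i<B/2$ for every $i$, and the question is whether $\{1,\dots,3n\}$ can be partitioned into $n$ triples each summing exactly to $B$.

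From such an instance I would build a shapes scheduling instance as follows. Set the number of machines to $M:=B$. For each integer $a_i$ introduce one job $j_i$ with processing time $p_{j_i}=1$, release time $r_{j_i}=0$, deadline $d_{j_i}=n$, and, for every time $t\in\{0,\dots,n-1\}$, the singleton shape set $\H_{t}^{(j_i)}:=\{(a_i)\}$. Since $p_{j_i}=1$, every allowed shape is a length-one vector, so scheduling job $j_i$ just means picking a slot $S_{j_i}\in\{0,\dots,n-1\}$ in which it occupies $a_i$ machines, and the feasibility constraint becomes $M(t)=\sum_{i:\,S_{j_i}=t}a_i\le B$ for every slot $t$. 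The construction is polynomial, and every number appearing in it ($M$, the values $a_i$, the release times, the deadlines, the number $3n$ of jobs) is bounded by a polynomial in $n$ and $B$, hence in the unary size of the \textsc{3-Partition} instance; so the reduction establishes \emph{strong} NP-hardness.

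It remains to verify correctness in both directions. If the \textsc{3-Partition} instance has triples $T_1,\dots,T_n$ of sum $B$, then placing $j_i$ in slot $k-1$ whenever $i\in T_k$ gives $M(t)=\sum_{i\in T_{t+1}}a_i=B$ for $t\in\{0,\dots,n-1\}$ and $M(t)=0$ otherwise, which is feasible. Conversely, given a feasible schedule, set $G_t:=\{i:S_{j_i}=t\}$ for $t\in\{0,\dots,n-1\}$. Feasibility gives $\sum_{i\in G_t}a_i\le B$ for each of the $n$ slots, and summing over all slots yields $\sum_{t=0}^{n-1}\sum_{i\in G_t}a_i=\sum_{i=1}^{3n}a_i=nB$, which forces $\sum_{i\in G_t}a_i=B$ for every $t$. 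Since $a_i>B/4$, no $G_t$ can contain four or more indices, and since $a_i<B/2$, no $G_t$ of size at most two can sum to $B$; hence $|G_t|=3$ for all $t$, so $G_0,\dots,G_{n-1}$ is the desired partition into triples of sum $B$.

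I do not expect a genuine technical obstacle here. The only points requiring a little care are (i) confirming that the reduction is \emph{strong}, i.e.\ the constructed numbers are polynomially bounded in the magnitudes of the input numbers (which is immediate from the construction, given that \textsc{3-Partition} is NP-hard already under unary encoding), and (ii) the standard ``exact load'' argument that turns the capacity inequalities $M(t)\le B$ into equalities by a counting argument, which is what lets one read off a valid $3$-partition from any feasible schedule.
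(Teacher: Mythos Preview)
Your proof is correct and follows essentially the same construction as the paper: time slots play the role of bins, each item $a_i$ becomes a unit-length job whose unique shape is $(a_i)$, and $M:=B$. The only difference is the source problem---you reduce from \textsc{3-Partition} whereas the paper reduces from \textsc{Bin Packing}---but the reduction itself and the feasibility argument are the same, and your choice makes the ``strong'' part of the hardness claim slightly more immediate.
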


\begin{proof}
We reduce from the \textsc{Bin Packing} problem. Suppose we are given
an instance of \textsc{Bin Packing} with $n$ items whose sizes are
specified by given values $a_{1},...,a_{n}\in\N$. Also, we are given
a bin size $B\in\N$ and a value $k\in\N$. The instance is a yes-instance
if and only if it is possible to assign the given items into at most
$k$ bins with capacity $B$ each.

For each $i\in[n]$ we introduce a job $j_{i}$ with $p_{j_{i}}=1$,
$r_{j_{i}}=0$, $d_{j_{i}}=k$, and $\H_{t}^{(j_{i})}=\{(a_{i})\}$
for each $t\in\{0,1,...,k-1\}$. We define the number of machines
by $M:=B$.

If the given instance of \textsc{Bin Packing} is a yes-instance, then
there exists a bin $b(i)\in\{0,...,k-1\}$ for each item $i\in[n]$
such that for each bin $\ell\in\{0,...,k-1\}$ the total size of the
items assigned to bin $\ell$ is bounded by $B$. We can construct
a solution for our instance of scheduling with shapes as follows.
For each $i\in[n]$ we set $S_{j_{i}}:=b(i)$. Then, for each $\ell\in\{0,...,k-1\}$
we have that during $[\ell,\ell+1)$ at most $M=B$ machines are busy
since the total size of the items in bin $\ell$ is bounded by $B$.

Conversely, suppose that there is a feasible schedule for our instance
of scheduling with shapes. For each $i\in[n]$ we assign the item
$i$ into the bin $S_{j_{i}}\in\{0,...,k-1\}$. For each $\ell\in\{0,...,k-1\}$
we have that during $[\ell,\ell+1)$ at most $M=B$ machines are busy.
Therefore, the total size of the items assigned to bin $\ell$ is
at most $M=B$ as required.
\end{proof}
On the other hand, we show that the problem is strongly NP-hard if we lift only the assumption that \mbox{$P=1$}. More precisely, we prove that this is already the case if all jobs have the same release times, deadlines, processing times, and sets of shapes, and if each job $j\in J$ must start at its release time (due to its processing time and deadline).

\begin{theorem}\label{thm:hardness-2}
	The scheduling with shapes problem is strongly NP-hard, even if $d_{j}-r_{j}=p_{j}$
	for each job $j\in J$, $\H_{S_{j}}^{(j)}=\H_{S_{j'}}^{(j')}$, $r_j= r_{j'}$, and $d_j= d_{j'}$ for
	any two jobs $j,j'\in J$, and $M=1$.
\end{theorem}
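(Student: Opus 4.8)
The plan is to show that, under the stated restrictions, the shapes scheduling problem is the classical \textsc{Independent Set} problem in disguise (equivalently \textsc{Set Packing}), which is strongly NP-hard since it contains no large numbers. So only NP-hardness has to be proven, and the resulting instances will involve only polynomially bounded integers.

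First I would unpack what the restrictions force. Because $d_j - r_j = p_j$, every job $j$ must start at $S_j = r_j$, and since all release times, deadlines, processing times, and shape sets coincide, all jobs start simultaneously --- say at time $0$ --- run on the common interval $[0,p)$ with $p := p_j$, and each must choose a shape from one common set $\H$. With $M=1$ and nonnegative integer shape entries, the requirement $M(t)\le 1$ in every slot $[i,i+1)$, $i\in\{0,\dots,p-1\}$, means at most one job may have a nonzero (hence exactly $1$) entry in that slot; in particular, any shape with an entry $\ge 2$ is unusable, so we may assume $\H\subseteq\{0,1\}^p$. Identifying a $0/1$ vector with the subset of $\{0,\dots,p-1\}$ it indicates, a feasible schedule is precisely an assignment of one shape to each of the $|J|$ jobs whose chosen supports are pairwise disjoint.

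Next I would build the reduction. Starting from an \textsc{Independent Set} instance --- a graph $G=(V,E)$ without isolated vertices (a harmless preprocessing assumption) and an integer $k$ --- I would set $p := |E|$, enumerate $E=\{e_1,\dots,e_p\}$, create $k$ jobs all with $r_j=0$, $d_j=p$, $p_j=p$ (forcing start time $0$), put $M:=1$, and define the single common shape set $\H := \{\chi_v : v\in V\}$, where $\chi_v\in\{0,1\}^p$ has a $1$ in coordinate $i$ iff $v$ is an endpoint of $e_i$. All restrictions in the statement then hold. For the forward direction, given an independent set $\{v_1,\dots,v_k\}$ I assign $\chi_{v_j}$ to job $j$: pairwise non-adjacent vertices have disjoint incidence sets, so the schedule is feasible. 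For the converse, a feasible schedule assigns job $j$ some $\chi_{w_j}$ with the $\chi_{w_j}$ having pairwise disjoint supports; then no edge $e_i$ joins two distinct $w_j$'s (it would be a shared $1$-coordinate), and the $w_j$ are pairwise distinct (two equal ones would share their support, which is nonempty since $G$ has no isolated vertices), so $\{w_1,\dots,w_k\}$ is an independent set of size $k$.

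Finally I would note that the constructed instance uses only the integers $0,1,\dots,p$ and $k$, all polynomially bounded in $|G|$ (indeed the shape vectors alone have length $p=|E|$, so the instance already has size $\Omega(p)$), so the reduction is polynomial and transports a strongly NP-hard problem to instances with only polynomially bounded numbers; this yields strong NP-hardness. I do not expect a genuine obstacle: the one place that needs care is the first step --- correctly collapsing the $M=1$, identical-jobs setting to the pure ``pairwise-disjoint $0/1$ supports'' condition, and in the converse direction arguing that the selected vertices are distinct --- after which everything is routine bookkeeping.
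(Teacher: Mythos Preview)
Your proof is correct and follows the same core idea as the paper: reduce from \textsc{Independent Set}, encode each vertex by its $0/1$ edge-incidence vector, and use $M=1$ so that a feasible schedule is exactly a choice of pairwise edge-disjoint (hence non-adjacent, and distinct by the no-isolated-vertices assumption) vertices. The only difference is that the paper creates $|V|$ jobs and pads the shape set with $|V|-k$ ``dummy'' singleton shapes so that the $|V|-k$ jobs not assigned to the independent set have somewhere to go, whereas you create exactly $k$ jobs and dispense with the dummies; your variant is a strict simplification of the same construction.
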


\begin{proof}
We give a reduction from the \textsc{Independent Set }problem. Suppose
we are given an undirected graph $G=(V,E)$ and an integer $k$. We
assume w.l.o.g. that $G$ does not have isolated vertices. The given
instance of \textsc{Independent Set }is a yes-instance if and only
if there exists an independent set $V'\subseteq V$ in $G$ with $|V'|=k$.

Let $n:=|V|$ and $m:=|E|$ and assume that $E=\{e_{1},...,e_{m}\}$.
We construct an instance of scheduling with shapes as follows. We
introduce $n$ jobs $J$ such that $r_{j}:=0$, $d_{j}:=n+m-k$, and
$p_{j}:=n+m-k$ for each job $j\in J$. Note that hence for each job
$j\in J$ we have that $S_{j}=0$ is the only possible start time.
We define the number of machines by $M:=1$.

For each job $j\in J$ we define $\H_{0}^{(j)}:=\H$ for a set of
shapes $\H\subseteq\{0,1\}^{n+m-k}$ defined as follows. Intuitively,
for each shape $f\in\H$ the first $m$ entries of $f$ correspond
to the $m$ edges in $E$. For each vertex $v\in V$ there is a shape
$f^{(v)}\in\H$ such that
\begin{itemize}
    \item $f_{i}^{(v)}=1$ if $i\in\{0,...,m-1\}$ and $e_{i}$ is incident
    to $v$,
    \item $f_{i}^{(v)}=0$ if $i\in\{0,...,m-1\}$ and $e_{i}$ is not incident
    to $v$, and
    \item $f_{i}^{(v)}=0$ if $i\in\{m,...,n+m-k\}$.
\end{itemize}
Also, there are $n-k$ dummy shapes $f^{\left\langle 1\right\rangle },...,f^{\left\langle n-k\right\rangle }$
such that for each dummy shape $f^{\left\langle i\right\rangle }$
we have that
\begin{itemize}
    \item $f_{m+i}^{\left\langle i\right\rangle }=1$ and
    \item $f_{i'}^{\left\langle i\right\rangle }=0$ for each $i'\in\{0,...,n+m-k\}\setminus\{m+i\}$.
\end{itemize}
We want to show that there is an independent set of size $k$ in $G$
if and only if our instance of scheduling with shapes admits a feasible
solution.

First assume that there is an independent set $V'\subseteq V$ in
$G$ of size $k$. For each job $j\in J$ we define $S_{j}:=0$ (recall
that this is the only option). For each vertex $v\in V'$ we assign
the shape $f^{(v)}$ to one (arbitrary) job $j\in J$. Thus, there
are $n-k$ jobs $J'\subseteq J$ to which we have not yet assigned
a shape. Therefore, for each $i\in[n-k]$ we assign the dummy shape
$f^{\left\langle i\right\rangle }$ to one job in $J'$. We claim
that for each $\ell\in\{0,...,n+m-k-1\}$ during $[\ell,\ell+1)$
at most one machine is busy. Assume first that $\ell\in\{0,...,m-1\}$.
Then $f_{\ell}^{\left\langle i\right\rangle }=0$ for each $i\in[n-k]$.
Moreover, $V'$ forms an independent set and, hence, there is at most
once vertex $v\in V'$ which is incident to $e_{\ell}\in E$. Therefore,
there is at most one shape $f^{(v)}\in\H$ with $f_{\ell}^{(v)}=1$
that we assigned to a job in $J$. Next, assume that $\ell\in\{m,...,m+n-k-1\}$.
Then, $f_{\ell}^{(v)}=0$ for each $v\in V$. Also, there is only
one dummy shape $f^{\left\langle i\right\rangle }$ for which $f_{\ell}^{\left\langle i\right\rangle }=1$
which the dummy shape $f^{\left\langle i\right\rangle }$ with $i=\ell$.
In particular, at most one such shape is assigned to a job. Therefore,
during $[\ell,\ell+1)$ at most one machine is busy. Thus, there is
a feasible schedule for our instance of scheduling with shapes.

Conversely, assume that there is a feasible schedule for our instance
of scheduling with shapes. For each dummy shape $f^{\left\langle i\right\rangle }$
with $i\in[n-k]$ we have that $f_{m+i}^{\left\langle i\right\rangle }=1$.
Therefore, each dummy shape can be assigned to at most one job $j\in J$.
Also, for each vertex $v\in V$ for the shape $f^{(v)}$ there is
at least one edge $e_{i}$ such that $f_{i}^{(v)}=1$ since we assumed
that $G$ does not have any isolated vertices. Therefore, the shape
$f^{(v)}$ can be assigned to at most one job $j\in J$. Let $V'$
be the set of vertices $v\in V$ for which the shape $f^{(v)}$ is
assigned to some job $j\in J$. Since each dummy shape can be assigned
to at most one job, we have that $|V'|=n-(n-k)=k$. We claim that
$V'$ is an independent set. Suppose that there are two vertices $v,v'\in V'$
which are connected by an edge. Then, there is a value $i\in[m]$
with $e_{i}=\{v,v'\}$. However, then $f_{i}^{(v)}=f_{i}^{(v')}=1$
but $M=1$ which is a contradiction. Thus, $V'$ is an independent
set of size $k$.
\end{proof}

\section{Parameterized Algorithm for $\pwpar{d}$ }\label{sec:fpt}

We present a parameterized algorithm for the $\pw$ problem
in the $d$-dimensional euclidean space for any $d\ge1$. Our fixed parameter is the number of candidates $m$.

First, we describe our algorithm for positional scoring rules. Recall
that we are given a score vector $\vec{s}_m=(s_m(1),...,s_{m}(m))$ and each voter
gives a certain number of votes to each candidate, according to $\vec{s}_m$.
We say that a vector
$z=(z_{1},...,z_{m})\in\N_{0}^{m}$ is a \emph{voting
vector} if $z$ describes the number of votes that a voter may give
to each of the candidates, i.e., formally, if there is a permutation
$\sigma:[m]\rightarrow[m]$ such that $z_{i}=s_m(\sigma(i))$ for each
$i\in[m]$. We denote by $Z$ the set of all voting vectors. Recall
that each voter $v_{j}$ is described as a vector of intervals $P_{j}=\langle[\ell_{j,1},u_{j,1}],\dots,[\ell_{j,d},u_{j,d}]\rangle$.
In particular, each voter $v_{j}$ may vote only for a subset of the
voting vectors $Z$. We characterize the voters by the subsets of
$Z$ to which they may vote for. Therefore, for each subset of $Z$ we
introduce a corresponding \emph{type; }formally, we define the set
of types $\T$ to be all subsets of $Z$. We say that a voter $v_{j}$
is of some type $\tau\in\T$ if $v_{j}$ may vote for exactly the subsets
$\tau$ of $Z$. One key insight is that to solve the $\pw$ problem, for each voter $v_{j}$ we need to know only the type of
$v_{j}$. Also, there are only $|\T|=2^{|Z|}\le2^{m!}$ types which
is a value that depends only on $m$ but not on the number of voters
$n$. For each type $\tau\in\T$ denote by $n_{\tau}$ the number of voters
of type $\tau$. We can compute the type of each voter $v_{j}$ by checking
for each $z\in Z$ whether $v_{j}$ may vote according to $z$. We
can do this by solving a linear program that verifies whether there exists a valid position $T_j$ satisfying $d(T_j,c_i)\geq d(T_j, c_h)$ for every two candidates $c_i, c_h$ such that $c_i$ receives a higher score than $c_h$.

\begin{lemma}\label{lem:fpt_positional}
For each voter $v_{j}$ and each vector $z\in Z$ of a score vector $\vec{s}_m$, we
can check in polynomial time whether $v_{j}$ may vote according to
$z$.
\end{lemma}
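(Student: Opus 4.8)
The plan is to reduce the question to deciding feasibility of a system of linear (in)equalities in the $d$ unknown coordinates of $v_j$'s position, and then to solve that system in polynomial time.

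First I would unpack what ``$v_j$ may vote according to $z$'' means. Since $z\in Z$, its entries realize exactly the distinct values occurring in $\vec{s}_m$, and these partition the candidates into groups $G_1,G_2,\dots$, where all candidates of $G_t$ share one score value and the value of $G_1$ exceeds that of $G_2$, which exceeds that of $G_3$, and so on. I would then prove the following characterization: a position $T_j$ produces the vote vector $z$ (i.e.\ $R_{T_j}$ gives $z_i$ votes to each $c_i$) if and only if, for every ordered pair $(c_i,c_h)$ with $z_i>z_h$, candidate $c_i$ is ranked above $c_h$ in $R_{T_j}$. Necessity is immediate; for sufficiency one observes that these conditions force each $G_t$ to occupy a fixed block of consecutive positions of $R_{T_j}$, all carrying the score value of $G_t$, so every $c_i\in G_t$ receives exactly $z_i$ votes. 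Hence it suffices to test whether $P_j$ contains a position at which $c_i$ is ranked above $c_h$ whenever $z_i>z_h$.

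Next I would convert each such ranking requirement into a linear constraint on the coordinates $T_{j,1},\dots,T_{j,d}$. As $R_{T_j}$ sorts the candidates by Euclidean distance to $T_j$ with a fixed tie-break, ``$c_i$ ranked above $c_h$'' is equivalent to $d(T_j,c_i)^2\le d(T_j,c_h)^2$, where the inequality is non-strict if the fixed rule breaks the tie in favor of $c_i$ and strict otherwise. Writing $c_{i,\delta}$ for the $\delta$-th coordinate of $c_i$ and expanding $d(T_j,c_i)^2=\sum_{\delta=1}^{d}(T_{j,\delta}-c_{i,\delta})^2$, the quadratic terms $T_{j,\delta}^2$ cancel on both sides and what remains is the linear inequality
\begin{equation*}
2\sum_{\delta=1}^{d}(c_{h,\delta}-c_{i,\delta})\,T_{j,\delta}\ \le\ \sum_{\delta=1}^{d}\bigl(c_{h,\delta}^{2}-c_{i,\delta}^{2}\bigr),
\end{equation*}
again strict or non-strict according to the tie-break. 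Together with the box constraints $\ell_{j,\delta}\le T_{j,\delta}\le u_{j,\delta}$ for $\delta\in[d]$, the existence of a valid position $T_j$ becomes the feasibility of a system of $O(m^{2}+d)$ linear inequalities, some of which may be strict.

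Finally I would dispatch this mixed strict/non-strict feasibility problem with a standard device: introduce an extra variable $\varepsilon$, replace every strict inequality $a^{\top}T_j<b$ by $a^{\top}T_j\le b-\varepsilon$, append $0\le\varepsilon\le1$, and maximize $\varepsilon$. This is an ordinary linear program over the bounded variables $(T_j,\varepsilon)$, solvable in polynomial time, whose optimum is strictly positive precisely when the original (strict-included) system is solvable; comparing the optimum with $0$ then gives the answer. Chaining the three steps yields the claimed polynomial-time test. The only part that is delicate rather than routine is the handling of ties: the characterization must impose no constraint \emph{within} a group of equal-score candidates and must demand strict separation only \emph{between} groups, and the fixed tie-breaking order has to be propagated correctly to decide which pairwise inequalities are strict --- mishandling this would either reject valid completions lying on the lower-dimensional ``equidistance'' hyperplanes or accept invalid ones.
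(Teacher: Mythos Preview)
Your proposal is correct and follows essentially the same approach as the paper: reduce the test to feasibility of a system of linear (in)equalities in the $d$ coordinates of $T_j$ (using that equidistance to two candidates is a hyperplane), handle the strict inequalities via an auxiliary slack variable $\varepsilon$, and solve the resulting LP. The only cosmetic difference is that the paper first relabels candidates so that $z_i=s_m(i)$ and then imposes constraints only on consecutive pairs $(c_i,c_{i+1})$, omitting those with $s_m(i)=s_m(i+1)$, whereas you impose constraints on all pairs with $z_i>z_h$; both yield polynomially many linear constraints and the same feasibility region.
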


\begin{proof}
Suppose we are given a voter $v_{j}$ and a vector $z\in Z$. W.l.o.g.~assume
that $z_{i}=s_m(i)$ for each $i\in\{1, \dots, m-1\}$. Let $A\subseteq \{1,\dots, m-1\}$ be the set of all indices for which the tie breaking rule favors $c_i$ over $c_{i+1}$. Hence, $v_{j}$ may vote
according to $z$ if and only if: for $i\in A$ and $d(T_{j},c_{i})\le d(T_{j},c_{i+1})$, or $i\in \{1, \dots, m-1\}\setminus A$ and $d(T_{j},c_{i}) < d(T_{j},c_{i+1})$.
The latter condition can be written in the
form $a_{i}^{\top}T_{j}\le b_{i}$ for some vector $a_{i}\in\R^{d}$
and some scalar $b_{i}\in\R$ since all points with equal distance
to $c_{i}$ and $c_{i+1}$ lie on a hyperplane in $\R^{d}$. In the case of a strong inequality we add a variable which we aim to maximize, $a_{i}^{\top}T_{j}\le b_{i}+\varepsilon$. If $\varepsilon > 0$ then $a_{i}^{\top}T_{j} < b_{i}$.
We notice that in the case where $s_m(i)=s_m(i+1)$ we simply omit the inequality, because the order between the two candidates is irrelevant. 
Thus, $v_{j}$ may vote according to $z$ if and only if the following linear
program has a solution with $\varepsilon>0$, which we can check in polynomial time.
\begin{alignat*}{2}
\text{maximize} & \quad \varepsilon & & \\
a_{i}^{\top}T_{j} & \le b_{i} & \quad & \forall i\in A\\
a_{i}^{\top}T_{j} & \le b_{i}+\varepsilon & \quad & \forall i\in [m-1]\setminus A\\
T_{j,k} & \ge\ell_{j,k} &  & \forall k\in[d]\\
T_{j,k} & \le u_{j,k} &  & \forall k\in[d]\\
T_{j,k} & \in\R &  & \forall k\in[d] \\
\varepsilon \geq 0 & & &
\end{alignat*}
\end{proof}

Let $i^{*}\in[m]$ be the index of the candidate $c^{*}$ for which
we want to determine whether it can win the election, i.e., $c_{i^{*}}=c^{*}$.
We formulate an integer linear program that tries to compute an outcome
of the election in which $c^{*}$ wins. For each type $\tau\in\T$ and each voting vector $z\in Z$ we introduce a variable $x_{\tau}^{z}$
which denotes the number of voters of type $\tau$ that vote according
to the voting vector $z$.

\begin{alignat*}{2}
\sum_{\tau\in\T}\sum_{z\in Z}x_{\tau}^{z}\cdot z_{i} & \leq M^{*} &  & \forall i=[m]\setminus\{i^{*}\}\\[1ex]
\sum_{\tau \in\T}\sum_{z\in Z}x_{\tau}^{z}\cdot z_{i^{*}} & =M^{*}\\
\sum_{z\in Z}x_{\tau}^{z} & =n_{\tau} &  & \forall \tau\in\T\\[1ex]
x_{\tau}^{z} & \in\mathbb{N}_{0} & \quad & \forall \tau\in\T,\forall z\in Z\\[1ex]
M^{*} & \in\mathbb{N}
\end{alignat*}
The integer program has a solution if and only if there is an outcome
of the election in which $c^{*}$ receives $M^{*}$ votes (for some
value $M^{*}\in\N$) and no other candidate receives more than $M^{*}$
votes, i.e., $c^{*}$ is a possible winner. The number of variables
is bounded by $1+|\T||Z|\le1+m!\cdot2^{m!}$. Hence, we can solve
the program in a running time of the form $(\log(s_m(1)))^{O(1)}f(m)$ 
using algorithms for integer programs in fixed dimensions, e.g., \cite{lenstra1983integer,ReisRothvoss2023}. A similar technique is used, e.g. in \cite{kimelfeld2019query}.

\begin{theorem}\label{theorem:approval_positional}For every positional
scoring rule and any $d\ge1$, $\pwpar{d}$ can be solved
in time $(n\cdot\log(s_m(1)))^{O(1)}f(m)$ for some function
$f$, i.e., $\pwpar{d}$ is FPT for the parameter $m$.
\end{theorem}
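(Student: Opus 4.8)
The plan is to reduce $\pwpar{d}$ to an integer linear program whose number of variables is bounded by a function of $m$ only, and then solve that program using a known algorithm for integer programming in fixed dimension. The reduction is essentially the one set up above; what remains is to justify its correctness and to bound the running time.

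First I would compute, for each voter $v_{j}$, its \emph{type} $\tau(v_{j})\subseteq Z$, i.e.\ the set of voting vectors $z$ that $v_{j}$ can realize from some position $T_{j}\in P_{j}$. By Lemma~\ref{lem:fpt_positional} each individual membership test $z\stackrel{?}{\in}\tau(v_{j})$ reduces to checking feasibility of a linear program, hence is doable in polynomial time; ranging over all $z\in Z$ (there are at most $m!$ of them) and over all $n$ voters, all types are computed in time $m!\cdot n^{O(1)}$. I would then tabulate the counts $n_{\tau}$ for the at most $|\T|=2^{|Z|}\le 2^{m!}$ types. The integer program displayed above now has variables $x_{\tau}^{z}$ for $(\tau,z)\in\T\times Z$ (fixing $x_{\tau}^{z}=0$ whenever $z\notin\tau$) together with the target value $M^{*}$; its constraints distribute the $n_{\tau}$ voters of each type among the vectors in $\tau$, force $c^{*}$ to receive exactly $M^{*}$ votes, and bound every other candidate's score by $M^{*}$.

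The key correctness claim is that this program is feasible if and only if $c^{*}$ is a possible winner. For the ``if'' direction, any spatial completion induces for each voter a voting vector lying in its type, and setting $x_{\tau}^{z}$ to be the number of type-$\tau$ voters whose induced vector is $z$, and $M^{*}$ the resulting score of $c^{*}$, satisfies all constraints. For the ``only if'' direction, given a feasible $(x_{\tau}^{z},M^{*})$, for each type $\tau$ I assign its $n_{\tau}$ voters to vectors according to the $x_{\tau}^{z}$; since a voter of type $\tau$ can, by definition, be placed at a position realizing any chosen $z\in\tau$, and since the partial profile constrains each voter's coordinates \emph{independently} of the other voters, these per-voter choices can be made simultaneously, yielding a spatial completion in which $c^{*}$ ties for the lead. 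Finally, the number of variables is at most $1+m!\cdot 2^{m!}$, a function of $m$ alone, and the encoding length of the program is $(n\cdot\log s_m(1))^{O(1)}$ (note $0\le M^{*}\le n\cdot s_m(1)$, so $M^{*}$ needs only $O(\log n+\log s_m(1))$ bits); applying the algorithm of Lenstra~\cite{lenstra1983integer}, or the faster one of~\cite{ReisRothvoss2023}, solves it in time $f(m)\cdot(n\cdot\log s_m(1))^{O(1)}$, which yields the stated bound.

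I expect the point to be careful about to be the correctness of the type abstraction rather than the running-time accounting: one must verify that the type of a voter, as computed via Lemma~\ref{lem:fpt_positional}, captures \emph{exactly} the set of voting vectors attainable from its interval (including the fixed tie-breaking rule, which is why the lemma distinguishes strict from non-strict inequalities), and that no global constraint couples distinct voters --- a fact that holds precisely because the only interaction between voters in the possible winner problem is through the aggregate scores, which the integer program accounts for directly.
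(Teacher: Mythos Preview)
Your proposal is correct and follows essentially the same approach as the paper: compute each voter's type via Lemma~\ref{lem:fpt_positional}, formulate the displayed integer program with one variable per (type, voting vector) pair plus $M^{*}$, and solve it with Lenstra/Reis--Rothvoss in fixed dimension. Your write-up is in fact slightly more explicit than the paper's on two points---the two directions of the feasibility $\Leftrightarrow$ possible-winner equivalence, and the bit-length of $M^{*}$ contributing the $n$ factor to the running time---both of which are left implicit in the paper.
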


Our algorithm can be adjusted to the setting of approval voting:
we set $Z:=\{0,1\}^{m}$, i.e., all combinations of partitioning
the candidates into approved and unapproved candidates.
Then, for a voter $v_j$ and a voting vector $z\in Z$, $v_j$ can vote by $z$ if there is a valid position $T_j$ such that for every $i\in [m]$, if $z_i=1$ then $d(T_j,c_i)\leq \rho_j$, and if $z_i=0$, $d(T_j,c_i)> \rho_j$. This can be checked by solving a set of inequalities, which by Grigor'ev and Vorobjov~\cite{grigor1988solving} can be solved in $O(f(m))$ time.

\begin{lemma}\label{lem:fpt_approval}
For each voter $v_{j}$ and each vector $z\in Z$ of approval voting, we
can check in polynomial time whether $v_{j}$ may vote according to
$z$.
\end{lemma}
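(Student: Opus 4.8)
The plan is to mirror the proof of Lemma~\ref{lem:fpt_positional}, replacing the linear program used there by a system of low-degree polynomial (in)equalities in the unknown position $T_j=(T_{j,1},\dots,T_{j,d})$, and to decide its feasibility with the algorithm of Grigor'ev and Vorobjov~\cite{grigor1988solving}. Fix a voter $v_j$ with box $P_j=\langle[\ell_{j,1},u_{j,1}],\dots,[\ell_{j,d},u_{j,d}]\rangle$ and approval radius $\rho_j$, and fix a target vector $z\in\{0,1\}^m$. By definition, $v_j$ may vote according to $z$ if and only if there is a point $T_j$ with $\ell_{j,k}\le T_{j,k}\le u_{j,k}$ for all $k\in[d]$, with $\|T_j-c_i\|_2^2\le\rho_j^2$ for every $i$ with $z_i=1$, and with $\|T_j-c_i\|_2^2>\rho_j^2$ for every $i$ with $z_i=0$.

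First I would write this condition verbatim as a system of constraints: the $2d$ box constraints are linear, and each of the $m$ distance conditions expands to $\|T_j\|_2^2-2c_i^\top T_j+\|c_i\|_2^2\le\rho_j^2$ (if $z_i=1$) or $\|T_j\|_2^2-2c_i^\top T_j+\|c_i\|_2^2>\rho_j^2$ (if $z_i=0$), a polynomial (in)equality of degree $2$. The strict inequalities arising from coordinates with $z_i=0$ are handled exactly as in Lemma~\ref{lem:fpt_positional}: introduce one auxiliary variable $\varepsilon\ge 0$, replace each such constraint by $\|T_j-c_i\|_2^2\ge\rho_j^2+\varepsilon$, and ask whether the resulting non-strict system has a solution with $\varepsilon>0$; equivalently, one may feed the strict inequalities directly to the solver, since the procedure of~\cite{grigor1988solving} decides realizability of sign conditions and in particular handles constraints of the form $g>0$. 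Either way we obtain a system of $O(m+d)$ polynomial (in)equalities of degree at most $2$ in $d+1$ real variables, and $v_j$ may vote according to $z$ iff this system is feasible. The equivalence is immediate in both directions: a solution $T_j$ of the system is, by construction, a valid position realizing the vote $z$, and conversely any such position is a solution.

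The last step is to invoke the decision procedure of~\cite{grigor1988solving} on this system; since the number of polynomials, their degree, and the number of variables are all controlled, feasibility is decided within the claimed running time, which is all that is needed to carry out the type-computation step preceding Theorem~\ref{theorem:approval_positional}. As an aside, every distance constraint has the \emph{same} quadratic part $\|T_j\|_2^2$, so the substitution $w:=\|T_j\|_2^2$ turns all distance and box constraints into linear ones and leaves only the single quadratic identity $w=\|T_j\|_2^2$; geometrically the question becomes whether a polyhedron meets the paraboloid $\{(T_j,w):w=\|T_j\|_2^2\}$, which exhibits the instance as only mildly nonlinear.

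The point where this differs essentially from Lemma~\ref{lem:fpt_positional}, and the main obstacle, is the non-linearity: the feasible set for $T_j$ is an intersection of a box with balls $\{\|T_j-c_i\|_2\le\rho_j\}$ (for $z_i=1$) from which further balls have been deleted (for $z_i=0$), hence in general non-convex, so a linear program no longer suffices and one genuinely needs the real-algebraic machinery. Some care is required to argue that the $\varepsilon$-relaxation faithfully captures strictness — the relaxed system has a solution with $\varepsilon>0$ exactly when $v_j$ can strictly push all disapproved candidates beyond $\rho_j$, which uses that $T_j$ ranges over a bounded set so the optimal $\varepsilon$ is attained — and, for the quantitative claim, to track the dependence of the bound in~\cite{grigor1988solving} on the number of variables.
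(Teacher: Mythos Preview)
Your proposal is correct and follows essentially the same approach as the paper: write the condition ``$v_j$ may vote according to $z$'' as a system of $O(m+d)$ polynomial (in)equalities of degree at most~$2$ in the coordinates of $T_j$, and decide feasibility with the real-algebraic procedure of Grigor'ev and Vorobjov~\cite{grigor1988solving}. The paper's proof is terser (it simply lists the constraints and invokes~\cite{grigor1988solving}), while you add the explicit squaring of the distance constraints, the $\varepsilon$-relaxation discussion, and the paraboloid aside, but none of this changes the underlying argument.
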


\begin{proof}
Suppose we are given a voter $v_{j}$ and a vector $z\in Z$. $v_{j}$ may vote
according to $z$ if and only if there exists a valid position $T_j$ such that for every $i\in [m]$, if $z_i=1$ then $d(T_j,c_i)\leq \rho_j$, and if $z_i=0$, $d(T_j,c_i)> \rho_j$. This condition creates a systems of $m+d$ inequalities with $d$ variables and a maximal degree of 2.
\begin{alignat*}{2}
d(T_j,c_i)>\rho_j & \quad &\forall i \in[m]: z_i=0\\
\rho_j- d(T_j,c_i)\geq 0 & \quad & \forall i\in[m]: z_i=1\\
T_{j,\ell}\geq l_{j,\ell} & \quad & \forall \ell\in [d]\\
T_{j,\ell}\leq u_{j,\ell} & \quad & \forall \ell\in [d]\\
\end{alignat*}
By Grigor'ev and Vorobjov~\cite{grigor1988solving} a solution for this system of inequations can be found in time polynomial in $(m\cdot 2)^{d^2}$.
\end{proof}

\begin{theorem}\label{theorem:approval_fpt} For any fixed $d\geq1$, $\pwpar{d}$ with approval voting
can be solved in time $n^{O(1)}f(m)$ for some function $f$, i.e.,
it is FPT for the parameter~$m$. \end{theorem}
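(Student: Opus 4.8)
The plan is to adapt, essentially verbatim, the integer-programming argument behind Theorem~\ref{theorem:approval_positional}, replacing its geometric feasibility test (Lemma~\ref{lem:fpt_positional}) by the one for approval voting (Lemma~\ref{lem:fpt_approval}).

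First I would set $Z:=\{0,1\}^{m}$, so that a vector $z\in Z$ records exactly which candidates a voter approves, and again let the set of types $\T$ consist of all subsets of $Z$; thus $|Z|=2^{m}$ and $|\T|\le 2^{2^{m}}$, both depending on $m$ alone. For each voter $v_{j}$ I compute its type $\tau(v_{j}):=\{z\in Z:\ v_{j}\text{ can approve exactly the candidate set indicated by }z\}$ by invoking, for every $z\in Z$, the feasibility check of Lemma~\ref{lem:fpt_approval}. For a \emph{fixed} $d$ each such check runs in time polynomial in $(2m)^{d^{2}}$, i.e.\ within some computable bound $g(m)$, so all types, and hence the counts $n_{\tau}$ of voters of each type $\tau\in\T$, are obtained in time $n\cdot 2^{m}\cdot g(m)$.

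Next I would write down precisely the integer program used in the proof of Theorem~\ref{theorem:approval_positional}: a variable $x_{\tau}^{z}\in\N_{0}$ for every $\tau\in\T$ and every $z\in\tau$, a variable $M^{*}\in\N$, the constraints $\sum_{\tau,z}x_{\tau}^{z}\,z_{i}\le M^{*}$ for each $i\neq i^{*}$, $\sum_{\tau,z}x_{\tau}^{z}\,z_{i^{*}}=M^{*}$, and $\sum_{z\in\tau}x_{\tau}^{z}=n_{\tau}$ for each $\tau\in\T$. Correctness is the usual two-way argument: from a spatial completion in which $c^{*}$ wins, each voter $v_{j}$ realizes some approval pattern $z\in\tau(v_{j})$, which determines a feasible assignment of the $x_{\tau}^{z}$; conversely, given a feasible solution, for each type $\tau$ I partition its $n_{\tau}$ voters into groups of sizes $x_{\tau}^{z}$ and place each voter of such a group at a position realizing the pattern $z$---possible by the definition of type, and the placements are independent across voters---yielding a completion in which $c^{*}$ collects $M^{*}$ votes and no one collects more. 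The number of variables is at most $1+|\T|\,|Z|\le 1+2^{2^{m}}\cdot 2^{m}$, a function of $m$ only, while every coefficient is at most $n$ (the per-voter scores are $0$/$1$ and $M^{*}\le n$), so the encoding length of the program is $f(m)\cdot O(\log n)$. Solving it with an algorithm for integer programming in a fixed number of variables~\cite{lenstra1983integer,ReisRothvoss2023} then takes time $f(m)\cdot n^{O(1)}$, and the overall running time is $n^{O(1)}f(m)$.

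The only ingredient where the dimension $d$ enters non-polynomially is the per-voter, per-pattern feasibility check, which amounts to deciding solvability of a system of $O(m+d)$ polynomial inequalities of degree at most two in $d$ variables; by Grigor'ev and Vorobjov~\cite{grigor1988solving} this costs time polynomial in $(2m)^{d^{2}}$, which is polynomial in the input exactly when $d$ is a fixed constant. This is why the statement assumes $d$ fixed, and I expect it to be the only point of real difference from the dimension-independent positional case; every other step is a direct adaptation and presents no obstacle.
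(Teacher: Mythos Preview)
Your proposal is correct and follows essentially the same approach as the paper: define $Z=\{0,1\}^m$, compute each voter's type via the per-pattern feasibility check of Lemma~\ref{lem:fpt_approval} (this is where fixing $d$ is used), and then solve the same integer program as in Theorem~\ref{theorem:approval_positional} using an ILP solver in a bounded number of variables. Your observation that the feasibility check is the sole step where the dimension matters, and your slightly tidier restriction of the variables $x_\tau^z$ to $z\in\tau$, are both in line with (or minor improvements on) the paper's argument.
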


\section{Spatial Voting with Weighted Voters}\label{sec:weighted}

In weighted spatial voting, every voter $v_j$ is associated with a weight $w_j$, and the score contributed by voter \( v_j \) to candidate \( c \) is 
\( s(R_j, c) = w_j \cdot s_m(i) \), where \( c \) is ranked in position \( i \) according to \( v_j \)’s preference \( R_j \), and \( (s_m(1), \dots, s_m(m)) \) represents the score vector.

The $\nw$ problem in weighted spatial voting remains traceable for every positional scoring rule and fixed dimension, using the algorithm in~\cite{imber2024spatial} for the unweighted variant. Indeed, we can solve the problem by computing the maximal score difference $s(R_j ,c)-s(R_j ,c^*)$ across all ranking completions $R_j$ of $P_j$ for every candidate $c\neq c^*$, as in the unweighted case.

We investigate the $\pw$ problem in the weighted spatial voting model in one dimension, denoted as $\wpwpar{1}$. We start with two-valued positional scoring rules, which we denote as $k(m)$-approval, and distinguish between rules that are traceable and rules that are NP-complete.

\begin{theorem}\label{theorem:k(m)_npc}
    Let $k(m)$-approval be a two-valued scoring rule. If for every $m\in\N$, it holds that $k(m) \geq \frac{m}{2}$, $\wpwpar{1}$ with $k(m)$-approval is in P. Otherwise, it is NP-complete.
\end{theorem}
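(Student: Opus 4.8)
The statement is a dichotomy, so the plan is to establish three things: membership in~NP, NP-hardness whenever some~$m_{0}$ violates $k(m_{0})\ge m_{0}/2$, and a polynomial-time algorithm when $k(m)\ge m/2$ for every~$m$. Throughout I would use the basic one-dimensional structure of $k$-approval: the $k$ candidates closest to a point form a block of consecutive indices $\{c_{s},\dots,c_{s+k-1}\}$, and as a voter's position ranges over $[\ell_{j},u_{j}]$ the block's start index~$s$ ranges over a contiguous integer interval $[a_{j},b_{j}]$ computable from the candidate positions and the voter's interval. Membership in~NP is then immediate: a completion is certified by choosing, for each voter, the start index $s_{j}\in[a_{j},b_{j}]$ of the block it approves (a polynomial-size certificate), after which all weighted scores and the winner condition for~$c^{*}$ can be checked directly.

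For the tractable direction I would exploit that $2k\ge m$ forces all feasible blocks through a small \emph{core}: the earliest feasible block $\{c_{1},\dots,c_{k}\}$ and the latest feasible block $\{c_{m-k+1},\dots,c_{m}\}$ either overlap, so a single candidate $c_{t}$ lies in every feasible block, or together partition~$C$, in which case the two adjacent candidates $c_{m/2},c_{m/2+1}$ together meet every feasible block. Writing $x_{s}$ for the total voter weight placed on start index~$s$ in a completion, each candidate's score is a sum of consecutive $x_{s}$'s, and the score window of every non-core candidate is contained in that of a core candidate; hence non-core candidates are automatically dominated. Therefore the only inequalities that actually have to be arranged involve~$c^{*}$ and the (at most two) core candidates, and I claim this reduces to a constant-size combinatorial feasibility question over the block-start weights $x_{s}$ (if $c^{*}$ is core, whether the voters can be routed so one prescribed load is at least another; if $c^{*}$ is non-core, whether they can all be routed inside $c^{*}$'s window, which is a per-voter test), each solvable by a direct flow/greedy argument. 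As there are only a constant number of cases, the whole problem is polynomial.

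For the hardness direction, fix $m_{0}$ with $k_{0}:=k(m_{0})<m_{0}/2$, so $2k_{0}+1\le m_{0}$, and reduce \textsc{Partition} (items $a_{1},\dots,a_{n}$, $\sum_{i}a_{i}=2A$) to $\wpwpar{1}$ using instances with exactly $m_{0}$ candidates, where the rule uses $k_{0}$ ones. The slack $2k_{0}<m_{0}$ lets me pick two \emph{absorbing} candidates $c_{p},c_{q}$ more than $k_{0}$ apart (hence not jointly coverable by a single block) and separated from~$c^{*}$. Each item $a_{i}$ becomes a voter of weight~$a_{i}$ whose interval permits exactly the two blocks obtained by shifting its window one step left (covering $c_{p}$) or one step right (covering $c_{q}$); additionally I add heavy \emph{dedicated} voters with narrow intervals---hence forced blocks---whose weights are chosen so that every candidate receives a prescribed base score, tuned so that $\mathrm{score}(c^{*})=V$ for a suitable~$V$, each absorber has residual capacity exactly~$A$, and every remaining candidate already sits at~$V$ or below. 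Then $c^{*}$ is a possible winner iff the flexible weight $2A$ can be split into two parts of size~$A$, i.e.\ iff the \textsc{Partition} instance is a yes-instance; the case $(m_{0},k_{0})=(3,1)$ (plurality on three candidates, $c^{*}=c_{1}$, the two flexible options being $c_{2}$ versus $c_{3}$) is the clean prototype, and NP-membership from the first paragraph completes NP-completeness.

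The main obstacle is making this gadget consistent: because a block always covers $k_{0}$ consecutive candidates, no candidate can be loaded in isolation, so the dedicated-voter weights together with the placement of $c^{*},c_{p},c_{q}$ must be chosen so that the unavoidable spillover onto the candidates lying between the two absorbers (and between~$c^{*}$ and them) is compensated exactly, and so that the flexible voters have no intermediate block position that profitably dumps weight onto an under-loaded candidate. Producing a base-score assignment that is simultaneously realizable by blocks and satisfies all of these constraints is exactly where the bound $2k_{0}<m_{0}$ is used---it supplies enough separated ``slots''---and it is the part that requires the most care.
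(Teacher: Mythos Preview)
Your overall plan is sound, and both the NP-membership argument and the ``core candidate'' reduction for the tractable side are correct and match the paper's strategy. However, there is a concrete error in the polynomial-time direction, and the hardness gadget is left at the level of a wish.

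\textbf{The P direction.} For $k>m/2$ your per-voter test is indeed correct: the core candidates receive the full weight in every completion, so a non-core $c^{*}$ wins iff every voter can approve it. But for $k=m/2$ and a non-core $c^{*}$ the test ``whether they can all be routed inside $c^{*}$'s window'' is wrong. Take $m=4$, $k=2$, $c^{*}=c_{1}$ (whose window is $\{1\}$), one voter of weight~$2$ whose interval forces block start~$1$, and one voter of weight~$1$ whose interval forces block start~$3$. The second voter cannot be routed into $c^{*}$'s window, yet the unique completion gives scores $(2,2,1,1)$ and $c^{*}$ wins. Writing $c^{*}=c_{j^{*}}$ with $j^{*}<k$, what actually has to hold is that no weight lands in $\{j^{*}{+}1,\dots,k\}$ (else $c_{k}$ strictly beats $c^{*}$) \emph{and} that among the remaining placements one has $x_{1}\ge x_{k+1}$ (else $c_{k+1}$ beats $c^{*}$). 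This is still polynomial, but it is not the test you state. The paper sidesteps the case analysis entirely: for $k=m/2$ it exhibits a single canonical completion---voters who \emph{can} approve $c^{*}$ are placed at their left endpoint, the others at their right endpoint---and argues by an exchange step that this completion is optimal for $c^{*}$, so one simply evaluates it.

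\textbf{The NP-hardness direction.} Your absorber idea is right in spirit, but you have not produced a base-score assignment realizable by $k_{0}$-blocks that balances the spillover, and you correctly identify this as the obstacle. The paper's construction is simpler than you anticipate because it places $c^{*}$ \emph{inside} the flexible block rather than outside it: for $k_{0}\ge 2$ it lines up $2k_{0}+1$ candidates with $c^{*}$ at the centre, so every flexible voter (with weight $a_{i}$) approves $c^{*}$ regardless of which extreme block position it takes; the partition constraint then comes from the scores of the two neighbours $c_{k_{0}}$ and $c_{k_{0}+1}$, and exactly two fixed heavy voters of weight~$A$ at the far ends suffice as ballast. Plurality ($k_{0}=1$) is handled separately with a three-candidate gadget, and in both cases surplus candidates are parked far away to pad up to~$m_{0}$.
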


\begin{proof}
Let $k(m)$ be a function such that $k(m) \geq \frac{m}{2}$ for all $m\in \N$.
Given an instance with $m$ candidates, let $k=k(m)$. We prove separately for $k = \frac{m}{2}$ and $k > \frac{m}{2}$. When $k > \frac{m}{2}$, candidates $c_{m-k+1}, \dots, c_k$ are always in the top $k$, receiving maximal scores. Any $c^*$ in this set is a possible winner. A candidate not in this set can only be a possible winner if there exists a profile completion placing $c^*$ in the top $k$ of every voter. This can be verified in polynomial time by segmenting the space by midpoints involving $c^*$, determining the top $k$ candidates for each segment, and verifying $c^*$'s position for all voters.

For $k = \frac{m}{2}$, w.l.o.g $c^*$ is in the first half of the candidates. We prove $c^*$ is a possible winner if and only if it is a possible winner under a specific profile completion $\mathbf{T}$, in which every voter $v_j$ that can vote for $c^*$ is positioned at $T_j=\ell_j$, and the rest are positioned at $T_j=u_j$. For the forward direction, starting from a profile completion $\mathbf{T}'$ where $c^*$ is a winner, we adjust voters one by one.
\begin{enumerate}
    \item 
    If $v_j$ can vote for $c^*$, then by moving its position to $\ell_j$ the scores for candidates $c > c^*$ increase by $w_j$ only if $c^*$'s score also increases. 
    Candidates $c < c^*$ never outscore $c^*$ since voters for $c$ also vote for $c^*$.
    \item 
    If $v_j$ cannot vote for $c^*$, then it must vote for $c_{\frac{m}{2}+1}$. By moving the position to $u_j$ only candidates $c > c_{\frac{m}{2}+1}$ may increase their scores, but such candidates cannot outscore $c_{\frac{m}{2}+1}$, which remains with the same score as before, therefore does not surpass $c^*$. 
\end{enumerate}
In both cases, $c^*$ remains a possible winner. After all adjustments, $c^*$ is a possible winner under $\mathbf{T}$. 

We now discuss the case where there exists $m\in \N$ such that $k(m)<\frac{m}{2}$. It is clear that this problem is in NP by guessing a voting profile, calculating the score of each candidate and accepting the instance if no other candidate $c\neq c^*$ receives a higher score than $c^*$.

Let $m\in N$ be a value for which $k(m)<\frac{m}{2}$, and let $k=k(m)$ for that $m$.
We prove NP-hardness separately for the case where $k=1$, which creates the plurality voting rule, and the case where $2\leq k$. In both cases we give a reduction from \textsc{partition}. We begin with the case were $k=1$. Let $\{a_1, \dots, a_n\}$ be a set of $n$ distinct positive integers that sum to $2A$, we form the following instance of $PW\langle 1 \rangle$. Let $C=\{c_1, c_{2}, c^*\}$ be the set of candidates and their position on the axis are $c_1= 1, c_2 = 2$, $c^*= 4$. We define $n+1$ voters. For every voter $v_i$ when $i\in [n]$, we define its partial profile to be $P_j=[1,2]$ and its weight $w_i=a_i$. We add an additional voter: $v_{n+1}$ with $P_{n+1}=[4,5]$ and $w_{n+1}=A$. Note that the set of candidates can be enlarged to any size by adding candidates that are positioned on the axis far enough such that they would not be in the top preference for any voter.
We prove that $c^*$ is a possible winner if and only if $\{a_1, \dots, a_n\}$ can be partitioned into two subsets that sum to $A$.
Note that for every position $T_{n+1} \in [4,5]$ of $v_{n+1}$ results in $R_{n+1}=(c^*, c_2, c_1)$, therefore $s(R_{n+1},c^*) = A$.

Assuming $\{a_1, \dots, a_n\}$ can be partitioned into two subsets that sum to $A$, denoted $S_1$ and $S_2$. We construct a spatial completion $\mathbf{T}$ in the following way. For every $i:\; a_i\in S_1$ we set $T_i=1$, and for every $i:\; a_i\in S_2$ we set $T_i=2$. Then:
\begin{itemize}
    \item $s(\mathbf{R_T},c_1) = \sum_{a_i\in S_1} s(R_{T_i},c_1) = \sum_{a_i\in S_1} w_i = A $
    \item $s(\mathbf{R_T},c_2) = \sum_{a_i\in S_2} s(R_{T_i},c_2) = \sum_{a_i\in S_2} w_i = A $
    \item $s(\mathbf{R_T},c^*) = s(R_{n+1},c^*) = A $
\end{itemize}
Making $c^*$ a possible winner. 

We continue with the other direction, in this case we assume that $c^*$ is a possible winner, meaning there is a spatial completion $\mathbf{T}$ such that $s(\mathbf{R_T},c^*)\geq s(\mathbf{R_T},c_i)$ for all $i\in\{1,2\}$. As explained before, $T_{n+1}$ will always result in the same ranking profile. 

Because $c^*$ can receive no more than $A$ votes, every other candidate must receive at most $A$ votes from the rest of the voters. We look at the rest of the voters, which can be positioned in $[1, 2]$. W.l.o.g the tie breaking in case of equal distance is in favor of $c_1$. A voter $v_i$ with position $T_i\in[1, 1.5]$, would cast a score of $w_i=a_i$ to candidate $c_1$. The rest of the voters, with positions $T_i\in (1.5,2]$ cast their votes to $c_2$. Let $S_1$ be the set of all voters with position $T_i\in [1,1.5]$, and $S_2$ the rest of them. Then:

$$A\geq s(\mathbf{R_T},c_1) = \sum_{v_i\in S_1} s(R_{T_i},c_1) = \sum_{v_i\in S_1} w_i = \sum_{v_i\in S_1} a_i$$

$$A\geq s(\mathbf{R_T},c_2) = \sum_{v_i\in S_2} s(R_{T_i},c_2) = \sum_{v_i\in S_2} w_i = \sum_{v_i\in S_2} a_i$$

Because all voters besides $v_{n+1}$ are at either $S_1$ or $S_2$, $\sum_{v_i\in S_1} a_i+\sum_{v_i\in S_2} a_i=2A$, meaning the sum of each group is exactly $A$, and $S_1,S_2$ are the wanted partition. This concludes the proof for $k=1$.

We continue with the case of $2\leq k $. Let $\{a_1, \dots, a_n\}$ be a set of $n$ distinct positive integers that sum to $2A$, we form the following instance of $PW\langle 1 \rangle$. Let $C=\{c_1, \dots, c_{2k}, c^*\}$ be the set of candidates and their position on the axis are for all $i\leq k$, $c_i= i-1$, $c^*= k$, and for all $i\geq k+1, c_i=i$. We define $n+2$ voters. For every voter $v_i$ when $i\in [n]$, we define $P_i=[\frac{k+1}{2},\frac{3k}{2}]$ and its weight $w_i=a_i$. We add 2 more voters: $v_{n+1}$ with $P_{n+1}=[\ell_{n+1},u_{n+1}]=[-1,0]$ and $w_{n+1}=A$, and $v_{n+2}$ with $P+{n+2}=[\ell_{n+2},u_{n+2}]=[2k,2k+1]$ and $w_{n+1}=A$.
We prove that $c^*$ is a possible winner if and only if $\{a_1, \dots, a_n\}$ can be partitioned into two subsets that sum to $A$.

Note that for every position $T_{n+1}\in[-1,0]$ of $v_{n+1}$ results in $R_{n+1}=(c_1, \dots, c_k, c^*, c_{k+1}, \dots, c_{2k})$, and every position $T_{n+2}\in[2k, 2k+1]$ of $v_{n+2}$ results in $R_{n+2} =(c_{2k}, \dots, c_{k+1}, c^*, c_{k}, \dots, c_{1})$, concluding that in every spatial completion both voters contributes a score of $A$ to every candidate except $c^*$.

Assuming $\{a_1, \dots, a_n\}$ can be partitioned into two subsets that sum to $A$, denoted $S_1$ and $S_2$. We construct a spatial completion $\mathbf{T}$ in the following way. For every $i:\; a_i\in S_1$ we set $T_i=\frac{k+1.5}{2}$, and for every $i:\; a_i\in S_2$ we set $T_i=\frac{3k-1}{2}$. These positions creates a ranking profile in which $c_2, \dots c_k, c^*$ are the top $k$ for every voter $v_i$ such that $a_i\in S_1$, and a ranking profile in which $c^*, c_{k+1}, \dots , c_{2k-1}$ for every voter $v_i$ such that $a_i\in S_2$. Combined with the votes of $v_{n+1}$ and $v_{n+2}$, the final scores of candidates $c_2, \dots, c_r, c_{k+1}, \dots, c_{2k-1}$ is $2A$ while for $c_1$ and $c_{2k}$ the final score is $A$, making $c^*$ a possible winner.

We continue with the other direction, in this case we assume that $c^*$ is a possible winner, meaning there is a spatial completion $\mathbf{T}$ such that $s(\mathbf{R_T},c^*)\geq s(\mathbf{R_T},c_i)$ for all $i$. As explained before, $T_{n+1}$ and $T_{n+2}$ will always result in the same ranking profile. 

Because $c^*$ can receive no more than $2A$ votes, every other candidate must receive at most $A$ additional votes from the rest of the voters.
We look at the rest of the voters, which can be positioned in $[0, 2k]$:
\begin{enumerate}
    \item For voters $v_i$ such that $T_i\in [\frac{k+1}{2},\frac{k+2}{2}]$, the ranking profile $R_{T_i}$ would have candidates $c_2, \dots, c_k, c^*$ in the top $k$ ranks. Because $m_{c_1,c^*} = \frac{k}{2})$ and $m_{c_2,c_{k+1}}=\frac{1+k+1}{2}$.    
    \item For voters $v_i$ such that $T_i\in (\frac{3k-2}{2},\frac{3k}{2}]$, the ranking profile $R_{T_i}$ would have candidates $c^*, c_{k+1}, \dots, c_{2k-1}$ in the top $k$ ranks. Because $m_{c_k,c_{2k-2}}=\frac{3k-2}{2}$ and $m_{c^*,c_{2k}}=\frac{k+2k}{2}$).
    \item For voters in the remaining section, for which $T_i\in (\frac{k+2}{2}, \frac{3k-2}{2}]$, both $c_k, c^*$ and $c_{k+1}$ are in the top $k$ places in $R_{T_i}$.
\end{enumerate}

Let $z_1,z_2,z_3$ be the sum of weights of voters in each group. Note that $z_1+z_2+z_3=2A$ as each voter must be in one of these groups. Then, by the $k$-approval rule, the score sum of each candidate is:
\begin{itemize}
    \item $s(\mathbf{R_T},c_k) = A+z_1+z_3$
    \item $s(\mathbf{R_T},c^*) = z_1+z_2+z_3$
    \item $s(\mathbf{R_T},c_{k+1}) = A+z_2+z_3$
\end{itemize}
Candidate $c^*$ is a possible winner only if
$s(\mathbf{R_T},c^*)\geq s(\mathbf{R_T},c_k)$:
$$z_1+z_2+z_3\geq A+z_1+z_3 \quad \Rightarrow \quad z_2\geq A$$
And only if $s(\mathbf{R_T},c^*)\geq s(\mathbf{R_T},c_{k+1})$:
$$z_1+z_2+z_3\geq A+z_2+z_3 \quad \Rightarrow \quad z_1 \geq A$$

Because $z_1+z_2+z_3=2A$, it must be that $z_1=z_2=A$ and $z_3=0$, meaning the total weight of the voters in each group 1 and 2 is summed up to $A$. We define the corresponding elements in each group to be a set, and because the weights of the voters is the same as the elements, this induces a partition.
\end{proof}

Next, we give a hardness result for the Borda voting rule. The proof idea is similar to the proof of Theorem 4.3 in~\cite{faliszewski2009shield}, which proves that for single-peaked preferences, the constructive coalition weighted manipulation problem is NP-complete.

\begin{theorem}\label{theorem:weighted_borda}
    $\wpwpar{1}$ with the Borda voting rule is NP-complete already when the number of candidates is $m=4$.
\end{theorem}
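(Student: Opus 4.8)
The plan is to prove NP-membership in one line and then give a reduction from \textsc{Partition}, in the spirit of the weighted single-peaked manipulation hardness of Faliszewski et al.~\cite{faliszewski2009shield}. Membership is immediate: a spatial completion $\mathbf{T}$ serves as a certificate, from which all four Borda scores can be computed and compared in polynomial time. For hardness, take a \textsc{Partition} instance $a_1,\dots,a_n\in\N$ with $\sum_i a_i=2A$ and ask whether some subset sums to $A$. I would place the four candidates at fixed positions on the line, say $c_1=0,\ c_2=1,\ c_3=2,\ c_4=4$ (chosen so that all $\binom{4}{2}$ pairwise midpoints are distinct, giving $7$ well-defined ``segments''), designate $c^{*}:=c_2$, and introduce two groups of voters.

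The first group are the \emph{partition voters} $v_1,\dots,v_n$: each has weight $w_i=a_i$ and partial profile $P_i=[0.6,1.4]$. This interval meets only the two segments whose induced rankings are $(c_2,c_1,c_3,c_4)$ and $(c_2,c_3,c_1,c_4)$ --- the split happening at the midpoint of $c_1$ and $c_3$ --- so $v_i$ has a genuinely binary choice: it always awards $3w_i$ to $c^{*}=c_2$ and $0$ to $c_4$, and it awards either $(2w_i,w_i)$ or $(w_i,2w_i)$ to $(c_1,c_3)$. Writing $L$ for the total weight of partition voters placed on the ``$c_1$'' side, this group contributes $(2A{+}L,\ 6A,\ 4A{-}L,\ 0)$ to $(c_1,c_2,c_3,c_4)$. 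The second group consists of a constant number of \emph{anchor voters} whose rankings are forced (their intervals lie strictly inside a single segment), used only to shift the baselines so that $c^{*}$ wins exactly when $L=A$: concretely, one voter of weight $9A$ forced to rank $(c_1,c_2,c_3,c_4)$ and one of weight $6A$ forced to rank $(c_3,c_4,c_2,c_1)$. A short calculation then yields total scores $c^{*}\mapsto 30A$, $c_1\mapsto 29A{+}L$, $c_3\mapsto 31A{-}L$, $c_4\mapsto 12A$, so $c^{*}$ is a possible winner iff $L\le A$ and $L\ge A$, i.e.\ iff some subset of the $a_i$ sums to $A$. All positions, weights, and interval endpoints are polynomially bounded (one may rescale to integers), and the only tie in play is at the midpoint of $c_1$ and $c_3$, which a partition voter can always sidestep by choosing an interior point of its segment; hence the reduction is complete and $\wpwpar{1}$ with Borda is NP-complete already for $m=4$.

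The crux of the argument --- and the reason a fourth candidate is genuinely required --- is the design of the anchor voters. Because $1$-dimensional Euclidean preferences are single-peaked with respect to $c_1<c_2<c_3<c_4$, any voter who ranks the corner candidate $c_1$ near the top must rank $c^{*}=c_2$ almost as high, so one cannot inflate $c_1$'s baseline without simultaneously inflating $c^{*}$'s; with only three candidates this kills the reduction. The fix is to route the needed score onto the other rival $c_3$ via the single-peaked order $(c_3,c_4,c_2,c_1)$, which keeps $c^{*}$ low by dumping the spillover onto $c_4$, and then to pick the two anchor weights by solving the small linear system that makes both rival baselines exactly $3A$ above $c^{*}$'s baseline while leaving $c_4$ harmless. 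Getting these constants to satisfy the ``$\le A$'' and ``$\ge A$'' constraints simultaneously is the one delicate point; the remainder is bookkeeping.
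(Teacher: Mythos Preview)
Your proof is correct and follows essentially the same \textsc{Partition} reduction as the paper: partition voters with a narrow interval that always top-rank $c^{*}$ and shift one unit of Borda score between two flanking rivals, plus two fixed-ranking anchor voters whose weights are chosen so that both rivals tie $c^{*}$ exactly when the split is balanced. Your construction is marginally cleaner in that the partition voters' interval crosses only the single midpoint $m_{c_1,c_3}$ and hence offers a genuinely binary choice, whereas the paper places $c^{*}$ third on the line and lets the partition interval span three segments (with weights $z_1,z_2,z_3$), then argues that $z_2=0$ in any winning completion; otherwise the two arguments are identical in structure.
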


\begin{proof}
The problem is in NP by guessing a voting profile, calculating the score of every candidate, and accepting the instance if no other candidate $c\neq c^*$ receives a higher score than $c^*$.

We present a reduction from \textsc{partition}. Let $\{a_1, \dots, a_n\}$ be a set of $n$ distinct positive integers that sum to $2A$, we form the following instance of $\pwpar{1}$. Let $C=\{c_1, c_2, c_3, c^*\}$ be the set of candidates and their position on the axis are $c_1= 0, c_2= 1, c^*= 2$ and $c_3=5$. For simplicity, assume distance ties are broken in favor of the candidate to the right. We define $2+n$ voters. For every voter $v_i$ when $i\in [n]$, we set $P_i = [2,3.5]$ and its weight $w_i=a_i$. We add 2 more voters: $v_{n+1}$ with $P_{n+1}=[5,6]$ and $w_{n+1}=11A$, and $v_{n+2}$ with $P_{n+2}=[1.6,2]$ and $w_{n+1}=7A$.
We prove that $c^*$ is a possible winner if and only if $\{a_1, \dots, a_n\}$ can be partitioned into two subsets, each sums to $A$.

Note that for every position $T_{n+1}\in [5,6]$ of $v_{n+1}$  results in $R_{n+1}=(c_3, c^*, c_2, c_1)$, and every position $T_{n+1}\in [1.6, 2]$ of $v_{n+2}$ results in $R_{n+2} =(c_2, c_1, c^*, c_3)$, concluding that in every spatial completion both voters contributes a score of $14A$ to $c_1$, $32A$ to $c_2$, $29A$ to $c^*$, and $33A$ to $c_3$.

Assuming $\{a_1, \dots, a_n\}$ can be partitioned into two subsets that sum to $A$, denoted $S_1$ and $S_2$. We construct a spatial completion $\mathbf{T}=(T_1, \dots, T_n)$ in the following way. For every $i:\; a_i\in S_1$ we set $T_i=2$, and for every $i:\; a_i\in S_2$ we set $T_i=3.5$. These positions create $R_i = (c^*, c_2, c_1, c_3)$ as the ranking profile for every voter $v_i$ such that $a_i\in S_1$ and $R_i = (c^*, c_3, c_2, c_1)$ as the ranking profile for every voter $v_i$ such that $a_i\in S_2$. Combined with the votes of $v_{n+1}$ and $v_{n+2}$, the final scores of the candidates are $S(\mathbf{R},c_1)=15A$, $S(\mathbf{R},c_2)=35A$, $S(\mathbf{R},c^*)=35A$, and $S(\mathbf{R},c_3)=35A$, making $c^*$ a possible winner.

We continue with the other direction, in this case we assume that $c^*$ is a possible winner, meaning there is a spatial completion $\mathbf{T}$ such that $s(\mathbf{R_T},c^*)\geq s(\mathbf{R_T},c_i)$ for all $i$. As explained before, $T_{n+1}$ and $T_{n+2}$ will always result in the same ranking profile. We look at the rest of the voters, which can be positioned in $[2, 3.5]$. For voters $v_i$ such that $T_i\in [2,2.5]$, the ranking profile would be $\mathbf{R_T}=(c^*,c_2,c_1, c_3)$, for voters $v_i$ such that $T_i\in (2.5,3]$, the ranking profile would be $\mathbf{R_T}=(c^*,c_2,c_3, c_1)$, and for voters $v_i$ such that $T_i\in (3,3.5]$, the ranking profile would be $\mathbf{R_T}=(c^*,c_3,c_2, c_1)$. Let $z_1,z_2$ and $z_3$ be the sum of weights of voters in each of the following groups. Note that $z_1+z_2+z_3=2A$ as each voter must be in one of these groups. Then, by the Borda rule, the score sum of each candidate is:

\begin{itemize}
    \item $s(\mathbf{R_T},c_1) = 14A+z_1$
    \item $s(\mathbf{R_T},c_2) = 32A+2z_1+2z_2+z_3$
    \item $s(\mathbf{R_T},c^*) = 29A+3z_1+3z_2+3z_3=35A$
    \item $s(\mathbf{R_T},c_3) = 33A+z_2+2z_3$
\end{itemize}

For all $i$: $s(\mathbf{R_T},c^*)\geq s(\mathbf{R_T},c_i)$ , therefore:
\begin{itemize}
    \item $35A\geq 14A+z_1$
    \item $35A\geq 32A+z_1+z_2+z_3 = 34A+2z_1+2z_2$
    \item $35A\geq 33A+z_2+2z_3$
\end{itemize}

The first equation holds since $z_1\leq 2A$.
By the second equation, 
$$A\geq z_1+z_2$$
We add $z_3$ to both sides and get that $z_3\geq A$.
By the third equation:
$$2A\geq z_2+2z_3\geq z_2+2A$$
Therefore $z_2=0$.
By the previous equation:
$$2A\geq z_2+2z_3 \quad \Rightarrow \quad A\geq z_3 \quad \Rightarrow \quad z_3=A$$   
Finally,
$$2A = z_1+z_2+z_3 \quad \Rightarrow \quad z_1=A$$

This implies that the weights of voter positioned at $T_i\in [2,2.5]$ is equal to the weight of voters positioned at $T_i\in (3,3.5]$, which is half of the total weight of voters $v_1, \dots, v_n$. We define the corresponding elements in each range to be a set, and because the weights of the voters is the same as the elements, this creates a correct partition.
\end{proof}

\section{Conclusion}
\label{sec:discussion}

 In this paper we investigated the computational complexity of $\pw$, which naturally arises in spatial voting with incomplete voters' information. There are several interesting directions for future work. While we show that $\pwpar{1}$ is in P for any $k$-truncated scoring rule and any constant $k$,
 the computational complexity of the problem remains open under some natural scoring rules such as Borda. We note that a hardness result for $\pwpar{1}$ under Borda would resolve also the computational complexity of manipulation under Borda in the single-peaked model, which has been open for over a decade \cite{faliszewski2009shield}. It would also be interesting to find a natural parameter for which $\wpwpar{1}$ is FPT. 
 Finally, $\wpwpar{d}$ remains open already under certain two-valued positional scoring rules when $d \geq 2$.

\bibliographystyle{abbrv}
\bibliography{resource}

\end{document}